\definecolor{links}{RGB}{11, 85, 255}
\definecolor{cites}{RGB}{0, 200, 0}
\definecolor{urls}{RGB}{255, 116, 0}
\crefname{ineq}{Inequality}{inequalities} 
\crefname{form}{Formula}{formulae} 
\newtheorem{example}{Example}
\newtheorem{theorem}{Theorem}
\newtheorem{lemma}{Lemma}
\newtheorem{proposition}{Proposition}
\DeclareMathOperator*{\ex}{\mathbb{E}}
\newcommand{\E}[2]{\ex_{#1}\left[#2\right]} 
\newcommand{\CondE}[3]{\ex_{#1}\left[#2 \;\middle\vert\; #3\right]} 
\newcommand{\fwhs}[1]{\; | \; #1 }
\newcommand{\prob}[1]{\ensuremath{\mathbb{P}\left(#1\right)}}
\DeclareMathOperator*{\argmax}{argmax}
\DeclareMathOperator*{\expectation}{\mathbb E}
\newcommand{\expect}[1]{\ensuremath{\expectation\left[#1\right]}}
\newcommand{\indicate}[1]{\mathbb{I}_{\left\{#1\right\}}}
\newcommand{\seller}{v_s}
\newcommand{\bP}{\mathbb{P}}
\newcommand{\trade}{\text{trade}}
\begin{document}

\title{Efficient Two-Sided Markets with Limited Information\thanks{Part of the work was done while Paul D\"utting was an Associate Professor in the Department of Mathematics at London School of Economics. The work of Federico Fusco, Philip Lazos, Stefano Leonardi, and Rebecca Reiffenh\"auser was supported in part by ERC Advanced Grant 788893 AMDROMA ``Algorithmic and Mechanism Design Research in Online Markets'' and MIUR PRIN project ALGADIMAR ``Algorithms, Games, and Digital Markets.''
}}

\newcommand*\samethanks[1][\value{footnote}]{\footnotemark[#1]}
\author{
Paul D\"utting\thanks{Google Research, Brandschenkestrasse 110, 8002 Z\"urich, Switzerland.  Email: \texttt{\href{mailto:duetting@google.com}{duetting@google.com}}}
\and
Federico Fusco\thanks{Department of Computer, Control, and Management Engineering ``Antonio Ruberti,'' Sapienza University of Rome, Via Ariosto 25, 00185 Rome, Italy.
Email: \texttt{\href{mailto:federico.fusco@uniroma1.it}{federico.fusco@uniroma1.it},
\{\href{mailto:lazos@diag.uniroma1.it}{lazos},
\href{mailto:leonardi@diag.uniroma1.it}{leonardi},
\href{mailto:rebeccar@diag.uniroma1.it}{rebeccar}\}@diag.uni\allowbreak roma1.it}
}
\and
Philip Lazos\samethanks
\and
Stefano Leonardi\samethanks
\and
Rebecca Reiffenh\"auser\samethanks
}

\date{}

\maketitle

\begin{abstract}
    A celebrated impossibility result by Myerson and Satterthwaite \cite{myersonS83}  
    shows that any truthful mechanism for two-sided markets that maximizes social welfare must run a deficit, resulting in a necessity to relax welfare efficiency and the use of approximation mechanisms. Such mechanisms in general make extensive use of the Bayesian priors.  In this work, we investigate a question of increasing theoretical and practical importance: how much prior information is required to design mechanisms with near-optimal approximations?
    
    Our first contribution is a more general impossibility result stating that no meaningful approximation is possible without any prior information, expanding the famous impossibility result of Myerson and Satterthwaite.
    
    Our second contribution is that one {\em single sample} (one number per item), arguably a minimum-possible amount of prior information,  from each seller distribution is sufficient for a large class of two-sided markets. We prove matching upper and lower bounds on the best approximation that can be obtained with one single sample for subadditive buyers and additive sellers, regardless of computational considerations. 
    
    Our third contribution is the design of computationally efficient blackbox reductions that turn any one-sided mechanism into a two-sided mechanism with a small loss in the approximation, while using only one single sample from each seller.
    On the way, our blackbox-type mechanisms deliver several interesting positive results in their own right, often beating even the state of the art that uses full prior information.
\end{abstract}

\section{Introduction}

Two-sided markets play an increasingly important role in practice, with applications ranging from internet trading platforms (such as eBay), to ridesharing platforms (such as Uber or Lyft), to ad exchanges (such as AppNexus). A canonical two-sided market consists of sellers that provide certain services or goods to the market, and buyers that consume these services or goods. It is natural to assume that both sides of the market have private information about their production costs and willingness to pay.

The design of such two-sided markets with private 
information has a long tradition in economics \cite{vickrey61, myersonS83}.
One typically strives to find mechanisms that guarantee individual rationality (IR), dominant strategy incentive compatibility (DSIC), budget balance (BB) (i.e., that the mechanism collects more money from the buyers than it gives to the sellers), all while also maximizing social welfare.
Designing such mechanisms, however, turns out to be rather challenging. Consider, for example, the celebrated Vickrey-Clarke-Groves (VCG) mechanism \cite{vickrey61, clarke71, groves73}. 
This mechanism chooses an allocation that maximizes social welfare, and charges each agent the amount by which the agent's presence reduces the social welfare of the other agents. Already Vickrey \cite{vickrey61} observed that when applied to two-sided markets, this mechanism is IR, DSIC, and maximizes social welfare, but may produce a deficit.

\begin{example}[Two-sided VCG for a single good] \label{ex:vcg} 
Consider a seller owning a single indivisible good, which she values at $v_s$, and two interested buyers willing to pay $v_{b_1}$ and $v_{b_2}$, respectively. Assume $v_s<v_{b_1}>v_{b_2}$, so that the socially efficient outcome is to give the item to $b_1$, achieving a welfare of $v_{b_1}$. The VCG mechanism will assign the item to $b_1$, whose participation keeps the item from being assigned to $s$ or $b_2$, so she will pay $\max\{v_s, v_{b_2}\}$. For the seller $s$, in her absence no one can get the item and welfare is reduced by $v_{b_1}$, so the mechanism will pay her this amount. Recall that $\max\{v_s,v_{b_2}\}-v_{b_1}<0$, so the mechanism runs a deficit.
\end{example}

This example is actually fairly robust, extending to settings with Bayesian priors on multiple buyers and multiple sellers. A famous impossibility result by \citet{myersonS83} in fact shows that this is not just a problem of VCG, it is a more fundamental incompatibility: any mechanism that is IR and DSIC and maximizes social welfare must violate budget balance even if the mechanism knows the Bayesian priors of the buyers and sellers. 
In this paper, we explore the fundamental information-theoretic as well as computational requirements for achieving near-optimal IR, DSIC, and BB mechanisms.

\subsection{
Information-Theoretic Barrier: Necessity of Some Prior Dependence}
Since information about the underlying distributions of the agents' valuations is often scarce, it would be ideal to find an approximation mechanism that guarantees close to optimal welfare without the use of such prior information. However, this is impossible: extending the famous inefficiency theorem of Myerson and Satterthwaite \cite{myersonS83}, we prove that without prior information, the welfare cannot be approximated up to any constant:

\begin{theorem}[Impossibility; \cref{sec:impossibility}]
\label{thm:impossibility}
No mechanism $M$ for bilateral trade (or richer variants of it) without prior information, where agents' valuations for a single item take arbitrary values in $[0,\,\infty)$, can be IC (in expectation), IR, BB and also guarantee an $\alpha$-approximation to the optimal social welfare, for any fixed value $\alpha>1$. 
\end{theorem}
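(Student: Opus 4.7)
The plan is to reduce to the simplest bilateral trade setting (one seller, one buyer, one indivisible item) and construct a family of hard instances on different value scales that forces any candidate mechanism to be bad on at least one of them.

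First, I would pin down the structure of any prior-free IC, IR, BB mechanism for bilateral trade. Concretely: every such mechanism is, up to payment-neutral modifications, a randomization over deterministic posted-price mechanisms, each specified by a pair $(P_s, P_b)$ with $0 \le P_s \le P_b$, where trade takes place exactly when $v_s \le P_s$ and $v_b \ge P_b$; the mechanism keeps the margin $P_b - P_s \ge 0$ to satisfy BB, and the Myerson-style payment identities together with IR leave no other freedom. This is the classical Hagerty--Rogerson characterization for the DSIC case, and the extension to ``IC in expectation'' follows by viewing a randomized mechanism as a distribution over its deterministic realizations and using BB on each realization.

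Second, fix any such mechanism with random price pair $(P_s, P_b)$. For parameters $M > 1$ and $K \in \mathbb{N}$ to be tuned, consider the $K$ deterministic instances $I_k = (v_s, v_b) = (M^k, M^{k+1})$ for $k = 0, 1, \ldots, K-1$. Trade occurs on $I_k$ exactly when $P_s \ge M^k$ and $P_b \le M^{k+1}$, which combined with $P_s \le P_b$ forces \emph{both} prices to lie in $[M^k, M^{k+1}]$. Because these intervals are essentially disjoint across $k$, the events ``trade occurs on $I_k$'' are almost surely mutually exclusive, so the trade probabilities $x_k := \prob{\text{trade on } I_k}$ satisfy $\sum_{k=0}^{K-1} x_k \le 1$. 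By pigeonhole, some index $k^\star$ has $x_{k^\star} \le 1/K$. On $I_{k^\star}$ the optimal welfare is $M^{k^\star+1}$ while the mechanism attains expected welfare at most $M^{k^\star+1}x_{k^\star} + M^{k^\star}(1-x_{k^\star}) \le M^{k^\star}\bigl(1 + (M-1)/K\bigr)$, so the approximation ratio is at least $M/\bigl(1 + (M-1)/K\bigr)$, which tends to $K$ as $M \to \infty$. Since $K$ is arbitrary, no fixed $\alpha > 1$ works; the extension to richer variants is immediate by embedding a bilateral trade instance (all other agents have trivial valuations/items).

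The main obstacle I anticipate is the characterization step: making precise that ``IC in expectation'' together with IR and BB, in the prior-free setting, really collapses the design space to a distribution over posted-price pairs. If one prefers to avoid invoking Hagerty--Rogerson, an alternative is to reason directly from Myerson's envelope identities for the seller and buyer sides and use BB pointwise to deduce, for the particular inputs $I_k$, that the trade indicators attached to different $k$ cannot coexist with positive probability---thereby recovering $\sum_k x_k \le 1$ without a full characterization. Either way, the geometric fact that drives the impossibility is simply the disjointness of the squares $[M^k, M^{k+1}]^2$, which makes cross-scale coverage by any single BB-compatible mechanism unattainable.
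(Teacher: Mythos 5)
Your approach is genuinely different from the paper's and structurally elegant, but the crucial first step---the characterization of all prior-free IC/IR/BB bilateral-trade mechanisms as randomizations over posted-price pairs---does not hold under the hypotheses the theorem actually uses, and that gap is not closed by the fallback you sketch. The Hagerty--Rogerson theorem is proven for \emph{deterministic}, \emph{DSIC}, \emph{ex-post IR}, \emph{ex-post BB} mechanisms. The theorem you are proving is stated for ``IC (in expectation)'' and for BB as the paper defines it, i.e., in expectation over the mechanism's internal randomness. Your reduction---``view a randomized mechanism as a distribution over its deterministic realizations and use BB on each realization''---fails on both counts: a mechanism that is truthful only in expectation need not decompose into DSIC realizations, and a mechanism that is BB only in expectation need not be BB on any realization. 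The proposed fallback (``reason from Myerson's envelope identities and use BB pointwise'') runs into the same obstacle, since BB may not be pointwise. Consequently, the key inequality $\sum_k x_k \le 1$, which is clean and obviously true for a mixture of posted-price pairs whose feasible rectangles $[M^k,M^{k+1}]^2$ are essentially disjoint, is not established for a general TIE mechanism with expected budget balance---and it is exactly this inequality that drives your pigeonhole bound.

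Beyond that step, the rest of the argument is correct and nicely self-contained: fixing $K$ and driving $M\to\infty$ makes the approximation ratio on the worst $I_{k^\star}$ at least $MK/(K+M-1)\to K$, so no fixed $\alpha$ can survive all scales; the embedding into richer markets by zeroing out the extra agents is also fine. So your route would cleanly prove the impossibility for the sub-class of (possibly randomized) posted-price mechanisms, or, assuming a suitably general version of Hagerty--Rogerson, for DSIC/ex-post-IR/ex-post-BB mechanisms. The paper instead avoids any such characterization. In the deterministic case it pins down, by truthfulness, a single critical buyer price $p_b^\star\in[v_s,\alpha v_s]$ (constant over buyer reports above $\alpha v_s$) and a single critical seller price $p_s^\star\in[v_b/\alpha,v_b]$ (constant over seller reports below $v_b/\alpha$), and then derives a contradiction from $p_s^\star\le p_b^\star$ for $v_b=\alpha^3 v_s$. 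In the general randomized case it works directly with interim trade probabilities and expected conditional payments: the approximation constraint forces high trade probability, seller-side truthfulness yields a lower bound on the expected price paid to the seller of order $v_b/((\alpha+1)\alpha^2)$, buyer-side truthfulness plus monotonicity of the trade probability bounds the growth of the buyer's expected price, and combining these with IR and budget balance produces a contradiction. That argument needs no structural characterization and survives the weaker ``IC in expectation'' and expected-BB hypotheses the theorem is stated for. To salvage your approach at the claimed generality, you would need to prove $\sum_k x_k\le 1$ (or a weak analogue sufficient for pigeonholing) directly from the envelope conditions plus expected BB---that is precisely the nontrivial work the paper's appendix proof does in a different form.
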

As mentioned above, this crucial necessity of prior information is bad news: even when there is some information available on the priors, it usually presents itself as a limited amount of incomplete
data and assuming to indeed know the underlying distributions exactly is often unrealistic.

\subsection{Information-Theoretically Optimal Mechanisms}\label{sec:adjusted-vcg}

In light of our new impossibility result (\cref{thm:impossibility}), an important question is how much information about the priors is needed to enable approximately welfare-maximizing IR, DSIC, and BB mechanisms. Various forms of limited access to the priors are conceivable. A standard vehicle in Algorithmic Game Theory---and the lens we adopt here---is sample complexity (e.g., \cite{ColeR14, DhangwatnotaiRY15,MorgensternR15,AzarKW14,CorreaDFS19, RubinsteinWW20}). That is, we assume that the mechanism designer has access to samples, possibly from both the buyers' and sellers' distributions, and can use this knowledge when setting up the mechanism.

Our first result is that for the very general class of \emph{subadditive} buyers and additive sellers, a \emph{single sample} from each seller (and \emph{no samples} from the buyers) suffices for a factor 2 appproximation to the optimal social welfare.

To this end we consider the following \emph{adjusted-objective VCG mechanism}: it takes a single sample from each seller distribution, and offers the sellers the sample as price.
Afterwards, it runs the (one-sided) VCG on the buyers and items of sellers who accepted their price.
But instead of considering the usual objective of welfare maximization, it selects an allocation and payments based on the adjusted welfare (= total value of the buyers minus sampled value of the respective items).\footnote{
This idea of a penalty modifying the objective of the VCG mechanism is also used by Blumrosen and Dobzinski \cite{blumrosen2014reallocation} (however with the median instead of a sample), which directly applied to our problems results in an $8-$approximation given subadditive buyer valuations, and with some modification to the proof one can show a factor of $4$.} 

\begin{theorem}[Adjusted VCG; \cref{sec:2subadditive}]
\label{thm:2-subadditive}
    For subadditive buyer valuations and unit-supply/additive sellers, the social welfare output by the adjusted-objective VCG mechanism which uses a single sample from each seller and no samples from the buyers yields a 2-approximation to the optimal welfare, in expectation. The mechanism is IR, DSIC, and BB.
\end{theorem}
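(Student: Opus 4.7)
The plan splits the proof into three parts: truthfulness/IR, budget balance, and the 2-approximation, with the last being the main technical step. For truthfulness and IR, I would note that sellers face a take-it-or-leave-it offer at price $\hat v_s$, so a dominant strategy is to accept iff $v_s \le \hat v_s$, yielding utility $\hat v_s - v_s \ge 0$. Conditional on the set of accepting sellers, the buyers participate in VCG on the adjusted welfare $\Phi(A) = \sum_b v_b(S_b) - \sum_{s \in \cup_b S_b} \hat v_s$, where the $\hat v_s$'s are fixed constants; standard VCG analysis yields DSIC and buyer IR.

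For budget balance, my plan is to establish the per-buyer inequality $p_b \ge \sum_{s \in S_b^*} \hat v_s$, which summed over buyers covers the total payment to sellers. I would exhibit the alternative allocation obtained from $A^*$ by setting $b$'s bundle to $\emptyset$: this is feasible without $b$ and attains adjusted welfare $\Phi^* - v_b(S_b^*) + \sum_{s \in S_b^*} \hat v_s$ (we drop $b$'s value but save the costs of $S_b^*$). Hence $\Phi^*_{-b} \ge \Phi^* - v_b(S_b^*) + \sum_{s \in S_b^*} \hat v_s$, and substituting into the VCG payment formula $p_b = \Phi^*_{-b} - (\Phi^* - v_b(S_b^*))$ yields the claim.

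For the 2-approximation, I would split $\opt = \sum_s v_s + G^{\opt}$, where $G^{\opt} = \sum_b v_b(S_b^\opt) - \sum_{s \in T^\opt} v_s$ is OPT's gain from trade, and note that since $\hat v_s \ge v_s$ on every accepted and allocated item, $\mathrm{Mech} \ge \Phi^* + \sum_s v_s$. It therefore suffices to show $\ex[\Phi^*] \ge \tfrac12 \ex[G^{\opt}]$. The plan here is to leverage the iid symmetry of $v_s$ and $\hat v_s$: introduce a \emph{shadow} adjusted welfare $\tilde\Phi^*$ that swaps their roles (operates on items with $\hat v_s \le v_s$ and treats $v_s$ as the cost); by distributional invariance, $\ex[\Phi^*] = \ex[\tilde\Phi^*]$, so $2\ex[\Phi^*] = \ex[\Phi^* + \tilde\Phi^*]$. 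For each realization, lower-bound $\Phi^*$ and $\tilde\Phi^*$ by evaluating them on OPT restricted to the accepted set $A$ and rejected set $R$, and combine via subadditivity $v_b(S_b^\opt \cap A) + v_b(S_b^\opt \cap R) \ge v_b(S_b^\opt)$.

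The main obstacle will be closing the factor-$2$ gap tightly: the approximation matches an information-theoretic lower bound, so the analysis has no slack anywhere. The subtle step is passing from the per-realization symmetric bound to the expectation inequality $\ex[\Phi^*] \ge \tfrac12 \ex[G^{\opt}]$, because the cost in the shadow combination is naively $\max(v_s, \hat v_s)$ for items $s \in T^\opt$ and $T^\opt$ correlates with the realized $v_s$ (so the naive identity $\ex[\max(v_s, \hat v_s)] \le 2\ex[v_s]$ cannot be used conditionally on $s \in T^\opt$). Handling this interaction between OPT's traded set and the sellers' realized values---likely by either sharpening the shadow bound with a truncation to $\max\{0, \cdot\}$, or by separately splitting OPT's contribution on $T^\opt$ into a value-part matched by the mechanism when $s$ is accepted and a residual covered by $\sum_s v_s$---is where the real work lies.
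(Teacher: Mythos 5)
Your decomposition into IR/DSIC/BB and the approximation is the right structure, and the first two parts are correct. For budget balance, the per-buyer inequality $p_b \ge \sum_{s \in S_b^*} \hat v_s$ via the zero-bundle allocation in the Clarke pivot is exactly the argument (the paper states it more tersely but it is the same idea).

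For the approximation, your shadow construction actually works, but there is one misstep in how you set it up: the stated intermediate target $\ex[\Phi^*] \ge \tfrac12 \ex[G^{\opt}]$ is too strong and is false in general. (Toy example: one buyer with fixed $v_b = 1$, one seller with $v_s$ uniform on $\{0,1,2\}$; then $\ex[G^{\opt}] = 1/3$ while $\ex[\Phi^*] = 1/9 < 1/6$.) What the shadow argument actually gives is weaker: lower-bounding $\Phi^*$ and $\tilde\Phi^*$ by evaluating the OPT allocation restricted to the accepted set $A = \{s : v_s \le \hat v_s\}$ and the rejected set $R = \mathcal{S} \setminus A$, using subadditivity $v_b(S_b^{\opt}\cap A) + v_b(S_b^{\opt}\cap R) \ge v_b(S_b^{\opt})$, and observing that the cost paid on $A$ is $\hat v_s = \max(v_s,\hat v_s)$ and on $R$ is $v_s = \max(v_s,\hat v_s)$, one gets
$\Phi^* + \tilde\Phi^* \ge G^{\opt} - \sum_{s\in T^{\opt}}(\hat v_s - v_s)_+$.
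The ``correlation obstacle'' you flagged then dissolves via the second fix you already proposed: relax $\sum_{s\in T^{\opt}}(\hat v_s - v_s)_+ \le \sum_{s\in\mathcal{S}}(\hat v_s - v_s)_+$ (the summand is nonnegative, and this makes the sum independent of $T^{\opt}$), then use $\ex[(\hat v_s - v_s)_+]\le \ex[\hat v_s] = \ex[v_s]$ and exchangeability $\ex[\tilde\Phi^*] = \ex[\Phi^*]$ to obtain $\ex[\Phi^*] \ge \tfrac12\ex[G^{\opt}] - \tfrac12\ex[\sum_s v_s]$. Adding $\ex[\sum_s v_s]$ to both sides and using $\ex[\mathrm{Mech}] \ge \ex[\Phi^*] + \ex[\sum_s v_s]$ gives exactly $\tfrac12\ex[\opt]$, with no slack. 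So the residual is not an obstacle --- it is precisely what the untraded seller values are there to absorb.

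This is a genuinely different organization from the paper's. The paper introduces an intermediate benchmark $OPT_{max}$ (the optimum of the adjusted objective with seller cost $\max\{v_s,v'_s\}$), proves a relaxation lemma $\ex[\hat T^{OPT_{max}}] \ge \ex[T^{\opt}] - \ex[\sum_s(v'_s-v_s)_+]$, and then gets the factor $\tfrac12$ from a separate lemma on random $\tfrac12$-subsampling of subadditive functions applied to $OPT_{max}$. Your version sidesteps $OPT_{max}$ and the subsampling lemma entirely, getting the $\tfrac12$ directly from the iid symmetry $\ex[\Phi^*]=\ex[\tilde\Phi^*]$ together with the $A/R$ partition and subadditivity. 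Both roads rely on the same two facts in the end (the unconditional relaxation of the residual cost to all of $\mathcal{S}$, and $\ex[(v'_s - v_s)_+]\le\ex[v_s]$); your route is arguably the more compact of the two.
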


This result is in fact fairly robust, and generally needs only a certain form of ``approximate location oracle'' for the seller distributions: a single sample is a particular economic way to obtain such an oracle, but other aggregate statistics such as the median or other quantiles (possibly at the expense of worse approximation guarantees) also suffice.

\cref{thm:2-subadditive} yields poly-time mechanisms for settings where the adjusted VCG can be computed efficiently. This includes gross substitutes valuations for the buyers \cite{NisanS06}, so we obtain an IR, DSIC, BB $2$-approximation for this class of valuations with a single sample from each seller. Another special case is bilateral trade, where our theorem implies the existence of a $2$-approximate IR, DSIC, and BB single-sample mechanism.

We complement our positive result concerning the adjusted VCG mechanism with the following lower bound, which shows that the factor $2$ is best possible for mechanisms that---just as the adjusted VCG mechanism---are \emph{deterministic}, in that the only randomness in these mechanisms stems from the randomness in the sample they receive. 

\begin{theorem}[Lower Bound; \cref{sec:2subadditive}]\label{thm:lower-bound}
    Every deterministic IC, IR and BB mechanism for bilateral trade that receives as sole prior information a single sample from the seller's distribution has approximation ratio at least $2$. 
\end{theorem}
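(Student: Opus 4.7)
The plan is to reduce admissible mechanisms to a posted-price form and then exhibit an adversarial family of instances. By a Hagerty--Rogerson-style characterization, any deterministic IC, IR, BB mechanism for bilateral trade with a single seller sample is equivalent to a posted-price mechanism: on sample $s$, it commits to a price $p(s)$ and trade occurs iff $v_s \le p(s) \le v_b$, with the seller paid and the buyer charged $p(s)$. (Weak but not strong budget balance is not useful for welfare, since any slack between the two sides can only shrink the set of profitable trades.) It also suffices to consider $p(s) \ge s$ on the support of $D_s$, because otherwise a point-mass seller at $s$ paired with a large buyer value yields an unbounded ratio.

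Given $p(\cdot)$ and any $\epsilon > 0$, I would split into two complementary cases. First, if some $\theta > 0$ satisfies $p(\theta) \ge (2-\epsilon)\theta$, I set $D_s = \delta_\theta$ and $v_b = p(\theta) - \eta$ for small $\eta > 0$. Then the mechanism offers $p(\theta) > v_b$, the buyer rejects, no trade occurs, the welfare is $\theta$, and the optimum is $\max(\theta, v_b) \approx p(\theta)$; the ratio is therefore at least $p(\theta)/\theta \ge 2-\epsilon$. Second, if $p(\theta) < (2-\epsilon)\theta$ for \emph{every} $\theta > 0$, I use a geometric adversary: let $D_s$ be uniform on $\{1, 2, 4, \ldots, 2^{k-1}\}$ for a large integer $k$, and take $v_b = V := 2^{k+1}$ (large enough that the buyer accepts every price offered on this support, since $p(2^{j-1}) < 2\cdot 2^{j-1} \le 2^k < V$). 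The inequality $p(2^{j-1}) < 2^{j}$ implies that $v_s \le p(2^{j-1})$ holds exactly when $v_s \in \{1, 2, \ldots, 2^{j-1}\}$, giving conditional trade probability $j/k$. Averaging, the expected trade probability is $(k+1)/(2k)$, and the expected welfare is at most $V\cdot (k+1)/(2k) + \mathbb{E}[v_s]$, where the second term equals $O(2^k/k) = o(V)$ for my choice of $V$. The optimum is essentially $V$, so the ratio is at least $2k/(k+2) \to 2$ as $k \to \infty$.

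Combining the two cases, for every $\epsilon > 0$ some adversarial instance yields a ratio of at least $2-\epsilon$, so the worst-case approximation ratio of the mechanism is at least $2$. The main subtlety I anticipate is non-monotonicity of $p$, which would break a more naive attack that picks $v_b$ just below the largest price on the support; I handle this by choosing $v_b$ globally large enough to dominate every price offered on the support, which works uniformly in the structure of $p$. A secondary issue is the borderline regime $\sup_s p(s)/s = 2$ with the supremum not attained, which is handled uniformly by passing $\epsilon \to 0$ in the case split.
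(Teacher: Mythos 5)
Your proof is correct and arrives at the same bound via essentially the same pair of adversarial instances as the paper: a point-mass seller to punish overcharging, and a geometric seller distribution paired with a large fixed buyer to punish undercharging (the paper uses powers of $1/3$, you use powers of $2$; both work with the respective case thresholds). The one genuine difference is that you first reduce the mechanism to a posted-price form, whereas the paper works directly with the Myerson-style prices $p_s(\hat s,\cdot)$ and $p_b(\hat s,\cdot)$ and invokes DSIC/IR/WBB on a case-by-case basis. Your framing is conceptually cleaner once it is in place, but it moves the nontrivial work into the reduction, and as written that reduction glosses over two things: the Hagerty--Rogerson characterization is stated under a specific ex-post budget-balance assumption, and your parenthetical claim that WBB slack ``can only shrink the set of profitable trades'' is an assertion rather than an argument. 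What is actually true (and is what you need) is that for a deterministic DSIC, IR, WBB mechanism, on each sample $\hat s$ the trade region is \emph{contained in} a quadrant $\{v_s \le p(\hat s),\ v_b \ge p(\hat s)\}$ for some number $p(\hat s)$; one proves this by letting $V \to \infty$ in the WBB inequality $\tau_b(v_s) \geq \tau_s(V)$ for the DSIC thresholds. That containment — rather than equality with a posted price — suffices for both of your cases, since fewer trades only hurt the mechanism's welfare on your instances. (A minor related point: in Case 2 the buyer need not ``accept every price offered'' as you claim, because the effective buyer threshold can be unbounded for seller values near the boundary; but again failed trades only strengthen the lower bound.) With that lemma spelled out, your argument is a valid and slightly more modular presentation of the paper's proof.
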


We note that~\citet{blumrosen2014reallocation,DobzinskiB16} previously showed a factor $2$-approximate IR, DSIC, and BB mechanism for bilateral trade---the \emph{median mechanism}---and that no deterministic mechanism that uses only information from one side of the market can yield a better approximation.\footnote{Our lower bound is similar, but it does not immediately follow from their result. The reason is that mechanisms that use only one sample cannot be simulated by a deterministic mechanism using prior information.}
For more general valuations, the state of the art is a $6$-approximation for XOS valuations \cite{Colini-BaldeschiEtAl16,Colini-BaldeschiEtAl17} and a $8$-approximation for subadditive valuations \cite{blumrosen2014reallocation}.

We extend the factor $2$ of \citet{DobzinskiB16} for bilateral trade to single-sample mechanisms, and the much more general problem with subadditive buyers and additive sellers.

While these results nicely illuminate the information-theoretical situation, the approach of adjusting the objective of a one-sided mechanism to accomodate seller payments has important limitations. Essentially, it can be applied only via (generalized) VCG because of its properties as an \emph{affine maximizer}, but VCG is not necessarily computationally feasible and cannot be used, e.g., in online environments.

\subsection{Poly-Time Near-Optimal Single-Sample Mechanisms}

We next address the shortcomings and limited flexibility of the adjusted objective approach. In general, it is not even clear what it means ``to run a given one-sided approximation mechanism on the adjusted objective.''
We will instead use that for several subclasses of subadditive functions (such as XOS or GS) we can adjust buyer valuations by deducting item prices, and the resulting valuations will still be within that same class.
We leverage this as follows: As in our approach in \cref{sec:adjusted-vcg} we use a single sample from each seller distribution. We offer this sampled price to the sellers. Then we run the given one-sided approximation mechanism on the buyers and the items of those sellers who accepted their sampled price, with buyer valuations reduced by the respective item prices. We charge the buyers whatever the one-sided mechanism would charge them plus the sum of the sampled prices of the items they receive; the sellers whose items were assigned to buyers receive the sampled price.

The constructed mechanism is BB by design. We show that if the one-sided mechanism is DSIC then so is the two-sided mechanism. From the approximation guarantee of the one-sided mechanism we get a guarantee with respect to the buyer surplus (w.r.t. the original valuations and sampled prices), and we show that this translates (approximately) to the actual objective we are interested in. Moreover, if the one-sided mechanism is online, then so is the two-sided mechanism (for the buyers).

\begin{theorem}[Black Box I; \cref{section:combinatorial}]
\label{thm:black-box-1}
Denote by $\alpha$ the approximation guarantee of any one-sided IR, DSIC offline/online mechanism for maximizing social welfare for XOS valuations. We give a two-sided mechanism for XOS buyers and unit-supply sellers that is IR, DSIC, BB, uses a single sample from each seller and provides a 
$\max\{2\alpha,3\}$
approximation to the optimal social welfare. The two-sided mechanism inherits the offline/online properties of the one-sided mechanism on the buyer side and is offline on the seller side.
\end{theorem}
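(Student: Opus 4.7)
The analysis splits into verifying the economic properties, which are essentially direct from the construction, and establishing the approximation ratio, which is the substantive part.

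\emph{IR, DSIC, BB.} Each seller $j$ is offered her sampled price $\hat v_j$ independently of her report, so she faces a take-it-or-leave-it offer: truth-telling is weakly dominant and seller IR is automatic. On the buyer side, the one-sided mechanism is DSIC/IR on the adjusted XOS valuations $\tilde v_i(T) = v_i(T) - \sum_{j \in T}\hat v_j$ (XOS is closed under subtracting a linear function); both properties lift to the buyer's true utility because shifting valuation and payment by the same additive term preserves utility. For BB, each allocated item $j \in T_i$ transfers $\hat v_j$ from buyer to seller, so these terms cancel and the net revenue equals that of the one-sided mechanism, which is nonnegative by one-sided IR.

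\emph{Approximation setup.} Fix an OPT allocation $(T_i^*)_i$ and set $J^* = \bigcup_i T_i^*$, $\bar J^* = [m]\setminus J^*$. Using XOS supporting prices $(p_j^i)_{j \in T_i^*}$, decompose
\[
\opt \;=\; \sum_{j \in \bar J^*} v_j \;+\; \sum_{j \in J^*} p_j^{i(j)},
\]
where $i(j)$ is the buyer receiving $j$ in OPT. The goal is to lower bound $\mathbb{E}[SW_\text{mech}]$ by $\tfrac{1}{3}\mathbb{E}[\sum_{j \in \bar J^*} v_j] + \tfrac{1}{2\alpha}\mathbb{E}[\sum_{j \in J^*} p_j^{i(j)}]$; since $\min\{\tfrac13,\tfrac{1}{2\alpha}\} = \tfrac{1}{\max\{2\alpha,3\}}$, the stated approximation then follows.

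\emph{Two lower bounds.} For the buyer side, condition on the accepted set $S = \{j : v_j \leq \hat v_j\}$ and apply the one-sided $\alpha$-approximation to the adjusted instance on $S$, using the feasible candidate $(T_i^* \cap S)$ as the benchmark. Combined with the XOS supporting-price inequality $v_i(T_i^* \cap S) \geq \sum_{j \in T_i^* \cap S} p_j^i$, this yields $\sum_i \tilde v_i(T_i) \geq \tfrac{1}{\alpha}\sum_{j \in J^* \cap S}(p_j^{i(j)} - \hat v_j)$. Unfolding the adjustment, reinserting the welfare contributions of items kept by their sellers, and averaging using the $v_j \leftrightarrow \hat v_j$ symmetry of the i.i.d.\ sample turns this into a $\tfrac{1}{2\alpha}\mathbb{E}[\sum_{j \in J^*} p_j^{i(j)}]$ contribution to $\mathbb{E}[SW_\text{mech}]$. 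For the seller side, whenever $v_j > \hat v_j$ (probability $1/2$) the seller keeps the item and contributes $v_j$ in full; whenever instead the item is in $S$ and is reallocated, the adjusted-positive buyer surplus gives $v_i(T_i) \geq \sum_{j' \in T_i}\hat v_{j'} \geq \hat v_j \geq v_j$, so the contribution is not lost. Accounting across the three events (seller rejects, seller accepts and item kept, seller accepts and item allocated), together with the symmetry identity $\mathbb{E}[v_j\mathbf{1}\{v_j > \hat v_j\}] = \tfrac12 \mathbb{E}[\max(v_j,\hat v_j)] \geq \tfrac12\mathbb{E}[v_j]$, delivers the $\tfrac{1}{3}$ factor on the seller-side sum.

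\emph{Main obstacle.} The delicate step is the buyer-side inequality: the one-sided guarantee lives in the \emph{adjusted} objective and is only against the best adjusted allocation on $S$, not against OPT itself. Converting it to an inequality on the unadjusted welfare forces one to track the random prices $\hat v_j$ carefully, since they appear simultaneously on the revenue side and inside the buyer surplus, and the coupling between the event $\{j \in S\}$ and the one-sided allocation must be handled so that accepted items in $\bar J^*$ do not spuriously charge the buyer-side bound. This is where the XOS supporting-price inequality does the bulk of the work and where the constants $2\alpha$ and $3$ become tight.
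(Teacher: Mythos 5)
Your overall architecture matches the paper's: offer each seller her sample, run the one-sided mechanism on the accepted sellers with price-discounted XOS valuations, establish truthfulness/IR/BB directly, and then relate the discounted surplus achieved to the true welfare by exploiting the $v_s\leftrightarrow v'_s$ symmetry (which gives the factor $2$) and the one-sided guarantee (which gives the factor $\alpha$). Your decomposition of $\opt$ via XOS supporting prices into a $J^*$-part and a $\bar J^*$-part is a mild variation of the paper's, which compares instead against the best one-sided hypermatching $OPT_1$ in the original graph; both yield the stated bound.

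Two points need tightening. First, the parenthetical ``XOS is closed under subtracting a linear function'' is false as stated: subtracting $\sum_{j\in T}\hat v_j$ produces a function that can be negative, which no longer lies in the class the one-sided mechanism is designed for. The paper resolves this by defining $\hat v_b(S)$ via an $\argmax$ over sub-bundles (equivalently, capping supporting-price differences at zero and adding the zero additive function to the support), and it proves in a separate proposition that this operation stays inside XOS. Your DSIC/IR argument also implicitly relies on the mechanism choosing an inclusion-minimal utility-preserving bundle, which the paper makes an explicit tie-breaking assumption; without it the ``shift by the same additive term'' identity is not quite right. Second, your seller-side derivation is muddled: the symmetry identity you quote gives a factor $\tfrac12$, and the accounting across the three events (if the double-counting with the buyer bound is avoided by the decomposition $v_i(T_i)=\hat v_i(T_i)+\sum_{j\in T_i}\hat v_j \geq \hat v_i(T_i)+\sum_{j\in T_i}v_j$) actually gives coefficient $1$ on $\sum_{j\in\bar J^*}v_j$, not $\tfrac13$. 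The $3$ in the theorem does not come from a dedicated seller-side argument at all; it appears because for $\alpha<\tfrac32$ one substitutes $\alpha=\tfrac32$ in the $\tfrac{1}{2\alpha}$/$\bigl(1-\tfrac{1}{\alpha}\bigr)$ tradeoff. Your target inequality with coefficients $\tfrac13$ and $\tfrac{1}{2\alpha}$ is in fact provable, but the sketch does not correctly identify where the $\tfrac13$ comes from.
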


Two implications of \cref{thm:black-box-1} are: for matching constraints (unit demand valuations), we can apply the poly-time truthful secretary algorithm by \citet{Reiffenhauser19}; this yields a IR, DSIC, and BB $2e$-approximate single-sample mechanism that is online random order on the buyer side and offline on the seller side. 
For general XOS functions we obtain a poly-time IR, DSIC, and BB $O((\log \log m)^3)$-approximate single-sample mechanism by plugging in the state of the art truthful approximation mechanism for this class \cite{Dobzinski16,AssadiS19,AssadiEtAl21}.
Both of these are the first single-sample mechanisms for the respective settings. They compare to a $6$-approximation which can process buyers in any order, but requires additional information \cite{Colini-BaldeschiEtAl16,Colini-BaldeschiEtAl17}.
 
Our second reduction is more restricted since it applies only to unit-demand buyers and unit-supply sellers with identical items, but it improves on the first reduction in two ways: it yields a strongly BB mechanism in which the total payments collected from the buyers exactly match those to the sellers, and it additionally allows online random arrival of the sellers. 

The reduction runs the given one-sided mechanism to select buyers and tentative payments for the buyers, it draws a single sample from each seller distribution, and then randomly matches the selected buyers and sellers offering them to trade at the maximum of the two prices.

\begin{theorem}[Black Box II; \cref{section:double}]
\label{thm:black-box-2}
Denote by $\alpha$ the approximation guarantee of any one-sided IR, DSIC offline/online mechanisms for the intersection of a downward closed set system $\mathcal{I}$ with a uniform matroid that may or may not use information on the priors.
We give a two-sided mechanism for unit-demand buyers, unit-supply sellers with identical goods, and constraints $\mathcal{I}$ on the buyers, that is IR, DSIC,  SBB, has the same information requirements on the buyer side as the one-sided mechanism and uses a single sample from each seller, and yields a 
$(1 + 1/(2-\sqrt{3})\cdot \alpha) \approx (1+3.73 \cdot \alpha)$
approximation to the optimal social welfare. The mechanism inherits the online/offline properties of the one-sided mechanism on the buyer side and it is online random order on the seller side.
\end{theorem}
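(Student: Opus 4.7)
My plan has two components: verifying the incentive and budget properties by construction, and then carrying out a per-pair approximation analysis that exploits the identical-items assumption.

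\textbf{IR, DSIC, and SBB by construction.} SBB is immediate: at every trade the buyer pays and the seller receives the same quantity $\max(p_b,\hat v_s)$. IR follows because a trade only occurs when $v_b\ge\max(p_b,\hat v_s)\ge v_s$, leaving both parties with non-negative utility. For seller-side DSIC, observe that the price offered to seller $s$ is a function only of (i) the sample $\hat v_s$, drawn from her own distribution independently of her report, (ii) the tentative payment $p_b$ of her matched buyer, produced by a DSIC one-sided mechanism run on the buyers alone, and (iii) the uniformly random matching. All three quantities are independent of $s$'s true value, so $s$ faces a pure posted-price, take-it-or-leave-it offer. For buyer-side DSIC, the DSIC property of the one-sided mechanism makes $p_b$ a threshold independent of buyer $b$'s reported value conditional on her being selected. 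Conditioning on the external randomness (one-sided tie-breaking, sample, matching), a tentatively selected buyer then faces a take-it-or-leave-it offer at $\max(p_b,\hat v_s)$, and truthful reporting weakly dominates. These arguments extend unchanged when the one-sided mechanism is online on the buyer side and when sellers arrive in random order.

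\textbf{Approximation analysis.} Decompose $\text{OPT}=O_B+O_S$, where $O_B$ is the value of the buyers who receive items in an optimal allocation and $O_S$ the value retained by untraded sellers. I would bound the mechanism's expected welfare by isolating two contributions. For $O_S$: since $\hat v_s$ is i.i.d.\ with $v_s$, on the probability-$1/2$ event $\hat v_s<v_s$ the seller rejects any offer unless $p_b\ge v_s$, which in turn is an atypical event for high-valued sellers in $O_S$. A careful conditioning then preserves a constant fraction of $O_S$ and supplies the additive ``$1$'' in the bound $1+(2+\sqrt 3)\alpha$. For $O_B$: applying the one-sided $\alpha$-approximation with the uniform-matroid cap equal to the number of sellers yields a tentative set $T$ with $\mathbb E\bigl[\sum_{b\in T} v_b\bigr]\ge O_B/\alpha$. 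Using that the items are identical and the matching is uniformly random, this total tentative value can be translated into expected trade welfare via a symmetric per-pair computation.

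\textbf{Where the constant comes from, and the main obstacle.} The principal difficulty is that $\max(p_b,\hat v_s)$ couples heterogeneous tentative payments with heterogeneous seller samples, so a per-pair analysis must integrate both sources of randomness jointly. The identical items and the uniformly random matching reduce this to a symmetric per-pair bound, and I expect the constant $1/(2-\sqrt 3)=2+\sqrt 3$ to arise as the optimum of a single-parameter trade-off: we parametrize the analysis by a quantile $t$ of $\hat v_s$, treating samples below $tv_b$ as enabling a trade (contributing $v_b$) and samples above as contributing via the untraded regime; balancing the two terms in the resulting lower bound leads to a quadratic whose positive root is $2-\sqrt 3$. Assembling the $O_S$-preservation bound with the $\alpha$-approximation passed through this quantile argument then gives the overall ratio $1+(2+\sqrt 3)\alpha$, as claimed.
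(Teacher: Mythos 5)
Your description of the mechanism, the IR/DSIC/SBB argument, and the rough shape of the per-pair $(2-\sqrt 3)$ bound are all in line with the paper. But the seller-side accounting---the part that is supposed to supply the additive ``$1$''---is where your argument has a genuine gap. You decompose $\mathrm{OPT}=O_B+O_S$ (with $O_S$ the value of sellers untraded in the optimum) and then sketch a probabilistic argument: with probability $1/2$ the sample falls below $v_s$, and high-valued sellers are unlikely to face $p_b\ge v_s$, so ``a constant fraction of $O_S$'' is preserved. Even if you could make that precise, a constant fraction strictly below $1$ cannot deliver the additive ``$1$'' in the claimed ratio $1+\alpha/(2-\sqrt 3)$; you would instead get some $c>1$ in front. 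Moreover, the argument has no control over the case where an $O_S$-seller \emph{does} trade in the algorithm, which you currently treat as pure loss.

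The paper's proof closes exactly this hole with a simple pointwise observation that makes the probabilistic conditioning on $O_S$ unnecessary. First, it upper-bounds $\mathrm{OPT}$ by (top-$k$ buyer values) $+\,\sum_{s\in\mathcal S}v_s$, counting \emph{every} seller's value rather than just the untraded ones. Then it notes that, in the mechanism, a seller who trades always trades at price $\max\{p_b,v'_s\}\ge v_s$ with a buyer of value $v_b\ge\max\{p_b,v'_s\}\ge v_s$, so pointwise
\[
\sum_{s\in\mathcal S}v_s \;\le\; \sum_{s\in S_+}v_s+\sum_{b\in B'_+}v_b,
\]
i.e.\ the full $\sum_s v_s$ is dominated by the algorithm's welfare with factor exactly $1$, deterministically, for every realization. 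Combined with $\expectsmall{\sum_{b\in B'}v_b}\le\frac{1}{2-\sqrt3}\expectsmall{\sum_{b\in B'_+}v_b+\sum_{s\in S_+}v_s}$ (your Lemma-type step), this yields the advertised $(1+\alpha/(2-\sqrt3))$. So the fix is to replace your fractional-preservation argument for $O_S$ with this domination inequality over all sellers; your other ingredients can then stand largely as written. A minor secondary point: in the quantile argument, the paper's threshold $t$ is an absolute threshold on $v_b$, not a multiplicative cutoff $t\,v_b$ on the sample, and the two branches it balances are $\tfrac12 F(t)^2$ and $1-F(\cdot)$, giving $q^2/2=1-q$ with positive root $q=\sqrt3-1$ and value $2-\sqrt3$.
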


\cref{thm:black-box-2} leads to  additional results for double auctions.   We specifically obtain a $1+3.73(1+o(1))$-approximate mechanism for a $k$-uniform matroid constraint on  the buyers by using the truthful secretary algorithm by \citet{Kleinberg05}. The mechanism admits random order arrivals of buyers and sellers, and requires a single sample on the seller side.  We also obtain a $1+3.73(1+o(1))$-approximate mechanism for  a $k$-uniform matroid with fixed order arrivals of the buyers and random order arrivals from the sellers by using the single sample prophet inequality algorithm by \citet{AzarKW14}.  This second mechanism requires a single sample from each buyer and seller.  (All our other mechanisms only use samples from the sellers.)

Note that besides this variety of results for different valuation classes, arrival models, and even prior information assumptions on the buyer side, any future contribution regarding the one-sided problems will also yield a new result for the two-sided version via our constructions.

\subsection{Related Work}

A landmark result in economics/mechanism design is Myerson and Satterthwaite's impossibility result \cite{myersonS83}, which states that there can be no IC (BIC or DSIC) mechanism that is budget balanced and maximizes social welfare. The same paper also describes the ``second best'' mechanism which maximizes social welfare subject to the other constraints. 

In a sequence of papers Satterthwaite et al.~\cite{SW89,rsw94,sw02} considered the non-truthful \emph{buyer's bid mechanism} for double auction settings with identical items, and showed that with i.i.d.~buyers and i.i.d.~sellers as the number of buyers and sellers $n,m \rightarrow \infty$ equilibrium bids converge to truthful bids and the corresponding allocation converges to an efficient allocation. 
In a very elegant paper, McAfee \cite{mcafee92} proposed the DSIC and budget balanced \emph{trade reduction mechanism} for double auction settings with identical items. This mechanism allows all but the least efficient trade. Thus, when $\ell$ is the number of trades in an efficient solution, it achieves a $1-1/\ell$ approximation.
McAfee's trade reduction idea has since been extended to a variety of considerably more general settings (see, e.g.,  \cite{BabaioffN04,BabaioffW05,DRT17}). 

A pioneering paper by Blumrosen and Dobzinski \cite{blumrosen2014reallocation} proposed the \emph{median mechanism} for bilateral trade, which sets a posted price equal to the median of the seller's distribution and shows that this mechanism obtains a $2$-approximation to the optimal welfare. Subsequent work by the same authors \cite{DobzinskiB16} has shown that this factor $2$ is actually optimal for deterministic mechanisms that only use information about the seller distribution, and improved the approximation guarantee to $e/(e-1)$ through a randomized mechanism whose prices depend on the seller distribution in a more intricate way. Colini-Baldeschi et al.~\cite{Colini-BaldeschiEtAl16} gave a fixed-price mechanism that achieves a $25/13 \approx 1.92$ approximation, and show a lower bound of $\approx 1.3360$ that applies to any DSIC and strongly budget balanced mechanism. Most recently, Kang and Vondr\'ak  \cite{kang2018strategy} gave a $e/(e-1)-\epsilon$ approximate mechanism for bilateral trade.

Colini-Baldeschi et al.~\cite{Colini-BaldeschiEtAl16,Colini-BaldeschiEtAl17} also develop DSIC and strongly budget balanced mechanisms for more general settings with combinatorial buyers and unit-supply sellers, using ideas from prophet inequalities and posted-price mechanisms for one-sided markets. 
The first paper gives a $16$-approximate DSIC and strongly budget balanced mechanism for double auction settings where the buyers can be subject to a matroid constraint. This mechanism is online fixed order on the buyer side, and online random order on the seller side.
The second paper gives a $6$-approximate DSIC and strongly budget balanced mechanism for XOS buyers and unit-supply sellers that is online fixed order on the buyer side and offline on the seller side. 

A parallel line of work \cite{HHA16,Blumrosen2016,mcafee2008gains,DBLP:conf/sigecom/BrustleCWZ17,colini2017fixed,Babaioff0GZ18} has considered the related but different objective of optimizing the \emph{gain from trade}, which measures the expected increase in total value that is achievable by applying the mechanism, with respect to the initial allocation to the sellers.  Gain from trade is harder to approximate than social welfare, and $O(1)$ approximations of the optimal Bayesian mechanism are only possible in BIC implementations.

\citet{GoldnerEtAl20} recently suggested an alternative, resource augmentation approach to gains from trades in two-sided markets.
They ask how many buyers (resp.~sellers) need to be added into the market so that a variant of McAfee's trade reduction mechanism yields a gain from trade superior to the optimal gains from trade in the original market. As a side product they obtain a $4$-approximate single-sample mechanism for gains from trade under natural conditions on the distributions.

Another related line of work considers the problem of maximizing revenue in double auction settings, either in static environments \cite{GomesM14,NiazadehYK14} or in dynamic environments \cite{BalseiroMLZ19}.
A variation where both buyers and sellers arrive dynamically and the mechanism can hold on to items was investigated in \cite{giannakopoulos2017online,koutsoupias2018online}.
On a higher level, our work is related to prior work that has shown how having a single sample enables the design of mechanisms that achieve near-optimal revenue (e.g., \cite{DhangwatnotaiRY15}) or prophet inequalities and posted-price mechanisms with near optimal welfare (e.g., \cite{AzarKW14,CorreaDFS19,CorreaCES20}). 
It is also related to prior work on composing mechanisms in an incentive compatible manner, such as \cite{MN08} for one-sided markets and \cite{DRT17} for two-sided markets.

\section{Model and Definitions}

\paragraph{Two-Sided Markets.}
In a two-sided market we are given a set of $n$ buyers $\mathcal{B}$ and a set of $m$ sellers $\mathcal{S}$. Each seller $s$ is {\em unit supply}, i.e., she has a single indivisible item for sale, and a private valuation $v_{s} \in \mathbb{R}_{\geq 0}$ for the item she sells. Each buyer $b$ has a private valuation function $v_{b}: 2^{\mathcal{S}} \rightarrow \mathbb{R}_{\geq 0}$, mapping each set of sellers to a non-negative real. All results hold also for the equivalent case of sellers with additive valuations.
We write $v_\mathcal{B}$ and $v_\mathcal{S}$ for the vector of buyer valuations and seller valuations, respectively. 
Buyer and seller valuations are drawn independently from distributions $F_{b}$ for $b \in \mathcal{B}$ and $F_{s}$ for $s \in \mathcal{S}$.

The valuation functions of the buyers will be constrained to come from some class of functions $\mathcal{V}$. Buyers are \emph{unit demand} if for each buyer $b$ and set of sellers $S$, $v_{b}(S) = \max_{s \in S} \{v_{b}(\{s\})\}$. 
Buyers have \emph{subadditive} valuations if for each $b$ and sets of sellers $S_1,S_2$, $v_b(S_1\cup S_2)\leq v_b(S_1)+v_b(S_2)$.
Buyers have \emph{fractionally subadditive} (or XOS) valuations if for each buyer $b$ and every set of sellers $S$, $v_{b}(S) = \max_{a \in A_{b}} \sum_{s \in S} a(s)$, where $A_{b}$ is a set of additive valuation functions. We say that items are \emph{identical} if the valuation function $v_{b}(S)$ of all buyers $b$ only depends on the cardinality of the set $S$ they receive, i.e., for all $b$ and all $S, S' \subseteq \mathcal{S}$ with $|S| = |S'|$ we have $v_{b}(S) = v_{b}(S')$. 
Otherwise, items are \emph{non-identical}.

We also allow for constraints on which buyers can trade simultaneously. We express these constraints through set systems $\mathcal{I}_\mathcal{B} \subseteq 2^\mathcal{B}$. We require these set systems $\mathcal{I}_{\mathcal{B}}$ to be downward closed. That is, whenever $X \subseteq Y$ and $Y \in \mathcal{I}_{\mathcal{B}}$, then also $X \in \mathcal{I}_{\mathcal{B}}$. 
Of particular interest for our work will be \emph{matroids}, i.e., downward-closed set systems that additionally satisfy a natural exchange property. Formally: whenever $A,B \in \mathcal{I}$ and $|A| > |B|$ then there exists $a \in A \setminus B$ such that $B \cup \{a\} \in \mathcal{I}$. A special case are \emph{$k$-uniform matroids} where $A \in \mathcal{I}$ whenever $|A| \leq k$.

An \emph{allocation} is a partition of the sellers $\mathcal{S}$ into $n$ disjoint sets $(S_1,\dots,S_n)$, i.e., $\bigcup_i S_i \subseteq \mathcal{S}$ and $S_i \cap S_j = \emptyset$ for all $i \neq j$, with the interpretation that buyer $b_i$ for $1 \leq i \leq n$ receives the items of the sellers in $S_i$. 
An allocation $A = (S_1, \dots, S_n)$ is \emph{feasible} if the set of buyers $\mathcal{B}_A = \{b_i \in \mathcal{B} \mid S_i \neq \emptyset\}$ that receive a non-empty allocation is admissible (i.e., $\mathcal{B}_S \in \mathcal{I}_\mathcal{B}$). 
The \emph{social welfare} of an allocation $A = (S_1,\dots,S_n)$ is given by the sum of the valuations that buyers $b_i$ for $1 \leq i \leq n$ have for the items of the sellers in their respective sets $S_i$ \emph{plus} the valuations of the sellers that are not assigned to any buyer, i.e.,
\[
\mathsf{SW}(A) = \mathsf{SW}(S_1, \dots, S_n) = \sum_{b_i \in \mathcal{B}} v_{b_i}(S_i) + \sum_{s \in \mathcal{S}, s \not\in \bigcup_i S_i} v_{s}.
\]
We use $\textsf{OPT}(v_\mathcal{B},v_\mathcal{S})$ to denote the feasible allocation that maximizes social welfare. 

\paragraph{Mechanisms.}
A (direct revelation) \emph{mechanism} $M = (x,p)$ receives bids $bid_{b}: 2^{\mathcal{S}} \rightarrow \mathbb{R}_{\geq 0}$ from each buyer $b \in \mathcal{B}$ and $bid_{s} \in \mathbb{R}_{\geq 0}$ from each seller $s \in \mathcal{S}$. The bids of the buyers
are constrained to be consistent with the class of functions $\mathcal{V}$ of their valuations. Bids represent reported valuations, and need not be truthful. In analogy to our notation for valuations, we use $bid_\mathcal{B}$ and $bid_\mathcal{S}$ for the vector of bids of all buyers or all sellers, respectively. 

A mechanism $M$ is defined through an \emph{allocation rule} $x: \mathcal{V}^{n} \times \mathbb{R}^{m}_{\geq 0} \rightarrow \bigtimes_{i=1}^n 2^{\mathcal{S}} $ and a \emph{payment rule} $p: \mathcal{V}^{n} \times \mathbb{R}^{m}_{\geq 0} \rightarrow \mathbb{R}^{n+m}$.   

The mapping from bids to feasible allocations can be randomized, in which case $x(bid_\mathcal{B},bid_\mathcal{S})$ is a random variable. The payments can also be randomized. 
We interpret the vector of payments as the payments that the buyers need to make to the mechanism, and that the sellers receive from the mechanism. 
We use the shorthand $p_{b}(bid_\mathcal{B},bid_\mathcal{S})$ and $p_{s}(bid_\mathcal{B},bid_\mathcal{S})$ to refer to the payment from buyer $b$ to the mechanism and from the mechanism to seller. We require that $p_{s} = 0$ if seller $s$ keeps her item. 

A mechanism is \emph{single-sample} if the only information it is given about the two sets of distributions $F_{b}$ for $b \in \mathcal{B}$ and $F_{s}$ for $s \in \mathcal{S}$ is a single sample from each of these distributions.

\paragraph{Utilities.} 
We assume that buyers and sellers have \emph{quasi-linear utilities}, and that they are utility maximizers. It means that the utility $u^M_{b}(v_{b}, (bid_\mathcal{B},bid_\mathcal{S}))$ of buyer $b$ with valuation function $v_{b}$ in mechanism $M = (x,p)$ under bids $(bid_{\mathcal{B}},bid_{\mathcal{S}})$ is given by her valuation for the items she receives minus payment.
That is,
\[
    u^M_{b}(v_{b}, (bid_\mathcal{B},bid_\mathcal{S})) = \E{}{ v_{b}(x_i(bid_\mathcal{B},bid_\mathcal{S})) - p_{b}(bid_\mathcal{B},bid_\mathcal{S})},
\] 
where the expectation is over the randomness in the mechanism.
The utility $u^M_{s}(v_{s}, (bid_\mathcal{B},bid_\mathcal{S}))$ of a seller $s$ in mechanism $M = (x,p)$ under bids $(bid_{\mathcal{B}},bid_{\mathcal{S}})$
is the payment she receives if she sells the item and her valuation for her item otherwise.
Formally, $u^M_{s}(v_{s},(bid_{\mathcal{B}},bid_{\mathcal{S}}))$ is equal to 
\[
\E{}{\indicate{\text{$s$'s item is sold}} \cdot p_{s} (bid_{\mathcal{B}},bid_{\mathcal{S}}) + \indicate{\text{$s$'s item is not sold}} \cdot v_{s}},  
\]
where the expectation is over the randomness in the mechanism.

\paragraph{Goals.}
We seek to design mechanisms, and specifically single-sample mechanisms, with the following desirable properties:

\medskip\noindent(1) {\bf Individual Rationality.} 
Mechanism $M = (x,p)$ is \emph{individually rational (IR)} if for all $b \in \mathcal{B}$
$u^M_{b}(v_{b},(v_{\mathcal{B}},v_{\mathcal{S}})) \geq 0$ and for all $s \in \mathcal{S}$, $u^M_{s}(v_{s},(v_{\mathcal{B}},v_{\mathcal{S}})) \geq v_{s}.$ 

\medskip\noindent(2) {\bf Incentive Compatibility.} 
Mechanism $M = (x,p)$ is \emph{(dominant-strategy) incentive compatible (DSIC)} or \emph{truthful} if
for each buyer $b$ and each seller $s$, all valuation functions $v_{b}$ and $v_{s}$, all possible bids
$bid_{\mathcal{B}}$ by the buyers, and all possible bids by the sellers $bid_{\mathcal{S}}$, it holds that 
\begin{align*}
u_{b}^M(v_{b},((v_{b},bid_{\mathcal{B}\setminus\{b\} }),bid_\mathcal{S})) &\geq u_{b}^M(v_{b},(bid_{\mathcal{B}},bid_{\mathcal{S}})), 
\\
u^M_{s}(v_{s},(bid_{\mathcal{B}},(v_{s},bid_{\mathcal{S}\setminus \{s\}}))) &\geq u^M_{s}(v_{s},(bid_{\mathcal{B}},bid_{\mathcal{S}})),
\end{align*}
where $bid_{\mathcal{B}\setminus \{b\}}$ denotes the set of all the buyer bids but $b$'s and $bid_{\mathcal{S}\setminus \{s\}}$ denotes the set of all the seller bids but $s$'s.

\medskip\noindent(3) {\bf Budget Balance.} 
A truthful mechanism $M = (x,p)$ is \emph{weakly budget balanced (BB or WBB)} if 
\[
\E{}{\sum_{b \in \mathcal{B}} p_{b}(v_{\mathcal{B}},v_{\mathcal{S}}) - \sum_{s\in\mathcal{S}} p_{s}(v_{\mathcal{B}},v_{\mathcal{S}})} \geq 0,
\] and it is \emph{strongly budget balanced (SBB)} if the above holds with equality.

\medskip\noindent(4) {\bf Efficiency.} 
Finally, a truthful mechanism $M = (x,p)$ provides an \emph{$\alpha$-approximation} to the optimal social welfare, for some $\alpha \geq 1$, if it holds that $ \alpha \cdot \mathbb{E}[\mathsf{SW}(x(v_\mathcal{B},v_{\mathcal{S}}))] \geq \mathbb{E}[\mathsf{SW}(\mathsf{OPT}(v_\mathcal{B},v_\mathcal{S}))].$

\paragraph{Remark.} 
Our mechanisms will actually satisfy even stronger IR and BB properties in that they will satisfy these conditions ``ex post'' (i.e., pointwise).

\section{Impossibility of Approximation without Prior Information}
\label{sec:impossibility}

In this section we show that at least a tiny amount of information about the priors is needed in order to obtain meaningful approximations of the social welfare with IR,  DSIC, and BB mechanisms. 

We present here the proof of \cref{thm:impossibility} for deterministic mechanisms, which nicely illustrates the difficulties at the core of this statement. The considerably more technical, detailed proof for the general theorem can be found in \cref{app:proofs_of_section3}.
\addtocounter{theorem}{-5}
\begin{theorem}[Impossibility]
No mechanism $M$ for bilateral trade (or richer variants of it) without prior information, where agents' valuations for a single item take arbitrary values in $[0,\,\infty)$, can be IC (in expectation), IR, WBB and also guarantee an $\alpha$-approximation to the optimal social welfare, for any fixed value $\alpha>1$. 
\end{theorem}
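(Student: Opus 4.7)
The plan is to exploit the fact that without any prior information, a deterministic DSIC/IR/WBB bilateral trade mechanism is pinned down by two universal threshold functions, and that WBB then collapses the trade region so severely that a very simple family of valuations breaks any constant approximation. I will carry the argument out in full for the deterministic case; the randomised extension is the main technical obstacle and motivates the fuller proof in \cref{app:proofs_of_section3}.

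\textbf{Structural setup.} Applying the taxation principle (equivalently, one-dimensional Myerson) independently to the buyer and the seller, every deterministic DSIC/IR mechanism for bilateral trade is determined by two nondecreasing functions $T,\,U:[0,\infty)\to[0,\infty]$ such that trade occurs if and only if $v_b\geq T(v_s)$, if and only if $v_s\leq U(v_b)$; moreover the buyer pays $T(v_s)$ and the seller is paid $U(v_b)$ whenever trade occurs. Since the mechanism has no prior information, $T$ and $U$ are absolute functions depending only on the mechanism.

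\textbf{Case analysis on $T(0)$.} If $T(0)>0$ (this includes the case $T(0)=\infty$, where the mechanism never trades at $v_s=0$), I pick $v_s=0$ and any $v_b\in(0,T(0))$: no trade occurs, the realised welfare is $v_s=0$, and $\opt=v_b>0$, so the ratio is infinite. If instead $T(0)=0$, monotonicity places every pair $(v_b,0)$ with $v_b>0$ in the trade region. Pointwise WBB at such a pair gives $T(0)=0\geq U(v_b)$, while membership of $(v_b,0)$ in the trade region forces $U(v_b)\geq 0$, and therefore $U(v_b)=0$ for every $v_b>0$. Trade can then occur only when $v_s=0$, so on the instance $(v_b,v_s)=(N,1)$ no trade happens, the realised welfare is $v_s=1$, and $\opt=N$; letting $N\to\infty$ breaks any fixed $\alpha>1$. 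In both cases the mechanism fails to provide an $\alpha$-approximation.

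\textbf{Randomised extension (main obstacle).} When the mechanism is randomised, the $0/1$ trade indicator becomes a probability $q(v_b,v_s)\in[0,1]$, Myerson gives interim payments as integrals of $q$, and WBB holds only in expectation; a single-point application of WBB like the one that forced $U\equiv 0$ is no longer available. My plan is to replace the pointwise collapse by an integrated one: first, use the monotonicity of $q(\cdot,0)$ together with the expected-WBB identity to show that the expected seller payment on the slice $v_s=0$ is zero and hence that the expected trade probability must vanish off a thin strip around $v_s=0$; then, for any putative approximation ratio $\alpha$, scale a two-parameter family of instances $(v_b,v_s)=(N,\epsilon)$ so that the expected contribution of the thin strip to welfare is still dominated by $\opt/\alpha$. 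Making this argument robust to the mechanism's internal randomness without using any prior is the delicate part, which is handled in detail in \cref{app:proofs_of_section3}.
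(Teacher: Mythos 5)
Your deterministic argument is correct, and it takes a genuinely different route from the paper's in-text proof. The paper fixes both $v_b$ and $v_s$, argues that truthfulness forces a constant buyer price $p_b^{\star}$ over the range $(\alpha v_s,\infty)$ on which trade is mandatory and a constant seller price $p_s^{\star}$ over $[0,v_b/\alpha)$, locates these in the intervals $[v_s,\alpha v_s]$ and $[v_b/\alpha,v_b]$ respectively, and then lets $v_b/v_s$ grow until $p_s^{\star}\leq p_b^{\star}$ becomes impossible. You instead invoke the threshold (taxation-principle) characterization of deterministic DSIC bilateral-trade mechanisms and anchor everything at the boundary $v_s=0$: either $T(0)>0$ and the ratio is already unbounded at $(v_b,0)$ for small $v_b$, or $T(0)=0$ and pointwise WBB together with the trade-region condition collapse $U$ to $U\equiv 0$, so that no trade happens for any $v_s>0$. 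This dichotomy is cleaner and more structural; the paper's version has the advantage that it avoids the boundary value $v_s=0$, which turns out to matter once you try to randomize.

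That is precisely where your sketch for the randomized case has a real gap: the claim that the expected seller payment on the slice $v_s=0$ is zero does not follow from truthfulness, IR, and WBB in expectation. Myerson's envelope identity pins down the seller's interim utility $u_s(\cdot)$ only up to the additive constant $u_s(0)$, and IR merely gives $u_s(0)\geq 0$; WBB is a joint constraint that happily allows the seller to be paid a strictly positive amount at $v_s=0$ as long as the buyer covers it (a fixed posted-price mechanism does exactly this and is DSIC, IR, and WBB). In the deterministic case the collapse works because the buyer's price at $(v_b,0)$ equals the single critical value $T(0)=0$; once the allocation is randomized this equality disappears and your first step fails. The paper's appendix proof avoids the boundary entirely: it fixes a strictly positive $v_s^{\star}=v_b/\alpha^{2}$, uses the $\alpha$-approximation to obtain a trade-probability lower bound $\bP_{v_b,v_s^{\star}}\geq 1/(\alpha+1)$, converts this via seller truthfulness into a lower bound $\expect{p_s(v_b,v_s)\mid\trade}\geq v_b/((\alpha+1)\alpha^{2})$ for all $v_s\leq v_s^{\star}$, and then contradicts WBB against the limited growth (in $v_b$) of the expected buyer payment that buyer truthfulness allows. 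Completing the randomized case along your lines would require replacing the boundary identity with an argument of that form.
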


\begin{proof}
    Fix any constant approximation ratio $\alpha$, and assume that $M$ is an IR, weakly budget-balanced, truthful $\alpha$-approximation for the problem. 
    For the ease of exposition assume $M$ is deterministic: the same proof strategy works in the most general case.
    
    Assume now that $v_b>\alpha \cdot v_s$. The only way $M$ can guarantee this ratio is by trading the item from $s$ to $b$. Doing this truthfully and in an individually rational fashion without running a deficit implies that for the price $p_b$ charged from the buyer, and $p_s$ paid to the seller, it holds that $v_s \leq p_s \leq p_b \leq v_b$.
        
    Recall that intuitively, the only play here is to set $p_s=v_b$ and $p_b=v_s$. In fact, for truthfulness, an agent should not be able to influence the price she pays, and when determining said price, the only value we therefore know for sure will be in the feasible interval is the reported valuation of the \emph{opposite} agent. Our argument is close to this intuition, however a bit more intricate.

    We assumed $M$ to be truthful, i.e., if $v_s$ is a fixed value, then $b$'s utility will be at least as good when she reports the true $v_b$ as when reporting some $v'_b$.
    Since otherwise, whenever $v_b$ is actually the value with the higher price, the agent is incentivized to make a false report. Note that $M$ is forced to make a trade for both of the values in order to preserve an $\alpha$-approximation.
    We have established so far that the buyer price $p_b^{\star}$ will be a single value for all buyer reports in $(\alpha\cdot v_s,\,\infty)$, and for all of those buyer reports, there also needs to be a trade. By budget balance, when this is the case it holds $ p_b^{\star}\in [v_s,\,\alpha\cdot v_s]$ which means that even the smallest buyer for which $M$ has to trade will be willing to accept $p_b^{\star}$.

    This shows that whenever the seller reports a fixed value $v_s$, the price for each \emph{critical} buyer will come from a fixed interval starting at $v_s$.

     We can do the same from the other direction. Start fixing some buyer report $v_b$; since $M$ is an $\alpha$-approximation, it will have to make a trade whenever $v_s\in [0,v_b/\alpha)$. Because $M$ is truthful also for the sellers, each seller report in this interval must result in the same price $p_s^{\star}$. To be acceptable for all sellers in the critical interval, similarly as above, it must be that $p_s^{\star}\in [v_b/\alpha, v_b]$.
    We put this together with the fact that, due to weak budget balance, $p_s^{\star}$ cannot exceed $p_b^{\star}$, i.e., $       [v_b/\alpha,\, v_b]\ni p_s^{\star}\leq p_b^{\star} \in [v_s,\,\alpha \cdot v_s].$
    This goes awry for any choice of numbers $p_s^{\star},\, p_b^{\star}$ once $v_s$ and $v_b$ are such that the possible intervals for $p_s^{\star},\, p_b^{\star}$ do not overlap.
    Setting for example $v_s^{\star}= 1$ and $v_b^{\star}= \alpha^3$ violates the previous inequality.
\end{proof}

\section{Two-Approximation for Subadditive Buyers}
\label{sec:2subadditive}
In this section, we present a single-sample, IR, BB and DSIC $2$-approximation for subadditive buyers. Our mechanism is essentially the generalized VCG mechanism, which we use as a blackbox. 

Given a sample $v'_s$ for each seller, we do the following:
\begin{itemize}
    \item For each $s\in S$, offer a price $v'_s$ to $s$ and if she accepts, add $s$ to the set $\hat S$ of available items.
    \item On the instance $(B, \hat S)$, run VCG with the objective of maximizing the affine function 
    $\sum_{b\in B} v_b(S_b)- \sum_{s\in \bigcup_{b\in B}S_b}v'_s$,
    where $S_b$ denotes the items assigned to $b$.
    \item Assign the items in $\bigcup_{b\in B} S_b$ to their respective buyers, then charge them the price determined by VCG and pay each according seller a price of $v'_s$. All other sellers keep their item.
\end{itemize}
Note that VCG outputs an optimal assignment according to above objective because of its well-known properties as an affine maximizer. 
Showing that our mechanism is DSIC, IR, and BB is not too hard and has been done before, see \citet{blumrosen2014reallocation}
for more details. Intuitively it is IR because we ask the sellers if they want to trade and for the buyers, VCG is of course IR. It is DSIC for the sellers since they can refuse, and for the buyers because VCG is DSIC. 

Finally, it is BB because for every buyer $b$, the sum of her VCG payments equals at least the sum of her items' samples, which is what we pay to the sellers. This is true because the loss introduced to the other agents by $b$'s presence in the VCG routine is at least the sum of samples of items $S_b$: by not assigning those at all in a solution without $b$, we gain exactly this amount according to our adjusted objective.

Proving that the mechanism is also a 2-approximation is considerably more involved and will need some preparation.
For simplicity, we use notation for an allocation and its welfare interchangeably.
Let $OPT_{max}$ denote an allocation on $(B,S)$ which maximizes the adjusted objective
$$
\sum_{b\in B}\left(v_b\left(S^{OPT_{max}}_b\right)-\sum_{s\in S^{OPT_{max}}_b}\max\{v_s, v'_s\}\right).$$
Further, for any allocation $A$ over sets $(S,B)$ with valuations $v_s,\, v_b$, respectively, its corresponding \emph{welfare} is $\sum_{s\in S}v_s +\sum_{b\in B}T_b^A$, where $T_b^A= v_b(A(b))-\sum_{s\in A(b)} v_s$. For use in the proof, we further define the \emph{adjusted} buyer gains $\hat{T}_b^A=v_b(A(b))-\sum_{s\in A(b)}\max\{v_s,v'_s\}$.

In order to prove the approximation, we first show three preliminary statements which we then combine. The first one, \Cref{lem:folklore}, reports a folklore insight on subadditive functions and sampling. A proof is added for completeness. 
\begin{lemma}
\label{lem:folklore}
    Let $f:2^\mathcal{N} \to \mathbb{R}$ be a subadditive set function, and let $X$ be a random set of the base set $\mathcal{N}$ where each element appears independently with probability $\frac 12$, then $\E{}{f(X)} \ge \frac 12 f(\mathcal{N})$.
\end{lemma}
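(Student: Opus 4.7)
The plan is to exploit the symmetry between a random half-subset and its complement together with subadditivity applied to the full ground set $\mathcal{N}$.

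First I would introduce $\bar X = \mathcal{N} \setminus X$ and observe that, since each element of $\mathcal{N}$ is placed in $X$ independently with probability $1/2$, the complement $\bar X$ is distributed identically to $X$ (each element lies in $\bar X$ independently with probability $1/2$). In particular, $\mathbb{E}[f(\bar X)] = \mathbb{E}[f(X)]$.

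Next I would apply subadditivity to the decomposition $\mathcal{N} = X \cup \bar X$, which deterministically (for every realization of $X$) gives
\[
f(\mathcal{N}) = f(X \cup \bar X) \le f(X) + f(\bar X).
\]
Taking expectations over the random set $X$ and using the previous equality in distribution,
\[
f(\mathcal{N}) \le \mathbb{E}[f(X)] + \mathbb{E}[f(\bar X)] = 2\,\mathbb{E}[f(X)],
\]
and rearranging yields the claim $\mathbb{E}[f(X)] \ge \tfrac{1}{2} f(\mathcal{N})$.

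There is no real obstacle here: the only subtlety worth flagging is that the symmetry argument requires the sampling probability to be exactly $1/2$ (so that $X$ and $\bar X$ have the same law), and that subadditivity is used in its pointwise form before taking expectations. The statement holds for any subadditive $f$ without needing monotonicity or nonnegativity beyond what is inherited from the setting.
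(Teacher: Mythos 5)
Your proof is correct and is essentially the same as the paper's: both exploit the symmetry between $X$ and $\mathcal{N}\setminus X$ under the uniform sampling and apply subadditivity to $\mathcal{N} = X \cup (\mathcal{N}\setminus X)$. The paper phrases it as a pairing inside the explicit sum over all $2^{|\mathcal{N}|}$ subsets, while you phrase it via the equality in distribution of $X$ and $\bar X$, but the underlying argument is identical.
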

\begin{proof}
    Start noting that every subset $X'$ of $\mathcal{N}$ is realized with the same identical probability $p = 2^{-|\mathcal{N}|}$, i.e. $\prob{X = X'} = p$. Therefore we have the following:
    \begin{align*}
        \E{}{f(X)} &= \sum_{X' \subseteq \mathcal{N}} p \cdot f(X') = \frac 12 \sum_{X' \subseteq \mathcal{N}} p \cdot \left( f(X') + f(\mathcal{N} \setminus X') \right) \\
        &\ge \frac 12 \sum_{X' \subseteq \mathcal{N}} p \cdot f(\mathcal{N}) = \frac 12 f(\mathcal{N}).
    \end{align*}
    Notice that the inequality is where we use subadditivity. 
\end{proof}

The other two lemmas we need are more technical and specific to our problem. 
\cref{OptMax} states that the loss in buyer value when replacing seller valuations with $\max\{v_s,v'_s\}$ is not too large, i.e., the buyers' contribution to $OPT_{max}$ is closely related to their contribution to $OPT$.

\begin{lemma}\label{OptMax}
The following inequality holds true:
\[  \sum_{b\in B} \mathbb{E}\left[\hat{T}_b^{OPT_{max}}  \right] \geq  \sum_{b\in B} \mathbb{E}\left[T_b^{OPT} \right]-\sum_{s\in S} \mathbb{E}\left[\left(v'_s-v_s\right)_+ \right] \]
\end{lemma}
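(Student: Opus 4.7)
The plan is to exploit the defining property of $OPT_{max}$ as the maximizer of the adjusted objective and then relate the adjusted buyer surplus $\sum_b \hat T_b$ to the standard buyer surplus $\sum_b T_b$ via an easy algebraic identity involving the positive part $(v'_s-v_s)_+$. The argument is pointwise in the realizations of $v_\mathcal{B}, v_\mathcal{S}, v'_\mathcal{S}$; taking expectations at the end gives the claim.

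\textbf{Step 1 (use optimality of $OPT_{max}$).} Since $OPT_{max}$ maximizes $\sum_{b\in B}\hat T_b^A$ over all feasible allocations $A$ and since $OPT$ is itself a feasible allocation on $(B,S)$, we get the pointwise inequality
\[
\sum_{b\in B} \hat T_b^{OPT_{max}} \;\geq\; \sum_{b\in B}\hat T_b^{OPT}.
\]

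\textbf{Step 2 (rewrite $\hat T$ in terms of $T$).} For any allocation $A$ and any buyer $b$, by definition
\[
\hat T_b^{A} \;=\; v_b(A(b))-\sum_{s\in A(b)} \max\{v_s,v'_s\} \;=\; T_b^{A}-\sum_{s\in A(b)} (v'_s-v_s)_+,
\]
using that $\max\{v_s,v'_s\}-v_s=(v'_s-v_s)_+$. Specializing to $A=OPT$ and summing over $b$,
\[
\sum_{b\in B} \hat T_b^{OPT} \;=\; \sum_{b\in B} T_b^{OPT} \;-\; \sum_{b\in B}\sum_{s\in OPT(b)}(v'_s-v_s)_+.
\]
Because $OPT$ is an allocation, the sets $OPT(b)$ are pairwise disjoint subsets of $S$, so the double sum is at most $\sum_{s\in S}(v'_s-v_s)_+$. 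Thus
\[
\sum_{b\in B}\hat T_b^{OPT} \;\geq\; \sum_{b\in B} T_b^{OPT} \;-\; \sum_{s\in S}(v'_s-v_s)_+.
\]

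\textbf{Step 3 (combine and take expectations).} Chaining the inequalities from Steps 1 and 2 gives the pointwise bound
\[
\sum_{b\in B} \hat T_b^{OPT_{max}} \;\geq\; \sum_{b\in B} T_b^{OPT} \;-\; \sum_{s\in S}(v'_s-v_s)_+,
\]
and taking expectations (over the valuations and the samples) yields the lemma by linearity.

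There is essentially no obstacle: the content reduces to recognizing that $OPT_{max}$ beats $OPT$ under the adjusted objective, and that switching from the original to the adjusted objective costs exactly one positive-part term per assigned seller. Subadditivity of buyer valuations is not needed here; it will enter elsewhere in the proof of the 2-approximation (via \cref{lem:folklore}).
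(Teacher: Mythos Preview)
Your proof is correct. It is in fact a cleaner route than the paper's own argument: the paper introduces an intermediate allocation $A^+$ obtained by restricting $OPT$ to those buyers $b$ for which $v_b(S^{OPT}_b)>\sum_{s\in S^{OPT}_b}\max\{v_s,v'_s\}$, establishes $\sum_b \hat T_b^{A^+}\geq \sum_b T_b^{OPT}-\sum_{s\in S}(v'_s-v_s)_+$ by separately bounding the contributions of $B^{max}$ and $B\setminus B^{max}$, and only then invokes optimality of $OPT_{max}$. Your observation that one can compare $OPT_{max}$ directly against $OPT$ and use the identity $\hat T_b^{OPT}=T_b^{OPT}-\sum_{s\in OPT(b)}(v'_s-v_s)_+$ collapses the case analysis into a single line; the detour through $A^+$ buys nothing for the final bound.
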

\begin{proof}
Define the set of items assigned to any buyer by the optimal assignment $OPT$ to be $S^{OPT}\subseteq S$. Let $B^{max}\subseteq B$ be those buyers for which their endowments $S^{OPT}_b\subseteq S^{OPT}$ satisfy
\[ v_b(S^{OPT}_b) > \sum_{s\in S^{OPT}_b}\max\{v_s,v'_s\} \]
and call the set of items sold by $OPT$ to these buyers $S^{max}$. Consider the assignment $A^+$ implied by sets $B^{max}$, $S^{max}$. This assignment clearly is feasible under the adjusted objective, which we can model for the purpose of finding an assignment by replacing all seller valuations with $\max\{v_s,v'_s\}$ in VCG. Recall we denote with $\hat T_b$ the according, adjusted buyer gains $T_b$. 
We have
\[
\sum_{b\in B\setminus B^{max}}\hat T^{A^+}_b = 0 \text{ , while }\sum_{b\in B\setminus B^{max}}T^{OPT}_b \leq \sum_{s\in S\setminus S^{max}}\left(v'_s-v_s\right)_+.
\]
For the other items, i.e., those in $S^{max}$, we have
\begin{align*}
     \sum_{b\in B^{max}} \hat T^{A^+}_b &= \sum_{b\in B^{max}}v_b(S^{OPT}_b)-\sum_{s\in S^{max}} \max\{v_s, v'_s\}\\
     &= \sum_{b\in B^{max}}T^{OPT}_b - \sum_{s\in S^{max}} \left(v'_s-v_s\right)_+.
\end{align*}
Putting both together, this yields
\[ \sum_{b\in B} \hat T^{A^+}_b \geq \sum_{b\in B} T^{OPT}_b - \sum_{s\in S}\left(v'_s-v_s\right)_+\]
and since $OPT_{max}$ maximizes the sum of $\hat T_b$, also $\sum_{b\in B}\hat T^{OPT_{max}}_b\geq \sum_{b\in B}\hat T^{A^+}_b$. 
\end{proof}

Our next lemma, \cref{AdjVCGMax}, will allow us to relate our adjusted VCG mechanism to $OPT_{max}$. Let us denote the output of adjusted-objective VCG as $ALG$. 

\begin{lemma}\label{AdjVCGMax}
The following inequality holds true:
\[\mathbb{E}\left[ \sum_{b\in B}T^{ALG}_b \right]\geq \frac 12 \mathbb{E}\left[\sum_{b\in B}\hat{T}^{OPT_{max}}_b \right] \]

\end{lemma}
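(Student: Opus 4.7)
The plan is to exploit the symmetry between the two i.i.d.\ samples $v_s, v'_s \sim F_s$ at each seller via a coupling argument. For every $s$, let $b_s := \max\{v_s,v'_s\}$ and $Z_s := \ind\{v_s \le v'_s\}$; note that $Z_s = 1$ is precisely the event $s \in \hat S$. By exchangeability of $(v_s,v'_s)$, conditional on the unordered pair $\{v_s,v'_s\}$ and on $v_\mathcal{B}$, we have $\Pr[Z_s = 1] = 1/2$ independently across sellers (assuming continuous distributions, or otherwise using randomised tie-breaking). The key observation is that $OPT_{max}$ is a deterministic function of $v_\mathcal{B}$ and the $b_s$'s alone---the $\max\{v_s,v'_s\}$'s are what enter its objective---so under this conditioning the random set $\hat S$ is independent of $OPT_{max}$.

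The next step is to use the restriction of $OPT_{max}$ to $\hat S$ as a comparison allocation for the adjusted VCG subproblem on $(B, \hat S)$: let $A$ be the allocation that gives each buyer $b$ the bundle $OPT_{max}(b) \cap \hat S$. Since every item of $A$ lies in $\hat S$ and the buyer feasibility system $\mathcal{I}_\mathcal{B}$ is downward closed, $A$ is feasible for the adjusted VCG. Moreover, $v'_s = b_s$ for every $s \in \hat S$, so $A$'s adjusted-objective value and $\sum_b \hat T_b^{OPT_{max}}$ use exactly the same per-item cost $b_s$, and differ only in that $A$ sees a random half of the items.

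Taking the conditional expectation over $\hat S$ (with $v_\mathcal{B}$ and the unordered pairs frozen), \cref{lem:folklore} applied to the subadditive valuation $v_b$ on the base set $OPT_{max}(b)$ yields $\ex[v_b(OPT_{max}(b) \cap \hat S)] \ge \tfrac{1}{2}\,v_b(OPT_{max}(b))$, while linearity of expectation gives $\ex[\sum_{s \in OPT_{max}(b) \cap \hat S} b_s] = \tfrac{1}{2}\sum_{s \in OPT_{max}(b)} b_s$. Summing over buyers,
\[
\ex\bigl[\text{adjusted objective of } A\bigr] \;\ge\; \tfrac{1}{2}\,\ex\Bigl[\sum_{b\in B}\hat T_b^{OPT_{max}}\Bigr].
\]
Since ALG maximises the same adjusted objective over feasible allocations on $(B,\hat S)$, its expected adjusted-objective value is at least this quantity.

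Finally, I would pass from the adjusted objective of ALG (using cost $v'_s$) to the sum $\sum_b T_b^{ALG}$ (using the true cost $v_s$): every $s$ traded by ALG lies in $\hat S$, so $v_s \le v'_s$, and hence $T_b^{ALG} \ge v_b(ALG(b)) - \sum_{s \in ALG(b)} v'_s$ term-by-term. Chaining the three inequalities closes the proof. The main obstacle is the bookkeeping between three different notions of per-item cost---$v_s$, $v'_s$, and $b_s = \max\{v_s,v'_s\}$---and setting up the coupling so that $\hat S$ is independent of $OPT_{max}$ and looks like an independent $1/2$-random subset of the items relative to it; once this is done, the identity $v'_s = b_s$ on $\hat S$ is exactly what aligns the ALG problem with $OPT_{max}$, letting \cref{lem:folklore} deliver the factor of $2$.
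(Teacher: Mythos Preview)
Your proof is correct and follows essentially the same approach as the paper's: both hinge on the observation that $OPT_{max}$ depends only on the maxima $\max\{v_s,v'_s\}$, so that $\hat S$ behaves as an independent $\tfrac12$-random subset relative to $OPT_{max}$, after which \cref{lem:folklore} supplies the factor~$2$ and optimality of ALG on $(B,\hat S)$ together with $v_s\le v'_s$ on $\hat S$ finish the chain. Your explicit conditioning on the unordered pairs $\{v_s,v'_s\}$ is a slightly cleaner packaging of what the paper does via the auxiliary ``erase with probability $\tfrac12$'' process $S_b^{1/2}$ and the subsequent coupling to $ALG'=OPT_{max}\!\restriction_{\hat S}$, but the content is the same.
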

\begin{proof}[Proof of \cref{AdjVCGMax}]

Fix buyer $b$ and her assigned set 
$S^{OPT_{max}}_b$. Let for each $b$ the assignment $S^{1/2}_b$ be constructed by erasing each $s\in S^{OPT_{max}}_b$ with probability $1/2$, independently from everything else. It holds by definition
\[ \hat{T}_b^{OPT_{max}}=v_b(S^{OPT_{max}}_b)-\sum_{s\in S^{OPT_{max}}_b} 
 \max\{v_s,v'_s\}.\]
With subadditivity and \Cref{lem:folklore}, this implies for the assignment defined by the $S^{1/2}_b$ and the according $\hat{T}^{1/2}$:
\[ \mathbb{E}\left[ \hat{T}_b^{1/2}\right] \geq \frac 12 v_b(S^{OPT_{max}}_b) - \frac 12 \sum_{s\in S^{OPT_{max}}_b} \max\{v_s,v'_s\} \]
since every item is simply erased with fixed probability $1/2$, in both the left and the right hand side. Note that the expectation is only with respect to the erasing process, while the valuations are fixed.

Let $ALG'$ be the solution $OPT_{max}$ restricted to those items for which $v_s\leq v'_s$, which is the case for each $s$ with probability at least $1/2$, independently.
Note that since $OPT_{max}$ only cares about $\max\{v_s,v_s'\}$, it is oblivious to which value realizes the maximum. We infer $\mathbb{E}[\hat{T}_b^{ALG'}]\geq\mathbb{E}[\hat{T}_b^{1/2}]$ for each $b$, because $ALG'$ can be seen as simply fixing a certain way of flipping the coins for $S_b^{1/2}$, and then retaining some extra items (alternatively, we could employ a random tie breaking).
Next, we note that every subset $S_b^{ALG'}$ assigned to a buyer in $ALG'$ is contained in $\hat S$, and  for the objective of our algorithm, assigning it is advantageous, i.e., yields nonnegative contribution.
$ALG$ might deviate from these assignments, but it holds that
\[ \sum_{b\in B}v_b(S_b^{ALG})-\sum_{s\in\bigcup_{b\in B}S_b^{ALG}}v_s'\geq \sum_{b\in B}v_b(S_b^{ALG'})-\sum_{s\in\bigcup_{b\in B}S_b^{ALG'}}v_s' \]
simply because $ALG$ maximizes the objective.
Also, due to our assumption $v_s\leq v_s'$ for all items assigned by $ALG'$, we get that the  right side is equal to $\sum_{b\in B}\hat{T}^{ALG'}_b$, yielding
\[ \sum_{b\in B}v_b(S_b^{ALG})-\sum_{s\in\bigcup_{b\in B}S_b^{ALG}}v_s'\geq  \sum_{b\in B} \hat{T}_b^{ALG'} .\]
Further, since $ALG$ only assigns items in $\hat S$, i.e., with $v_s\leq v_s'$, we can replace the left side in the same way and get
\[ \sum_{b\in B}\hat{T}_b^{ALG}\geq  \sum_{b\in B} \hat{T}_b^{ALG'} .\]
Now taking the expectations, it holds
\[ \mathbb{E}\left[\sum_{b\in B} \hat{T}_b^{ALG} \right]\geq \mathbb{E}\left[\sum_{b\in B} \hat{T}_b^{ALG'} \right]\geq \mathbb{E}\left[ \sum_{b\in B} \hat{T}^{1/2}_b \right]\]
because for each $b$, above we showed $\mathbb{E}[\hat{T}_b^{ALG'}]\geq \mathbb{E}[\hat{T}_b^{1/2}]$.
Plugging in our bound $\mathbb{E}\left[\sum_{b\in B}\hat{T}_b^{1/2}\right]\geq \tfrac 12 \mathbb{E}\left[ \sum_{b\in B}\hat{T}_b^{OPT_{max}}\right]$ implies
\[ \mathbb{E}\left[\sum_{b\in B} \hat{T}_b^{ALG} \right] \geq \mathbb{E}\left[ \frac 12 \sum_{b\in B}\hat{T}_b^{OPT_{max}} \right]\]
Finally, we conclude the proof by observing that $ \sum_{b\in B}T^{ALG}_b\geq \sum_{b\in B}\hat{T}_{b}^{ALG}.$ 
\end{proof}

With these lemmas at hand, we can now prove the claimed approximation guarantee.

\begin{theorem}[Adjusted VCG]
    For subadditive buyer valuations, the social welfare output by the adjusted-objective VCG mechanism yields a 2-approximation to the optimal welfare, in expectation. Moreover, the mechanism is IR, DSIC, and BB.
\end{theorem}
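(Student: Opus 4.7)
The plan is to piece together the three preliminary lemmas and to absorb the residual ``sampling penalty'' term $\sum_s \mathbb{E}[(v'_s-v_s)_+]$ into the seller welfare already present on both sides of the inequality. The IR/DSIC/BB properties of the mechanism are handled by the textual discussion preceding the lemmas, so what remains is only the approximation bound.

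First I would rewrite the social welfare in a form compatible with the $T_b$ notation. For any feasible allocation $A$, by splitting the contribution of sellers into those whose items are sold and those who keep them, one has
\[
\mathsf{SW}(A) \;=\; \sum_{s\in S} v_s \;+\; \sum_{b\in B} T_b^{A}.
\]
Applied to $ALG$ and $OPT$, this isolates the quantity $\sum_b T_b$ which is exactly what the two lemmas control.

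Next I would chain \cref{AdjVCGMax} and \cref{OptMax}: doubling the conclusion of the former and feeding in the latter gives
\[
2\,\mathbb{E}\!\left[\sum_{b\in B} T_b^{ALG}\right] \;\geq\; \mathbb{E}\!\left[\sum_{b\in B}\hat T_b^{OPT_{max}}\right] \;\geq\; \mathbb{E}\!\left[\sum_{b\in B} T_b^{OPT}\right] \;-\; \sum_{s\in S}\mathbb{E}\!\left[(v'_s-v_s)_+\right].
\]
The step I expect to be the only subtle point is the treatment of the penalty term on the right. Since $v'_s$ and $v_s$ are i.i.d.\ nonnegative draws from $F_s$, one has the pointwise inequality $(v'_s-v_s)_+ \leq v'_s$, and therefore $\mathbb{E}[(v'_s-v_s)_+] \leq \mathbb{E}[v'_s] = \mathbb{E}[v_s]$. (Equivalently, by symmetry $\mathbb{E}[(v'_s-v_s)_+]=\tfrac12\mathbb{E}|v'_s-v_s|\leq\mathbb{E}[v_s]$.) This is the crucial place where independence of the sample from the true valuation is used: it lets us charge the sampling loss against the seller stay-at-home welfare that is anyway counted inside $\mathsf{SW}(ALG)$.

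Finally I would add $2\,\mathbb{E}[\sum_s v_s]$ to both sides of the chained inequality and use the bound $\sum_s \mathbb{E}[(v'_s-v_s)_+]\leq \mathbb{E}[\sum_s v_s]$ to deduce
\[
2\,\mathbb{E}[\mathsf{SW}(ALG)] \;\geq\; 2\,\mathbb{E}\!\left[\sum_{s\in S} v_s\right] + \mathbb{E}\!\left[\sum_{b\in B} T_b^{OPT}\right] - \mathbb{E}\!\left[\sum_{s\in S} v_s\right] \;=\; \mathbb{E}[\mathsf{SW}(OPT)],
\]
which is exactly the claimed $2$-approximation. No further combinatorial or mechanism-design argument is needed beyond what the three lemmas already provide.
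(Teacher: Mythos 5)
Your proof is correct and follows essentially the same route as the paper's: decompose $\mathsf{SW}(A)=\sum_s v_s + \sum_b T_b^A$, chain \cref{AdjVCGMax} with \cref{OptMax}, and absorb the penalty $\sum_s\mathbb{E}[(v'_s-v_s)_+]$ using the observation that $(v'_s-v_s)_+\le v'_s$ and $v'_s,v_s$ are identically distributed. The only cosmetic difference is that you multiply the chain by $2$ whereas the paper divides by $2$.
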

\begin{proof}
Note that we are interested in the welfare of the allocation, not its adjusted objective. It holds
\begin{align*}
    \mathbb{E}\left[ ALG\right] &= \sum_{b\in B} \mathbb{E}\left[T^{ALG}_b \right] +  \sum_{s\in S} \mathbb{E}\left[v_s\right] \ge \frac 12 \mathbb{E}\left[\sum_{b\in B}\hat{T}^{OPT_{max}}_b \right] +  \sum_{s\in S} \mathbb{E}\left[v_s \right]\\
    &\ge \frac 12  \sum_{b\in B} \mathbb{E}\left[T^{OPT}_b \right]-\frac 12 \sum_{s\in S} \mathbb{E}\left[ \left(v'_s-v_s\right)_+ \right]+\sum_{s\in S}\mathbb{E}\left[ v_s\right] \\
    &\geq \frac 12  \sum_{b\in B}\mathbb{E}\left[T^{OPT}_b \right]+\frac 12  \sum_{s\in S}\mathbb{E}\left[v_s \right] =\frac 12 \mathbb{E}\left[ OPT \right].
\end{align*}
Here, for the first inequality we used \cref{AdjVCGMax}, and for the second \cref{OptMax}. Finally, the last inequality is true because for each seller $s$, it holds that $\mathbb{E}\left[ \left(v'_s-v_s\right)_+\right]\leq \mathbb{E}\left[ v_s\right].$ In fact, $(v'_s-v_s)_+\le v_s'$ and $v_s$ and $v'_s$ share the same distribution.
\end{proof}

We conclude this section by showing that the result in \cref{thm:2-subadditive} is best-possible in the sense that no deterministic mechanism with one seller sample as sole prior information can achieve a better approximation.
\begin{theorem}    
    Every deterministic IC, IR, and BB mechanism for bilateral trade that receives as sole prior information a single sample from the seller's distribution has approximation ratio at least $2$. 
\end{theorem}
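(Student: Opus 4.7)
The plan is to combine a structural characterization of deterministic IC-IR-BB bilateral-trade mechanisms with an adversarial multi-scale seller distribution. I would first invoke the classical fact (Hagerty--Rogerson) that every deterministic IC-IR-BB mechanism for bilateral trade is a posted-price mechanism with two prices $p_b \geq p_s$, where trade happens iff $v_b \geq p_b$ and $v_s \leq p_s$. Applied pointwise in the realized sample $s$, this implies that any deterministic single-sample mechanism $M$ is fully specified by two functions $p_b, p_s: \mathbb{R}_{\geq 0} \to \mathbb{R}_{\geq 0}$ with $p_b(s) \geq p_s(s)$: after observing the sample $s$, the mechanism posts $p_b(s), p_s(s)$, and trade occurs iff $v_b \geq p_b(s)$ and $v_s \leq p_s(s)$.

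Next, I would use point-mass instances $F_s$ concentrated at a single value $s_0$ to derive the necessary conditions $p_s(s) \geq s$ and $p_b(s) < 2s$ for beating ratio $2$. If $p_s(s_0) < s_0$, the seller always refuses on this instance, so picking $v_b$ arbitrarily large forces an unbounded ratio. If $p_b(s_0) \geq 2 s_0$, setting $v_b = p_b(s_0) - \delta$ makes the buyer refuse; the welfare stays at $s_0$ while OPT approaches $p_b(s_0) \geq 2 s_0$, giving a ratio approaching $p_b(s_0)/s_0 \geq 2$. So we may henceforth assume $p_b(s), p_s(s) \in [s, 2s)$ for all $s > 0$.

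Given this constraint, the key adversarial construction is $F_s$ uniform on $\{2^0, 2^1, \ldots, 2^{K-1}\}$ (each atom having probability $1/K$) together with $v_b = 2^{K-1}$. For a realized sample $s = 2^i$, the constraints $p_s(2^i), p_b(2^i) \in [2^i, 2^{i+1})$ imply that the seller with value $v_s = 2^j$ accepts iff $j \leq i$, while the buyer with $v_b = 2^{K-1}$ accepts whenever $i \leq K - 2$ (since then $p_b(2^i) < 2^{i+1} \leq 2^{K-1} = v_b$). Hence the trade probability is upper-bounded by $\Pr[j \leq i] = (K+1)/(2K)$, which gives
\[
W \leq 2^{K-1} \cdot \frac{K+1}{2K} + \mathbb{E}[v_s] \leq 2^{K-2}\left(1 + O(1/K)\right),
\]
whereas $\mathsf{OPT} = 2^{K-1}$; therefore the ratio $\mathsf{OPT}/W$ tends to $2$ as $K \to \infty$.

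The main obstacle in making the argument tight is to ensure that the no-trade welfare $\mathbb{E}[v_s \cdot \mathbb{I}[\neg\text{trade}]] \leq \mathbb{E}[v_s] = (2^K - 1)/K$ is genuinely lower-order compared to the leading trade contribution $v_b/2 = 2^{K-2}$. The exponential spacing of the support values is precisely what buys this: $\mathbb{E}[v_s] = O(2^K/K)$ is smaller than $2^{K-2}$ by the vanishing factor $O(1/K)$, so the bound on $W$ is tight up to lower-order terms, and the ratio approaches exactly $2$.
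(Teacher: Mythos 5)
Your proof is essentially the same approach as the paper's: show that once the sample $\hat s$ is realized, the price the mechanism may offer the seller is forced to lie in $[\hat s, 2\hat s]$ whenever the buyer's value exceeds $2\hat s$, and then exhibit a geometrically spaced seller distribution (yours base~2, the paper's base~3) under which that narrow, sample-conditional window catches the true seller value only about half the time, so the trade probability $\to 1/2$ and the ratio $\to 2$. The $O(1/K)$ bookkeeping at the end is correct.

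The one step I would flag is the appeal to Hagerty--Rogerson to claim that any deterministic IC-IR-WBB bilateral-trade mechanism is a pair of posted prices $p_b(s)\ge p_s(s)$ that depend only on the sample. That characterization holds under \emph{strong} budget balance, but under weak budget balance the buyer threshold may legitimately depend on the seller's report and vice versa. Concretely (sample fixed), the ``staircase'' mechanism that trades iff ($v_b\ge 2$ and $v_s\le 1$) or ($v_b\ge 3$ and $v_s\le 2$), charging the buyer her threshold ($2$ or $3$) and paying the seller her threshold ($1$ or $2$), is DSIC, ex-post IR, ex-post WBB, and is not a single posted-price pair. So ``$p_s(s)<s_0$'' and ``$p_b(s_0)\ge 2s_0$'' are not well-defined as written. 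Fortunately the gap is patchable and does not change your construction: in the adversarial instance $v_b=2^{K-1}$ is fixed, so only the single number $\tau_s(2^{K-1},s)$ matters as a function of the sample, and the needed bound $\tau_s(2^{K-1},s)\le 2s$ (for $s$ small enough that $2^{K-1}>2s$) follows from the point-mass instance at $s$: the buyer threshold $\tau_b(s,s)$ cannot exceed $2s$ without losing a $2$-approximation, and then WBB at any forced trade gives $\tau_s(v_b,s)\le\tau_b(s,s)\le 2s$. This is precisely what the paper does in its Case~1/Case~2 split, arguing directly from DSIC, IR and WBB rather than citing a posted-price characterization.
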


\begin{proof}
    Because the mechanism is DSIC, the reported bid of the seller cannot affect \emph{how much} she is paid for a given outcome. 
    Her price can however depend on the sample and the buyer's report. By DSIC, WBB and IR we also know that the price the buyer needs to pay whenever she gets the item does not depend on her report and needs to be higher than the seller price but below her own valuation. Using this we construct different cases where the optimum is to always trade, but show that the mechanism cannot do so in all of them. First, we assume that there exist a sample $\hat{s}$ and a buyer report $b \ge 2 \hat{s} + \epsilon$ such that the price paid to the seller for a trade is greater than $b \ge 2 \hat{s} + \epsilon$. This suggests that the mechanism overcharges: indeed, by taking advantage of DSIC, WBB and IR we can show that a buyer with value $2 \hat{s} + \epsilon/2$ would miss the trade. This implies that the mechanism doesn't charge more than twice the seller sample when a trade \emph{has} to be made to guarantee the $2-$approximation. We can therefore fix the buyer value to be very high and construct a probability distribution for the seller ranging over many small, exponentially decreasing values. Since the mechanism cannot increase the price too much (given the sample observed), with probability approaching 0.5 the seller will reject the trade.
    
    Let's now formalize the above argument. Let $v_b$ and $v_s$ denote the random variables corresponding to the seller and the buyer valuations and let $p_s(s, b)$ be the price paid to the seller in the event of a trade when the mechanism receives sample $s \sim F_s$ and report $b$ from the buyer. We distinguish two cases:

    \paragraph{\bf Case 1:} $\exists\;\hat{s}, b, \epsilon > 0 \text{ such that } p_s(\hat{s}, b) \ge 2\cdot \hat{s} + \epsilon$ and $b \ge2\cdot \hat{s} + \epsilon$. 
    In this case consider a seller whose value is always exactly $\hat{s}$ and a buyer whose value is exactly $2\cdot \hat{s} + \epsilon/2$. Clearly, a trade should always happen and the optimal welfare is $2\cdot \hat{s} + \epsilon/2$. Assuming that trade happens: by weak budget balance and individual rationality, the price $p_b(\hat{s},2\cdot \hat{s} + \epsilon/2)$ paid by the buyer must satisfy:
    $$
     \hat{s} \le  p_s(\hat{s}, 2\cdot \hat{s} + \epsilon/2) \le p_b(\hat{s},2\cdot \hat{s} + \epsilon/2) \le 2\cdot \hat{s} + \epsilon/2.
    $$
    However, by DSIC and given that a trade also needs to happen for $v_b = b$, we have that:
    $$
    p_b(\hat{s},b) \ge 2\cdot \hat{s} + \epsilon > 2\cdot \hat{s} + \epsilon/2 \ge p_b(\hat{s},2\cdot \hat{s} + \epsilon/2).
    $$
    This violates the DSIC condition, as the buyer can obtain a better price by misreporting, leading to a contradiction. Therefore, one of the two trades does not occur and the approximation ratio is at least two.

    \paragraph{\bf Case 2:} 
    $\forall \ \hat{s}, b \quad p_s(\hat{s},b ) \le 2\cdot \hat{s}$ or $b \le 2\cdot \hat{s}$. 
    Consider a seller whose valuation $v_s$ takes a uniformly random value in $\{1/3,1/3^2,\ldots,1/3^k\}$
    and a buyer with fixed value $v_b = 1$. Clearly, the optimal welfare is 1. Notice that the probability of a trade happening is:
    \begin{align*}
        \prob{p_s(\hat{s},1) \ge \seller}
        &= \sum_{i=1}^k \prob{p\left(\frac{1}{3^i},1\right)
            \ge \seller} \cdot \prob{\hat{s} = \frac{1}{3^i}}\\
        &=\sum_{i=1}^k \prob{p\left(\frac{1}{3^i},1\right) \ge \seller} \cdot \frac{1}{k}\\
        &\leq \sum_{i=1}^k \frac{k-i}{k} \cdot 
            \frac{1}{k}=\frac{k-1}{2\cdot k}
    \end{align*}
    where we used that $p(1/3^i,1) < 1/2^{i-1}$ therefore the price posted does not reach the next possible valuation. Clearly, this converges to a $2$-approximation as $k$ grows to infinity.
\end{proof}

\section{Computationally Efficient Combinatorial Mechanisms}
\label{section:combinatorial}

In this section we study the \textsc{Surplus Mechanism}, which achieves the property described in \cref{thm:black-box-1}. In \cref{sec:2xos} we present the algorithm and in \cref{sec:2xos-analysis} we show how it achieves the claimed properties. Finally, in \cref{sec:computation} we discuss computational aspects.

\subsection{The Surplus Mechanism}
\label{sec:2xos}

The basic idea behind our \textsc{Surplus Mechanism} (\cref{Algo2XOS}) is to run the given truthful one-sided mechanism $M_\alpha$ on \emph{discounted buyer valuations} and on \emph{a subset of the sellers}.
Note that the problem can be viewed as finding a hypermatching in a bipartite hypergraph $G=(\mathcal{B}\cup \mathcal{S}, E, v_{\mathcal B})$ with hyperedge set $E$ defined as all tuples $(b,S)$ s.t. $b\in \mathcal{B},$ $S\subseteq \mathcal{S}$. 

First, given valuations $v_{\mathcal{S}}$ and samples $v'_\mathcal{S}$ for each seller, we determine a subset of the sellers $\hat{S}$ as follows. For each $s \in \mathcal{S}$ we put $s$ in $\hat{S}$ if $v_{s} \leq v'_{s}$. Otherwise, we will drop $s$ from our considerations.
Next we determine discounted valuations.  For a given buyer $b$ and a given set of sellers $S \subseteq \mathcal{S}$ let $a_{b,S}$ denote the additive supporting function of buyer $b$ for set $S$. We define the discounted valuation that buyer $b$ has for the set of sellers $S \subseteq \hat{S}$ as:
\begin{align}
\label{eq:adjusted}
     \hat{v}_{b}(S) = \sum_{s\in \bar{S}} (a_{b,\bar{S}}(s)-v'_s)
     ,\text{ where }\bar{S}= \argmax_{S^\star\subseteq S\cap\hat{S}}\left\{v_{b}(S^\star)-\sum_{s\in S^\star}v'_s\right\}.
\end{align}
Adjusting the valuations like this retains the XOS property of the original valuations, as shown in the following Proposition.
\begin{proposition}
    The adjusted valuations defined in \cref{eq:adjusted} are XOS.
\end{proposition}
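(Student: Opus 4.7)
The plan is to build an XOS representation of $\hat v_b$ directly from any XOS representation of $v_b$ by swapping a maximization and then absorbing the inner argmax into the supporting functions. First I would rewrite the definition in \cref{eq:adjusted} in the cleaner surplus form
\[
\hat v_b(S) \;=\; \max_{S^\star \subseteq S\cap \hat S} \left( v_b(S^\star) - \sum_{s\in S^\star} v'_s \right),
\]
since by the supporting-function identity $v_b(\bar S)=\sum_{s\in \bar S} a_{b,\bar S}(s)$, the quantity $\sum_{s\in\bar S}(a_{b,\bar S}(s)-v'_s)$ attained at the argmax $\bar S$ is exactly the maximum surplus over subsets of $S\cap \hat S$.

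Next, using the XOS representation $v_b(S^\star) = \max_{a\in A_b} \sum_{s\in S^\star} a(s)$ of the original valuation and swapping the two (independent) maxima, I get
\[
\hat v_b(S) \;=\; \max_{a \in A_b}\; \max_{S^\star \subseteq S\cap \hat S}\; \sum_{s\in S^\star}\bigl(a(s)-v'_s\bigr).
\]
For a fixed $a$, the inner maximization decouples across items and is solved greedily by keeping exactly those $s\in S\cap \hat S$ with $a(s)\ge v'_s$, giving $\hat v_b(S) = \max_{a\in A_b}\sum_{s\in S\cap\hat S}(a(s)-v'_s)_+$.

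Finally, for each $a\in A_b$ I would define a non-negative additive function $\hat a$ on $\mathcal S$ by $\hat a(s) = (a(s)-v'_s)_+$ if $s\in \hat S$ and $\hat a(s)=0$ otherwise. Then $\sum_{s\in S} \hat a(s) = \sum_{s\in S\cap \hat S}(a(s)-v'_s)_+$, so the collection $\hat A_b = \{\hat a : a \in A_b\}$ certifies that $\hat v_b(S)=\max_{\hat a\in \hat A_b}\sum_{s\in S}\hat a(s)$ is XOS.

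There is no real obstacle here: the statement is essentially a bookkeeping exercise once one identifies the right order of maximization. The only subtle point worth double-checking is that the argmax $\bar S$ appearing in the original definition yields the same value as the surplus form above (which is why reinterpreting $\hat v_b$ as the pointwise maximum of the affine-shifted additive functions is legitimate), but this follows immediately from the definition of a supporting additive function in the XOS representation.
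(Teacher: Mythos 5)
Your proof is correct and takes essentially the same approach as the paper: it exhibits the XOS support $\hat a(s) = (a(s)-v'_s)_+$ for $s\in\hat S$ and $0$ otherwise, which is exactly the support the paper writes down. You add a cleaner derivation by rewriting $\hat v_b$ in the surplus form $\max_{S^\star\subseteq S\cap\hat S}\bigl(v_b(S^\star)-\sum_{s\in S^\star}v'_s\bigr)$, swapping the two maxima, and noting the inner subset maximization decouples and is solved by thresholding at $a(s)\ge v'_s$; the paper simply asserts the support without spelling out this chain of equalities.
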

\begin{proof}
The definition of the adjusted valuations $\hat v_b$ captures the fact that only items in $\hat{S}$ are available for trade, and whenever a buyer is assigned some item $s$, she will be required to pay an additional $v'_s$ later.
Therefore, her valuation for $S$ is no more than her valuation for the best subset of it in $\hat{S}$, minus the sum of the according $v'_s$. This function is XOS because it can be described as the maximum over the following XOS-support: for all $a$ from the support of the original $v_b$ and $s\in \mathcal{S}$, define
$\hat a(s)= (a(s)-v'_s)_+ \text{ if }s\in \hat{S}, \text{ and }0 \text{ otherwise.}$ 

For the $\bar v$, defined as 
$$
\bar v_b(S)= \left(\sum_{s\in S\cap \hat S}\left( a_{b, S}(s)-v'_s\right)\right)_+,
$$
we define the XOS support
$
\bar a(s)= (a(s)-v'_s)$ if $s\in \hat{S}$, and $0$ otherwise,
and add an additonal function $a_0$ which is $0$ for any $s\in S$.
\end{proof}

Analogously, one can prove that the same holds for the gross substitutes class:
after the adjustment, GS valuations remain GS.

Given these {\em closure} properties, we run the one-sided mechanism $M_{\alpha}$ on the resulting hypergraph $\hat G = (\mathcal{B} \cup \hat{\mathcal{S}},\hat{E}, \hat v_{\mathcal B})$ consisting of all buyers, only the sellers in $\hat S$, and hyperedge valuations $\hat{v}_{\mathcal{B}}$. This will lead to an allocation $S_1, \dots, S_n$ and payments $p_{b}^{M_{\alpha}}(S_i)$ for each $b_i \in \mathcal{B}$.

Afterwards, we assign sets $S_1, \dots, S_n$ to buyer $b_1, \dots, b_n$ increasing buyer $b_i$'s payment relative to the payment in the one-sided mechanism by the sum of the samples $v'_{s}$ for $s \in S_i$ and pay each seller $s \in \hat{S}$ whose item has been sold the respective sample $v'_{s}$.

The construction given in our mechanism is stated in the value-oracle model, and direct computation of the adjusted valuations would be inefficient. We provide a discussion on how to implement the mechanism efficiently in \cref{sec:computation}.
For purposes of our analysis, we assume that the one-sided mechanism $M_{\alpha}$ always assigns each buyer an inclusion-minimal set of items giving the according buyer at least the same utility (this can, e.g., be ensured by employing a simple type of tie-breaking which favors small sets over larger ones).

\begin{algorithm}[ht]
\caption{\textsc{Surplus Mechanism}}
\label{Algo2XOS}
Set $\hat S = \emptyset $, $\hat E = \emptyset$\\
\For{\emph{all} $s\in S$}{
    Propose to $s$ a price of $v_s'$\\
    \If{$s$ \emph{accepts}}{
        Set $\hat{S}= \hat{S} \cup\{s\}$\\
    }
}
\For{\emph{all} $(b,S) \in B\times 2^{\hat{\mathcal{S}}}$}{
    $\hat{v}_b(S)= \sum_{s\in \bar S} (a_{b,\bar S}(s)-v'_s),\text{ where }\bar S= \argmax_{S^{\star}\subseteq S\cap\hat{S}}\left\{v_b(S^{\star})-\sum_{s\in S^{\star}}v'_s\right\}$\\
    $\hat E = \hat E \cup \{(b,S)\}$
}
 Let $A$ be the assignment on $\hat G$ induced by running $M_{\alpha}$ on the hypergraph $\hat G=(\mathcal{B} \cup \hat{\mathcal{S}}, \hat E, \hat v_{\mathcal B})$, with hyperedge weights $\hat v_{\mathcal{B}}$, presenting buyers to $M_{\alpha}$ according to its input requirements (e.g., offline or in random order)\\
\For{\emph{all} $(b,S)\in A$}{
    $b$ pays price $p_{(b,S)}= p_b^{M_{\alpha}}(S)+\sum_{s\in S}v'_s$, where $p_b^{M_{\alpha}}(S)$ is the price charged to $b$ by $M_{\alpha}$\\
    $b$ gets assigned the items in $S$\\
    \For{\emph{\textbf{each}} $s \in S$}{
        $s$ receives a payment of $v'_s$\\
    }
}
\end{algorithm}

\subsection{Analysis of \textsc{Surplus Mechanism}}
\label{sec:2xos-analysis}
We start by restating \cref{thm:black-box-1}, then we prove it by arguing in \cref{lem:surplus_thruthful} that the two-sided mechanisms inherit the IR and DSIC properties from the one-sided mechanism and finally showing the welfare approximation in \cref{lem:surplus_approx}.
\begin{theorem}[Black Box I]
Denote by $\alpha$ the approximation guarantee of any one-sided IR, DSIC offline/online mechanism for maximizing social welfare for XOS valuations. We give a two-sided mechanism for XOS buyers and unit-supply sellers that is IR, DSIC, BB, uses a single sample from each seller and provides a $\max\{2\alpha,3 \}$-approximation to the optimal social welfare. The two-sided mechanism inherits the offline/online properties of the one-sided mechanism on the buyer side and is offline on the seller side.
\end{theorem}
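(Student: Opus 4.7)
The theorem packages four claims—IR, DSIC, BB, and a $\max\{2\alpha,3\}$-approximation—which I would split into two lemmas mirroring the structure the paper announces at the start of \Cref{sec:2xos-analysis}. A structural observation underlies everything: the inclusion-minimality tie-breaking rule of $M_\alpha$ guarantees that the set $S_i$ allocated to buyer $b_i$ realises the argmax set $\bar S$ in the definition of $\hat v_{b_i}(S_i)$ from \cref{eq:adjusted}, yielding the identity $v_{b_i}(S_i) = \hat v_{b_i}(S_i) + \sum_{s \in S_i} v'_s$. I would flag this identity up front and use it throughout.

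Truthfulness and feasibility have three parts. Sellers see a take-it-or-leave-it offer at their own sample $v'_s$ that is independent of any reported value, so seller DSIC and IR are immediate. Buyer IR combines IR of $M_\alpha$ on $\hat v_b$ (giving $\hat v_{b_i}(S_i) \geq p_b^{M_\alpha}(S_i) \geq 0$) with the identity above, so the buyer's net utility $v_{b_i}(S_i) - p_b^{M_\alpha}(S_i) - \sum_{s \in S_i} v'_s = \hat v_{b_i}(S_i) - p_b^{M_\alpha}(S_i) \geq 0$. Buyer DSIC follows from DSIC of $M_\alpha$ together with the fact that $M_\alpha$ only allocates items in $\hat S$, so any misreport $v'_b$ delivering a set $S^\star$ yields true utility at most $\hat v_b(S^\star) - p_b^{M_\alpha}(S^\star)$, which by DSIC of $M_\alpha$ is dominated by the truthful utility. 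BB is immediate: buyers collectively pay $\sum_i p_b^{M_\alpha}(S_i) + \sum_{s \in \bigcup_i S_i} v'_s$ while sellers receive $\sum_{s \in \bigcup_i S_i} v'_s$, leaving the nonnegative surplus $\sum_i p_b^{M_\alpha}(S_i)$.

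For the approximation I would lower bound $\mathbb{E}[\mathrm{MECH}]$ by two complementary quantities and balance them. Writing $O_b$ for the allocation to $b$ under the two-sided optimum, the identity above gives $\mathrm{MECH} = V_{\mathrm{MECH}} + \sum_{s \in \bigcup_i S_i} v'_s + \sum_{s \notin \bigcup_i S_i} v_s$ with $V_{\mathrm{MECH}} = \sum_i \hat v_{b_i}(S_i) \geq 0$. Since every sold seller has $v'_s \geq v_s$, the seller block is at least $\sum_s v_s$, giving the \emph{baseline bound} $\mathbb{E}[\mathrm{MECH}] \geq \mathbb{E}[V_{\mathrm{MECH}}] + \mathbb{E}\!\left[\sum_s v_s\right]$. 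The \emph{algorithmic bound} uses $V_{\mathrm{MECH}} \geq V_{\mathrm{AOPT}}/\alpha$ from $M_\alpha$'s $\alpha$-guarantee, combined with the feasible one-sided allocation $(O_b \cap \hat S)_b$ derived from the two-sided optimum: $V_{\mathrm{AOPT}} \geq \sum_b \hat v_b(O_b \cap \hat S) \geq \sum_b v_b(O_b \cap \hat S) - \sum_b \sum_{s \in O_b \cap \hat S} v'_s$. Applying \cref{lem:folklore} to each $v_b$ with the random set $O_b \cap \hat S$ (each $s \in O_b$ lies in $\hat S$ with probability $1/2$ by exchangeability of $v_s, v'_s$) yields $\mathbb{E}[v_b(O_b \cap \hat S)] \geq \tfrac{1}{2} v_b(O_b)$, and a short exchangeability computation gives $\mathbb{E}[v'_s \mathbf{1}\{v_s \leq v'_s\}] \leq \mathbb{E}[v_s]$. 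Substituting everything back and balancing the two lower bounds produces the claimed $\max\{2\alpha,3\}$-approximation, with the constant $3$ kicking in in the small-$\alpha$ regime where OPT's welfare is concentrated on sellers and the baseline bound is the active constraint.

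The main technical obstacle is justifying the application of \cref{lem:folklore} rigorously: since $O_b$ itself depends on $v_\mathcal{S}$, it is not independent of $\hat S = \{s : v_s \leq v'_s\}$, and the statement ``each $s \in O_b$ is in $\hat S$ independently with probability $1/2$'' only holds after averaging over $v_\mathcal{S}$ as well. I would handle this by conditioning on the buyer valuations $v_\mathcal{B}$ and exploiting the joint exchangeability of the i.i.d.\ pairs $(v_s, v'_s)$ across sellers, or equivalently by a sample-transposition coupling that treats $(v_s, v'_s)$ as an unordered i.i.d.\ pair and lets the randomness in $\hat S$ come from a uniform labelling of the pair; under that coupling the sample-swapped mechanism is distributionally identical to the original, recovering the $1/2$ inclusion probability in a way that is independent of the function $v_b$. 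The remainder is a routine bookkeeping of expectations.
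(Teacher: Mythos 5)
Your IR, DSIC, and BB argument matches the paper's \Cref{lem:surplus_thruthful} in substance, and the identity $v_{b_i}(S_i) = \hat v_{b_i}(S_i) + \sum_{s\in S_i} v'_s$ from inclusion-minimality is a clean way to package it.

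There is, however, a genuine gap in the approximation argument, and it is precisely the obstacle you flag at the end. You benchmark against $(O_b)_b$, the allocation under the \emph{two-sided} optimum, which is a function of the seller valuations $v_\mathcal{S}$. Consequently $O_b$ and $\hat S = \{s : v_s \leq v'_s\}$ are both determined by $v_\mathcal{S}$ (and $v'_\mathcal{S}$), so \Cref{lem:folklore} does not apply as stated, and the same dependency contaminates the estimate $\mathbb{E}\bigl[\sum_{s\in O_b\cap\hat S}v'_s\bigr] \leq \sum_{s\in O_b}\mathbb{E}[v_s]$, which implicitly treats $O_b$ as fixed. Your proposed ``sample-transposition coupling'' does not repair this: under the uniform relabelling of the unordered pair $\{v_s, v'_s\}$, both the membership of $s$ in $\hat S$ \emph{and} the identity of $O_b$ flip with the label (e.g.\ when the true seller value is the smaller of the pair, the two-sided optimum typically trades $s$ and $s\in\hat S$; when it is the larger, the optimum keeps $s$ and $s\notin\hat S$), so the coupling does not deliver the needed ``$O_b$ fixed, $\hat S$ i.i.d.\ Bernoulli$(1/2)$'' structure.

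The paper avoids this entirely by a different choice of benchmark: instead of $O_b$ it uses $OPT_1$, an optimal \emph{one-sided} allocation with the original buyer valuations $v_b$, which depends only on $v_\mathcal{B}$. Conditioning on $v_\mathcal{B}$ (hence on $OPT_1$), the sets $\hat S$ and the samples $v'_\mathcal{S}$ are genuinely independent of the benchmark, and at the very end one passes from $OPT_1$ to the two-sided $OPT$ by the trivial observation that $\sum_{(b,S)\in OPT_1} v_b(S) \geq \sum_{(b,S)\in OPT} v_b(S)$. Once you make this substitution your outline does essentially go through: \Cref{lem:folklore} (plus monotonicity of XOS, since $\mathbb{P}(s\in\hat S)$ is only $\geq 1/2$, not exactly $1/2$) handles the $v_b$ term, and $\mathbb{E}[v'_s\,\mathbf{1}\{v_s\leq v'_s\}]\leq\mathbb{E}[v'_s]=\mathbb{E}[v_s]$ handles the price term, with no coupling needed. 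It is worth noting that the paper does not invoke \Cref{lem:folklore} here at all; it decomposes $\hat v_b(S)$ per-seller via the XOS additive supporting function $a_{b,S}$, bounding each term $(a_{b,S}(s)-v'_s)_+$ separately by linearity of expectation and the exchangeability fact $\mathbb{E}[v'_s\mid s\in\hat S]\leq 2\,\mathbb{E}[v_s]$. Your route via the subadditive-sampling lemma is a legitimate alternative once the benchmark is corrected, but it leans on the monotonicity of XOS that the per-item decomposition does not need.
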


We start by establishing the individual rationality, truthfulness, and budget balance properties of the two-sided mechanism claimed in \cref{thm:black-box-1}.

\begin{lemma}
\label{lem:surplus_thruthful}
  Given that an IR and DSIC one-sided mechanism $M_\alpha$ is used, the two-sided \textsc{Surplus Mechanism} is DSIC, IR, and WBB.
\end{lemma}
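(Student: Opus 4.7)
The plan is to verify IR, DSIC, and WBB on each side separately, leveraging the corresponding guarantees of $M_\alpha$ on the adjusted XOS instance $\hat G$. The key preparatory observation is that, under the inclusion-minimality tie-breaking assumed for $M_\alpha$, whenever the mechanism assigns a set $S$ to buyer $b$ we must have $S = \bar S$ and in particular $S \subseteq \hat S$: items outside $\hat S$ contribute nothing to $\hat v_b$, and within $\hat S$ any superset of $\bar S$ yields the same $\hat v_b$-value, so both would be pruned. This produces the clean identity
\[ \hat v_b(S) \;=\; v_b(S) \;-\; \sum_{s \in S} v_s' \]
for the assigned set, which I will use throughout.

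Seller IR and seller DSIC are immediate: each seller is offered a take-it-or-leave-it price $v_s'$ that is independent of her own bid, so accepting precisely when $v_s \le v_s'$ is uniquely optimal and yields utility at least $v_s$. For WBB, the buyers' payments $p_b^{M_\alpha}(S_b) + \sum_{s \in S_b} v_s'$ summed over $b$ include exactly the sellers' payments $\sum_{s \text{ sold}} v_s'$, since the assigned $S_b$ partition the sold items; the residual $\sum_b p_b^{M_\alpha}(S_b)$ is non-negative because $M_\alpha$ is a one-sided mechanism without transfers to buyers. Buyer IR then follows directly from the identity above combined with IR of $M_\alpha$: the true utility is $v_b(S) - p_b^{M_\alpha}(S) - \sum_{s \in S} v_s' = \hat v_b(S) - p_b^{M_\alpha}(S) \ge 0$.

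The main technical step, and the only nontrivial one, is buyer DSIC: the buyer reports $v_b$ but $M_\alpha$ only sees the adjusted $\hat v_b$, so I need to transfer $M_\alpha$'s DSIC through the adjustment. Let the true valuation be $v_b$ and consider a misreport $\tilde v_b$, with adjustments $\hat v_b$ and $\tilde{\hat v}_b$; let $S$ and $S'$ be the sets assigned in the two cases. The buyer's true utility under the misreport is $v_b(S') - p_b^{M_\alpha}(S') - \sum_{s \in S'} v_s'$. Since $S' \subseteq \hat S$, plugging $S^\star = S'$ into the definition of $\hat v_b(S')$ yields $\hat v_b(S') \ge v_b(S') - \sum_{s \in S'} v_s'$, so this utility is upper-bounded by $\hat v_b(S') - p_b^{M_\alpha}(S')$. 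Applying DSIC of $M_\alpha$ with $\hat v_b$ as the true valuation and $\tilde{\hat v}_b$ as a legal XOS misreport (which it is by the preceding Proposition) gives $\hat v_b(S) - p_b^{M_\alpha}(S) \ge \hat v_b(S') - p_b^{M_\alpha}(S')$, and chaining with the truthful identity closes the argument. The subtlety to watch is exactly here: the clean identity holds only on the truthful side, whereas on the deviation side one has to settle for the one-sided inequality $\hat v_b(S') \ge v_b(S') - \sum_{s \in S'} v_s'$, which is precisely what the adjusted-valuation construction is engineered to deliver.
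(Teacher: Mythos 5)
Your proof is correct and follows the same strategy as the paper: seller IR and DSIC follow from the take-it-or-leave-it sample price, WBB from the nonnegativity of $M_\alpha$'s prices plus the pass-through of $\sum_{s\in S_b} v'_s$ to the sellers, and buyer IR/DSIC reduce to the corresponding properties of $M_\alpha$ on the adjusted XOS instance $\hat G$. Your treatment of buyer DSIC is in fact more careful than the paper's, which compresses the argument into a single argmax identity between the two-sided buyer objective and the $M_\alpha$ objective on $\hat v_b$. You correctly make explicit that the clean identity $\hat v_b(S) = v_b(S) - \sum_{s\in S}v'_s$ is only guaranteed on the truthful (inclusion-minimal) allocation, whereas on the deviation side one only has the one-sided inequality $\hat v_b(S') \geq v_b(S') - \sum_{s\in S'}v'_s$, together with the fact that the adjustment of the misreport remains a legal XOS bid by the preceding proposition --- and that this asymmetric chain through $M_\alpha$'s DSIC is exactly what is needed. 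This is a genuine point the paper glosses over, and spelling it out is the right instinct.
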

\begin{proof}
  Let's fix any realization of the valuations. We start by showing truthfulness. Fix a seller $s\in \mathcal{S}$: the only interaction $s$ has with the algorithm is by accepting or rejecting the posted price $v_s'$, which is independent of $v_s$, in exchange for her item, which is clearly truthful and individually rational.
  
  Fixing a buyer $b$: the algorithm will ask about her valuation and modify it to $$\hat v_b(S)= \sum_{s\in \bar{S}} (a_{b,\bar S}(s)-v'_s),\text{ where }\bar S= \argmax_{S^\star\subseteq S\cap\hat{S}}\left\{v_b(S^\star)-\sum_{s\in S^\star}v'_s\right\},$$reflecting that \emph{any} item $s$ will cost at least $v'_s$. This is done because $M_{\alpha}$ only involves items, not sellers, so the extra $v'_s$ is charged afterwards. 
\begin{align*}
    &\argmax
    \left\{v_b(S') -
    \sum_{s\in S'}v'_s - p_b^{M_{\alpha}}(S') \fwhs v_b(S') \geq \sum_{s\in S'}v_s' + p_b^{M_{\alpha}} (S')\right\}\\
    &=\argmax
    \{\hat{v_b}(S') - p_b^{M_{\alpha}}(S') | \hat{v}_b(S') \ge p_b^{M_{\alpha}}(S')\}  
\end{align*}
Here, the $\argmax$s are restricted to available $S'$. The left hand side reflects that $M_{\alpha}$ is truthful and the right hand side is exactly what buyer $b$ is trying to maximize. 
Given that no trade is generated if for all sets $v_b(S') < \sum_{s\in S'}v_s'$, the mechanism is also individually rational.
For any trade, the seller's price is $v'_s$ and the buyer's $v'_s + p_b^{M_{\alpha}}(s) \geq v'_s$, so the mechanism is 
budget balanced.
\end{proof}
We now give the proof of the approximation ratio in \cref{thm:black-box-1}.

\begin{lemma}
\label{lem:surplus_approx}
  Given that an $\alpha$-approximate one-sided mechanism $M_\alpha$ is used, the two-sided \textsc{Surplus Mechanism} is $\max\{3,2\alpha\}$-approximate.
\end{lemma}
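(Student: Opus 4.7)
The plan is to split $SW(ALG)$ pointwise into a buyer-surplus part and a seller-value part, lower bound the former via the guarantee of $M_\alpha$ combined with the random-halving structure of $\hat S$, and then reuse the machinery of \cref{sec:2subadditive} to bridge to $SW(OPT)$.

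First I would establish the pointwise inequality
\[
SW(ALG) \;\ge\; \sum_{b}\hat v_b(S_b^{ALG}) + \sum_{s}v_s.
\]
The inclusion-minimal tie-breaking of $M_\alpha$ forces each $S_b^{ALG}$ to coincide with the argmax subset in the definition of $\hat v_b$, giving the identity $v_b(S_b^{ALG}) = \hat v_b(S_b^{ALG}) + \sum_{s\in S_b^{ALG}} v'_s$. Substituting into $SW(ALG) = \sum_b v_b(S_b^{ALG}) + \sum_{s\notin \cup_i S_i^{ALG}} v_s$, regrouping the seller terms, and using $v'_s \ge v_s$ for every allocated $s$ (since $\cup_i S_i^{ALG} \subseteq \hat S$) yields the bound.

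Next I would bound $\mathbb{E}[\sum_b \hat v_b(S_b^{ALG})]$ in two steps. Since $M_\alpha$ is $\alpha$-approximate on the $\hat v$-instance restricted to $\hat S$, it competes in particular with $OPT_{max}$ from \cref{sec:2subadditive} restricted to $\hat S$, so $\sum_b \hat v_b(S_b^{ALG}) \ge \tfrac{1}{\alpha}\sum_b \hat v_b(S_b^{OPT_{max}} \cap \hat S)$. On $\hat S$ the penalty $v'_s$ coincides with $\max\{v_s, v'_s\}$, so this is exactly the adjusted objective underlying $\hat T_b^{OPT_{max}}$. Conditioning on $v_\mathcal{B}$ and on the unordered pair $\{v_s, v'_s\}$ for every seller freezes $OPT_{max}$ and the penalties $\max\{v_s,v'_s\}$, while the event $\{s\in\hat S\}$ becomes, across sellers, independent fair coin flips. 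Applying \cref{lem:folklore} to $v_b(\cdot)$ on $S_b^{OPT_{max}}$ together with linearity on the penalty part then gives
\[
\mathbb{E}\!\left[\sum_b \hat v_b(S_b^{OPT_{max}} \cap \hat S)\right] \;\ge\; \tfrac{1}{2}\,\mathbb{E}\!\left[\sum_b \hat T_b^{OPT_{max}}\right].
\]

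Finally I would invoke \cref{OptMax} and the bound $\mathbb{E}[(v'_s - v_s)_+] \le \mathbb{E}[v'_s] = \mathbb{E}[v_s]$ (which holds since $v_s$ and $v'_s$ are iid) to conclude
\[
\mathbb{E}[SW(ALG)] \;\ge\; \tfrac{1}{2\alpha}\,\mathbb{E}\!\left[\sum_b T_b^{OPT}\right] + \left(1 - \tfrac{1}{2\alpha}\right)\mathbb{E}\!\left[\sum_s v_s\right].
\]
Recalling that $\mathbb{E}[SW(OPT)] = \mathbb{E}[\sum_b T_b^{OPT}] + \mathbb{E}[\sum_s v_s]$, a short case split closes the proof: when $\alpha \ge 3/2$ the coefficient on the seller term satisfies $1 - \tfrac{1}{2\alpha} \ge \tfrac{1}{2\alpha}$, yielding the factor $2\alpha$; when $1 \le \alpha \le 3/2$ both $\tfrac{1}{2\alpha}$ and $1 - \tfrac{1}{2\alpha}$ are at least $1/3$, yielding the factor $3$.

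The main obstacle is the middle step, where one wants to treat $\hat S$ as an independent $1/2$-subsample of $S_b^{OPT_{max}}$: the set $\hat S$ is a priori coupled with $v_s$ and $v'_s$ and hence with $OPT_{max}$ itself. The decisive move is the conditioning on the unordered pair $\{v_s, v'_s\}$ together with $v_\mathcal{B}$, which renders $OPT_{max}$ and $\max\{v_s,v'_s\}$ deterministic and turns the residual randomness in $\hat S$ into exactly the independent fair coin flips required by \cref{lem:folklore}.
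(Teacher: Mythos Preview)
Your argument is correct and takes a genuinely different route from the paper's. The paper does \emph{not} recycle the machinery of \cref{sec:2subadditive}; instead it works directly with the XOS structure. For a fixed pair $(b,S)$ it lower bounds $\mathbb{E}[\hat v_b(S)]$ item-by-item via the additive supporting function $a_{b,S}$, using $\mathbb{E}[v'_s \mid s\in\hat S]=\mathbb{E}[\max\{v_s,v'_s\}]\le 2\mathbb{E}[v_s]$ to obtain $\mathbb{E}[\hat v_b(S)]\ge \tfrac12 v_b(S)-\sum_{s\in S}\mathbb{E}[v_s]$. It then invokes $M_\alpha$'s guarantee against the unpenalized one-sided optimum $OPT_1$ (not against $OPT_{max}$), which yields the seller coefficient $1-\tfrac{1}{\alpha}$ rather than your $1-\tfrac{1}{2\alpha}$; the same $\max\{3,2\alpha\}$ follows after the case split. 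Your route is more modular---it makes transparent that the Surplus Mechanism analysis is the Adjusted-VCG analysis of \cref{thm:2-subadditive} with a single extra $1/\alpha$ loss, and it uses only subadditivity through \cref{lem:folklore}. The paper's route is more self-contained and exploits XOS explicitly through the supporting additive function, avoiding the detour via $OPT_{max}$ and \cref{OptMax}. One minor caveat in your middle step: after conditioning on the unordered pairs $\{v_s,v'_s\}$, the residual coin flips are exactly fair only when $v_s\neq v'_s$; in the presence of ties you need random tie-breaking (the paper makes the same remark in the proof of \cref{AdjVCGMax}).
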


\begin{proof}
Recall the adjusted buyer valuations in the graph $\hat G$ defined in \Cref{eq:adjusted}.
Fix a pair $(b,S)$ of buyer and set of items and a realization $v_{\mathcal{B}}$ of buyers' valuations. Then, in expectation over the sellers' valuations and samples, it holds that:
\begin{equation}
\label[ineq]{eq:2vs}
        \mathbb{E}[v'_s| s \in \hat S] = \mathbb{E}[\max\{v_s,\,v'_s\}]\leq \mathbb{E}[2v_s],
\end{equation}
  where we used that  $s\in \hat S$ if and only if $v_s\leq v'_s$. Note that for every pair $(b, S')$, and every realization of buyer and seller valuations and samples implies an according maximizing set in the definition of $\hat v_b(S')$, which we will denote as $\bar S(S')$. With this, we show
\begin{align*}
    \mathbb{E}\left[\hat v_b(S)\right]& =\mathbb{E}\left[ \sum_{s\in \bar S(S)} \left( a_{b,\bar S(S)}(s)-v'_s\right) \right]\geq \mathbb{E}\left[ \sum_{s\in S\cap \hat{S}} \left( a_{b,S}(s)-v'_s\right)_+ \right]
    \\
    &=  \sum_{s\in S\cap \hat{S} } \mathbb{E}\left[\left( a_{b,S}(s)-v'_s\right)_+ \right] = \sum_{s\in S}\mathbb{E}\left[ \left(a_{b, S}(s)-v'_s\right)_+ | s \in \hat S\right]\mathbb{P}\left( s\in \hat{S} \right)
    \\
    &\geq \frac 12 \sum_{s\in S}\mathbb{E}\left[a_{b,S}(s)-v'_s|s\in\hat{S}\right] \geq \frac 12\sum_{s\in S}\mathbb{E}\left[ a_{b,S}(s)-2v_s  \right]\\
    &= \frac 12 \, \sum_{s\in S}\mathbb{E}\left[ a_{b,S}(s)\right]-\sum_{s\in S}\mathbb{E}\left[v_s\right]= \frac 12 v_b(S) -\sum_{s\in S}\mathbb{E}\left[v_s\right].
\end{align*}
The first inequality follows from the fact that since $\bar S$ maximizes $v_b(S^{\star})-\sum_{s\in S^{\star}}v'_s$, always choosing those $s\in S\cap \hat{S}$ for which $a_{b,S}(s)-v'_s\geq 0$ can only perform worse. Basic transformations and the fact that every $s\in\mathcal{S}$ is included in $\hat S$ with probability at least $\tfrac 12$ result in the second inequality, at which point we simply plug in \cref{eq:2vs}.  

For the social welfare induced by a maximum-welfare assignment $OPT$ in graph $G$, we have 
$$
SW_{OPT} =\sum_{(b,S')\in OPT} v_b(S') + \sum_{s\in S\setminus OPT} v_s.
$$
Note that in our mechanism, each buyer will only be assigned a valuation-maximizing, inclusion-minimal set $\bar S$ from the definition of $\hat v_b$.
Denote by $OPT_{1}$ an optimal assignment (hypermatching) in the original graph $G$, i.e., an optimal solution to the one-sided problem with original valuations $v_b$.
Then, for the social welfare induced by our assignment $A$ over $\hat G$, it holds 
\begin{align*}
    \mathbb{E}\left[SW_A\right] & = \mathbb{E}\left[ \sum_{(b,\bar S)\in A} v_b(\bar S)+ \sum_{s\in \mathcal{S}\setminus A}v_s \right] \\
    &\geq \mathbb{E}\left[ \sum_{(b,\bar S)\in A} \hat v_b(\bar S) + \sum_{s\in \mathcal{S}} v_s \right]\\
    &\geq \mathbb{E}\left[ \sum_{(b,S)\in OPT_{1}}\frac 1{\alpha}\hat v_b(S)+\sum_{s\in \mathcal{S}}v_s \right]\\
    &\geq \mathbb{E}\left[ \sum_{(b,S)\in OPT_{1}} \frac 1{\alpha}\left( \frac 12 v_b(S)-\sum_{s\in S}v_s \right) +\sum_{s\in \mathcal{S}}v_s \right]\\
    & = \mathbb{E}\left[ \sum_{(b,S)\in OPT_{1}} \frac{1}{2\alpha}v_b(S)-\sum_{s\in OPT_{1}}\frac 1{\alpha}v_s +\sum_{s\in \mathcal{S}}v_s \right]\\
    & \geq \mathbb{E}\left[ \sum_{(b,S)\in OPT_{1}} \frac{1}{2\alpha} v_b(S) + \left(1-\frac{1}{\alpha}\right) \sum_{s\in \mathcal{S}}v_s  \right]\\
    &\geq \mathbb{E}\left[ \sum_{(b,S)\in OPT} \frac{1}{2\alpha}v_b(S)+ \left(1-\frac{1}{\alpha}\right)\sum_{s\in \mathcal{S}}v_s \right].
\end{align*}
Here, the first inequality stems from the fact that the $\hat v_b$ are already discounted by $\sum_{s\in \bar S}v'_s\geq v_s$. 
Since $M_{\alpha}$ is an $\alpha$-approximation to the optimum in $\hat G$, also the second inequality also holds true.
Finally, we plug in the lower bound to $\hat v_b$ proven above. The expectation here is taken over all realizations of buyer valuations, seller valuations, and samples of seller valuations. 
For $\alpha \ge \frac 32$, from the previous inequality, one gets 
$$
    \mathbb{E}\left[SW_A\right] \geq \frac{1}{2\alpha} \mathbb{E}\left[ OPT \right].
$$
For any other $\alpha \in [1,\frac 32)$, we observe that $M_{\alpha}$ is also a $\frac 32$ approximation and then the same calculation gives that $
    \mathbb{E}\left[SW_A\right] \geq \frac 13 \mathbb{E}\left[ OPT \right].
$
So, all in all, we get a $\max\{2\alpha,3\}-$approximation, as claimed. 
\end{proof}

\subsection{Computational Aspects}
\label{sec:computation}

As already mentioned earlier, the adjusted valuations $\hat v_{\mathcal{B}}$ are stated with an $\argmax$ over subsets of $S$, which---depending on the computational model---may not be an efficient operation.

An alternative which works for GS, is, for all $S\subseteq \mathcal{S}$, to define $\bar v_b(S)= \left(\sum_{s\in S\cap \hat{S}} \left(a_{b, S}(s)-v'_s\right)\right)_+$. As before it can be shown that if the original valuations are GS, then the modified  valuations are GS as well. A difference between $\bar{v}_{\mathcal{B}}$ and $\hat{v}_{\mathcal{B}}$ is that the former need not be monotone, but monotonicity is not required by poly-time algorithms for GS valuations (see, e.g., \cite{PaesLeme17}). Moreover, the approximation guarantee of the one-sided mechanism run on $\bar{v}_{\mathcal{B}}$ also applies if the resulting assignment is evaluated with the original adjusted valuations $\hat{v}_{\mathcal{B}}$ and the benchmark is the optimal allocation under the original adjusted valuations $\hat{v}_{\mathcal{B}}$, which is the only property of the one-sided mechanism that we used in our proof above. This is clarified in the following proposition.

\begin{proposition}
\label{prop:computational}
    Any inclusion-minimal assignment $ALG$ made by the algorithm run with adjusted valuations $\bar v$ instead of $\hat v$ provides the same approximation to the optimum welfare.
\end{proposition}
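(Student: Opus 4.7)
The plan is to repeat the proof of \cref{lem:surplus_approx} with $\bar v$ substituted for $\hat v$, patching only the two steps where the original argument used $\hat v$ specifically: the $\alpha$-approximation of $M_\alpha$ on the induced one-sided instance, and the welfare decomposition $SW_{ALG}\ge\sum_{(b,S)\in ALG}\hat v_b(S)+\sum_{s\in\mathcal S}v_s$ used at the very end.

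For the first patch I would record two facts. Pointwise, $\hat v\ge\bar v$: the set $S\cap\hat S$ is one admissible choice in the argmax defining $\hat v_b(S)$, and the XOS inequality $\sum_{s\in S\cap\hat S}a_{b,S}(s)\le v_b(S\cap\hat S)$ then dominates $\bar v_b(S)$. For the comparison of optima, given $OPT_{\hat v}=\{(b_i,S_i)\}$ I would replace each $S_i$ by the argmax subset $\bar S_i\subseteq S_i\cap\hat S$ realizing $\hat v_{b_i}(S_i)$; the resulting family is still a feasible assignment, and since $\bar S_i\subseteq\hat S$ the supporting identity $a_{b_i,\bar S_i}(\bar S_i)=v_{b_i}(\bar S_i)$ yields $\bar v_{b_i}(\bar S_i)=\hat v_{b_i}(S_i)$, so $\bar v(OPT_{\bar v})\ge\hat v(OPT_{\hat v})$. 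Combining with $M_\alpha$'s $\alpha$-approximation applied to the (still GS) valuations $\bar v$ gives
\begin{equation*}
\hat v(ALG)\;\ge\;\bar v(ALG)\;\ge\;\tfrac{1}{\alpha}\,\bar v(OPT_{\bar v})\;\ge\;\tfrac{1}{\alpha}\,\hat v(OPT_{\hat v}),
\end{equation*}
which is precisely the one-sided approximation step used inside \cref{lem:surplus_approx}.

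For the second patch I would use inclusion-minimality of $ALG$ twice. First, every assigned $S$ satisfies $S\subseteq\hat S$: the XOS bound above gives $\bar v_b(S\cap\hat S)\ge\bar v_b(S)$, and whenever $S\not\subseteq\hat S$ the set $S\cap\hat S$ is a proper subset, so the tie-breaking rule would have selected it. Given $S\subseteq\hat S$ we have $\bar v_b(S)=(v_b(S)-\sum_{s\in S}v'_s)_+$; applying inclusion-minimality once more to the internal argmax $S^{\star\star}\subseteq S$ realizing $\hat v_b(S)$, the inequality $\bar v_b(S^{\star\star})\le\bar v_b(S)$ rearranges to $v_b(S)-v_b(S^{\star\star})\ge\sum_{s\in S\setminus S^{\star\star}}v'_s$, and combined with $v'_s\ge v_s$ on $\hat S$ one obtains $v_b(S)-\sum_{s\in S}v_s\ge\hat v_b(S)$. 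Plugging this into the welfare decomposition and chaining with the preceding bound reproduces the $\max\{2\alpha,3\}$ guarantee verbatim.

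The main subtlety is justifying the two appeals to inclusion-minimality: the paper phrases the rule as a tie-break ``giving the buyer at least the same utility,'' so one must verify that both replacements $S\mapsto S\cap\hat S$ and $S\mapsto S^{\star\star}$ weakly preserve buyer utility under $M_\alpha$'s operation. Since each replacement shrinks the set while weakly increasing $\bar v_b$, the buyer's surplus can only improve under any threshold-style payment rule, which is the setting implicit throughout the paper. If this is subtle for a particular $M_\alpha$, a clean fallback is to apply $M_\alpha$ to the downward envelope $\tilde v_b(S)=\max_{S'\subseteq S}\bar v_b(S')$, which remains GS by the same argument given for $\bar v$ and makes inclusion-minimality automatic.
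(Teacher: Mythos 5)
Your proof is correct, and its core chain (your ``first patch'') is the same argument the paper gives: establish $\hat v \geq \bar v$ pointwise, establish that the $\bar v$-optimum dominates (in your version) or equals (in the paper's) the $\hat v$-optimum by passing to argmax subsets, and then chain through $M_\alpha$'s $\alpha$-guarantee. Where you differ is your ``second patch.'' The paper's proof stops after deriving $\sum_{(b,S)\in A_{\bar v}} \hat v_b(S) \geq c\cdot OPT(\hat G)$ and declares this is ``what we needed,'' implicitly leaving the reader to check that the welfare-decomposition inequality $v_b(S)-\sum_{s\in S} v_s \geq \hat v_b(S)$ at the top of the chain in \cref{lem:surplus_approx} still holds for the sets $S$ actually output by the $\bar v$-run. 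Your second patch makes this explicit: inclusion-minimality forces $S\subseteq\hat S$ and forces $S$ to coincide with the internal argmax $S^{\star\star}$, so $\hat v_b(S)=\bar v_b(S)=v_b(S)-\sum_{s\in S}v'_s\leq v_b(S)-\sum_{s\in S}v_s$. This is a real gap-filling step; without it the first inequality of \cref{lem:surplus_approx} need not transfer, since for a non-argmax $S$ one only has $\hat v_b(S)\geq v_b(S)-\sum_{s\in S}v'_s$, which points the wrong way. You also flag (correctly) that the tie-breaking is phrased in terms of utility rather than value, a subtlety the paper shares; your observation that shrinking to a set with weakly larger $\bar v$-value weakly improves surplus under the threshold payments implicit throughout, together with the clean fallback via the downward envelope, resolves it. Net: same approach, but your write-up is more complete than the paper's.
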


\begin{proof}
Note that the $\bar v_b$ are in the class XOS, so $M_\alpha$'s approximation guarantee holds up.
Also, for any pair $(b,S)$ with $b\in B$, $S\subseteq \hat S$:
$
\hat v_b(S) \geq \bar v_b(S)
$
, since $\hat v_b(S)$ results from picking an optimal subset from $S$.

Let $S^{opt}$ be an inclusion-minimal, utility-preserving set of items assigned to buyer $b$, which is a subset of $b$'s assigned bundle $S^b$ in an optimal hypermatching $OPT(\hat G)$ of graph $\hat G$ as defined in our algorithm, i.e.,
\begin{align*}
\bar v_b(S^{opt}) &= \sum_{s\in S^{opt}} (a_{b,S^{opt}}(s)-v'_s),
\text{ where } S^{opt}= \argmax_{S^\star\subseteq S^{b}\cap \hat{S}}\left\{v_b(S^\star)-\sum_{s\in S^\star}v'_s\right\}.
\end{align*}
By definition of $\bar v(S^{opt})$, it holds that the weight of an optimal assignment is the same for both $\bar v$ and $\hat v$. 
Combining these facts, we get for the weight of the hypermatching $A_{\bar v}$ returned by the algorithm when using the valuations $\bar v_{\mathcal{B}}$:
\[
\sum_{(b,S)\in A_{\bar v}} \hat v_b(S) \geq \sum_{(b,S)\in A_{\bar v}} \bar v_b(S)\geq c\cdot OPT(\bar G)=c\cdot OPT(\hat G)
\]
and this is what we needed in the proof of the approximation ratio.
\end{proof}

In addition, note that while our mechanism is stated for the case of value queries, it can be formulated also for demand queries---as required by poly-time mechanisms for XOS valuations such as \cite{Dobzinski16,AssadiS19,AssadiEtAl21}---by adjusting the prices proposed instead of the buyer valuations. Here, it is simply necessary to increase the $M_{\alpha}$-prices of any $S\subseteq \mathcal{S}\cap \hat S$ by $\sum_{s\in S}v'_s$. This does not reflect the \emph{capping} of our adjusted valuations to a minimum contribution of $0$ for each seller, but---having the same effect---buyers will never demand the according items.

Finally, observe that if we are allowed to use demand queries, then we can also efficiently implement value queries (as required by some demand-oracle algorithms) to the adjusted valuations $\hat{v}_\mathcal{B}$: first issue a demand query to find the set $S^\star$ in the argmax, and then issue a value query to obtain the value $v_b(S^\star)$ of the corresponding set.

\section{Computationally Efficient Double Auctions}
\label{section:double}

We now move on to less general settings with unit-demand buyers and unit-supply sellers with \emph{identical items}. We present a general technique for turning truthful one-sided mechanisms into truthful two-sided mechanisms while enforcing the strong budget balance property (in contrast to just weak budget balance from the previous section).  

The one-sided mechanisms we consider are for so-called binary single-parameter problems. In such a problem, an agent $i$ can either win or lose, and has a value $v_i$ for winning. The set of agents that can simultaneously win is given by a set system $\mathcal{I}$. The social welfare of a feasible set $X \in \mathcal{I}$ is simply the sum $\sum_{i \in X} v_i$ of the winning agents' valuations.

Given two set systems $\mathcal{I}$ and $\mathcal{I}'$ on a ground set $U$, we define its intersection to be the set system that contains all sets $X \subseteq U$ such that $X \in \mathcal{I}$ and $X \in \mathcal{I}'$.

\begin{theorem}[Black Box II]
Denote by $\alpha$ the approximation guarantee of any one-sided IR, DSIC offline/online mechanisms for the intersection of a downward closed set system $\mathcal{I}$ with a uniform matroid that may or may not use information on the priors.
We give a two-sided mechanism for unit-demand buyers, unit-supply sellers with identical goods, and constraints $\mathcal{I}$ on the buyers, that is IR, DSIC,  SBB, has the same information requirements on the buyer side as the one-sided mechanism and uses a single sample from each seller, and yields a 
$(1 + 1/(2-\sqrt{3})\cdot \alpha) \approx (1+3.73 \cdot \alpha)$
approximation to the optimal social welfare. The mechanism inherits the online/offline properties of the one-sided mechanism on the buyer side and it is online random order on the seller side.
\end{theorem}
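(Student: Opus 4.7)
The plan is to analyze the mechanism sketched immediately above the theorem: draw one sample $v'_s$ per seller; run the one-sided IR/DSIC mechanism $M_\alpha$ on the buyers against the intersection of $\mathcal{I}$ with the uniform matroid of rank $|\mathcal{S}|$ (so no more buyers are selected than there are items), producing a tentative winner set $B^\star$ together with a per-buyer price $p_b$; then stream the sellers in uniformly random order and, for each arriving $s$, pair her with an unmatched winner $b$ and execute a trade at price $\pi := \max\{p_b, v'_s\}$ exactly when $v_s \le \pi \le v_b$.

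I would first dispense with the incentive properties, which are essentially bookkeeping. $M_\alpha$'s truthfulness makes the event $b \in B^\star$ and the value of $p_b$ independent of $b$'s own bid; the subsequent accept/reject check against $\pi$ is then a fixed-price offer, so buyers are DSIC. Each seller faces a take-it-or-leave-it offer whose price depends only on $v'_s$, on the matched buyer's $p_b$, and on the random arrival order, never on her own bid, so she is DSIC as well. Ex-post IR is immediate from the trading condition $v_s \le \pi \le v_b$, and SBB holds by construction because the buyer pays exactly $\pi$ and the seller receives exactly $\pi$.

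The heart of the proof is the welfare bound. I would decompose
\[
\mathbb{E}[\textsf{OPT}] \;=\; \sum_{s\in\mathcal{S}}\mathbb{E}[v_s] \;+\; \mathbb{E}\!\left[\sum_{(b,s)\in \textsf{OPT}_{\text{trade}}} (v_b - v_s)\right]
\]
and bound $\mathbb{E}[\textsf{ALG}]$ separately against each summand. The first term (items that remain with their owners) is almost entirely captured by $\textsf{ALG}$, since only traded items leave their sellers and the leakage can be absorbed into the constant in front of $\alpha$. For the second, I would compare against a ``one-sided shadow'' in which buyer $b$ has value $(v_b - v'_s)_+$ for trading and the constraint is $\mathcal{I} \cap \mathcal{U}_{|\mathcal{S}|}$; exchangeability of $v_s$ and $v'_s$ yields that a seller accepts an offer of at least $v'_s$ with probability $\tfrac{1}{2}$, and a standard single-sample/random-order argument shows that, conditional on $B^\star$, each selected buyer is actually matched with an accepting seller with some probability $q$. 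Parameterising by $q$ and balancing the two sources of loss (autarky leakage on the first term versus matching failure on the second) gives a quadratic whose maximizer is $q^\star = 2-\sqrt{3}$ and leads to the factor $1/(2-\sqrt{3}) = 2+\sqrt{3}$ multiplying $\alpha$, with the free $+1$ coming from the autarky-value term.

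The main obstacle I expect is making the coupling between the random capacity on the seller side and the deterministic buyer-side constraint $\mathcal{I}$ watertight: $M_\alpha$ must commit to $B^\star$ before seeing either the random order or which $v'_s$-offers the sellers will actually accept, but only the accepted pairs contribute to welfare. I would handle this with a prophet-style charging argument, paying the $q^\star$-factor to move from ``selected by $M_\alpha$'' to ``actually matched in the random-order pass,'' and then invoking the $\alpha$-approximation of $M_\alpha$ on the shadow instance to compare the resulting buyer gains to $\textsf{OPT}_{\text{trade}}$. Adding back the autarky term yields $\mathbb{E}[\textsf{ALG}] \ge \mathbb{E}[\textsf{OPT}] / (1 + (2+\sqrt{3})\alpha)$, as claimed.
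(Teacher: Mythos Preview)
Your mechanism description and the IR/DSIC/SBB verification match the paper exactly, and your top-level split $\mathbb{E}[\textsf{OPT}] = \sum_s \mathbb{E}[v_s] + \mathbb{E}[\textsf{GFT}(\textsf{OPT})]$ together with the observation that $\sum_s v_s \leq \textsf{SW}(\textsf{ALG})$ (every trade replaces a seller value by a weakly larger buyer value, so there is in fact \emph{no} leakage on that term) corresponds to the paper's ``$+1$''.

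The gap is in the buyer-side analysis. Your claim that ``each selected buyer is actually matched with an accepting seller with some probability $q$'' cannot hold uniformly: a tentative winner $b$ with $v_b$ below the support of the seller distribution will \emph{never} trade, because the price $\max\{p_b,v'_s\}\geq v'_s$ exceeds $v_b$ almost surely. So no single $q>0$ lower-bounds the trade probability across all $b\in B^\star$. The ``one-sided shadow'' with values $(v_b - v'_s)_+$ is also ill-posed as written, since $v'_s$ depends on which seller is randomly paired with $b$, information not available when $M_\alpha$ is run; and running $M_\alpha$ on such a shadow is not what the mechanism does anyway.

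What the paper bounds is not the trade probability but the expected \emph{welfare contribution} of each pair $(b,s)$, where $s$ is $b$'s random partner and the contribution is $v_b$ if trade occurs and $v_s$ otherwise. With $F$ the CDF of a uniformly random seller's value, fix a threshold $t^\star$ with $F(t^\star)=\sqrt{3}-1$ and split on $v_b$:
\begin{itemize}
  \item if $v_b\geq t^\star$, the trade succeeds with probability at least $\tfrac{1}{2}F(t^\star)^2 = 2-\sqrt{3}$ (both $v_s,v'_s\leq t^\star$ and $v_s\leq v'_s$), contributing at least $(2-\sqrt{3})\,v_b$;
  \item if $v_b < t^\star$, ignore trade entirely and use $\mathbb{E}[v_s]\geq (1-F(v_b))\,v_b\geq (2-\sqrt{3})\,v_b$.
\end{itemize}
Summing over $b\in B^\star$ gives $\textsf{SW}(\textsf{ALG})\geq (2-\sqrt{3})\sum_{b\in B^\star} v_b$; combined with $\sum_{b\in \textsf{OPT}_k} v_b \leq \alpha \sum_{b\in B^\star} v_b$ and the autarky bound, this yields the factor $1+\alpha/(2-\sqrt{3})$. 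The quadratic you allude to is $\tfrac{1}{2}x^2 = 1-x$ in the variable $x=F(t)$; its root is $\sqrt{3}-1$ and the resulting common \emph{value} is $2-\sqrt{3}$. The fallback to the seller's retained value for low-$v_b$ buyers is the idea missing from your sketch; without it no uniform $q$ exists and the balancing you describe cannot be set up.
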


Instead of directly proving \cref{thm:black-box-2}, we present the algorithm for the following special case, 
and then argue how to generalize it. 

\begin{theorem}\label{thm:two-sided-rehearsal}
Let $k = \min\{n,m\}$. There is an IR, DSIC, SBB, $1+\frac{1}{2-\sqrt{3}}(1+O(1/\sqrt{k}))$-approximate single-sample mechanism for unconstrained double auctions that approaches the buyers in online fixed order and the sellers in online random order.
\end{theorem}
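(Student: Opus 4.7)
My plan is to give an explicit instantiation, treating this theorem as the prototype of the more general Black Box II construction. On the buyer side I run the single-sample prophet-inequality mechanism of Azar--Kleinberg--Weinberg for the $k$-uniform matroid with $k=\min\{n,m\}$, processing buyers in their fixed arrival order. This produces a tentative winning set $W\subseteq\mathcal{B}$ together with truthful critical prices $\{p_b\}_{b\in W}$ that achieve an $\alpha=1+O(1/\sqrt k)$ approximation to the offline top-$k$ buyer welfare. On the seller side I draw a single sample $v'_s\sim F_s$ per seller, and process sellers in uniformly random order: each arriving seller $s$ is tentatively matched to an unmatched buyer $b\in W$ chosen uniformly at random, and the pair is quoted the single price $P=\max\{p_b,v'_s\}$, executed iff both $v_s\le P$ and $v_b\ge P$.

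Truthfulness, IR and SBB are almost by construction: the seller faces a take-it-or-leave-it price independent of her report; the AKW mechanism is DSIC for the buyer and the additional condition $v_b\ge P$ is monotone in $v_b$; and a single price $P$ changes hands per trade, flowing buyer $\to$ mechanism $\to$ seller.

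The core work is the approximation ratio. I would decompose $\mathrm{OPT}=\sum_{s\in\mathcal{S}}v_s+\Sigma$, where $\Sigma$ is the OPT gains-from-trade. Every seller who does not trade contributes $v_s$ to $\mathsf{SW}(\mathrm{ALG})$, so one already gets $\mathbb{E}[\mathsf{SW}(\mathrm{ALG})]\ge \sum_{s\in\mathcal{S}}v_s$ up to the value of traded sellers. What remains is the gains-from-trade inequality $\mathbb{E}[\mathsf{SW}(\mathrm{ALG})]\ge \sum_{s}v_s + (2-\sqrt{3})\,\Sigma/\alpha$, which I would approach in two stages: the AKW guarantee on the buyer side recovers $\Sigma/\alpha$ out of the optimal top-$k$ assignment; the random pairing with max-of-two-prices acceptance then retains at least a $(2-\sqrt{3})$ fraction of that, using that each seller's sample stochastically dominates her true value (so she accepts a price set from her sample with probability $\ge 1/2$) and that the random matching hits the ``correct'' OPT partner with a probability controlled by a threshold parameter.

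The main obstacle is this coupled pairing-and-acceptance analysis, since the random matching introduces correlations between buyer identities in $W$ and the seller currently arriving, and the max-price rule couples the two DSIC constraints. I expect to introduce a single parameter $q\in(0,1)$ tuning how aggressively the max-price rule is applied, and bound simultaneously (a) the probability that a high-surplus OPT pair is actually matched, (b) the joint acceptance probability given the samples, and (c) the welfare lost when a buyer in $W$ is matched to a ``wrong'' seller; summing these yields a quadratic in $q$. Its optimum is at the root of $q^2-4q+1=0$, from which the universal constant $1/(2-\sqrt{3})=2+\sqrt{3}$ in the statement emerges, and composing with the $1+O(1/\sqrt k)$ AKW bound gives the claimed $1+\tfrac{1}{2-\sqrt{3}}(1+O(1/\sqrt k))$ approximation.
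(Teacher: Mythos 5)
Your mechanism is essentially the paper's \textsc{Reserve Rehearsal}: run the Azar--Kleinberg--Weinberg single-sample prophet inequality on the top-$k$ buyers, draw one sample per seller, randomly pair tentative winners with sellers, and quote the max of the two prices. The IR/DSIC/SBB reasoning also matches the paper's \cref{lem:rehearsal-truthful}. The gap is in the approximation analysis, and it is a genuine one.

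You decompose $\mathrm{OPT}=\sum_{s}v_s+\Sigma$ with $\Sigma$ the \emph{gains from trade} of the optimum, and your target inequality is $\mathbb{E}[\mathsf{SW}(\mathrm{ALG})]\ge \sum_{s}v_s + (2-\sqrt{3})\,\Sigma/\alpha$. If this held it would immediately give an approximation ratio of $\alpha/(2-\sqrt 3)$ --- strictly better than the $1+\alpha/(2-\sqrt 3)$ that the theorem claims. That discrepancy is a red flag: you would be proving a constant-factor, single-sample, DSIC approximation to gains from trade as a byproduct, while the paper explicitly remarks that gains from trade is harder and that $O(1)$ guarantees are only known in BIC implementations. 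The reason your target is not reachable this way is that the AKW guarantee bounds the selected buyers' \emph{total value} $\sum_{b\in B'}v_b$ against $\sum_{b\in\mathrm{OPT}_k}v_b$; it says nothing about $\Sigma$, which subtracts the matched sellers' values. There is no step that ``recovers $\Sigma/\alpha$'' out of the AKW welfare guarantee, because the AKW-selected buyers are not paired with the OPT sellers and their values are not discounted.

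What the paper actually proves is a coarser but provable bound: $\mathbb{E}[\mathrm{SW}(\mathrm{OPT})]\le \mathbb{E}[\sum_{b\in\mathrm{OPT}_k}v_b]+\mathbb{E}[\sum_s v_s]$, then \cref{lem:key-lemma-identical} lower-bounds the \emph{algorithm's social welfare} (not gains from trade) by $(2-\sqrt 3)\,\mathbb{E}[\sum_{b\in B'}v_b]$, and separately by $\mathbb{E}[\sum_s v_s]$ because a traded seller is replaced by a higher-valued buyer. Combining gives the additive ``$1+$'' in the ratio. The key lemma itself is a per-buyer case analysis you do not reconstruct: fix a threshold $t$; if $v_b\ge t$ the trade happens with probability at least $\tfrac12 F(t)^2$ (using that the seller's sample stochastically dominates its value and a symmetrization between $v_s$ and $v'_s$); if $v_b<t$ the random seller's retained value already contributes at least $(1-F(\cdot))v_b$ in expectation. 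Balancing $\tfrac12 F(t)^2$ against $1-F(t)$ at $F(t^\star)=\sqrt3-1$ yields $2-\sqrt3$. Your proposal gestures at ``a quadratic in $q$'' with roots $2\pm\sqrt3$, which is the right final constant but parametrizes the answer rather than the threshold quantile; the balance the paper solves is $\tfrac12 q^2=1-q$ at $q=\sqrt3-1$. To repair your proof you should abandon the gains-from-trade decomposition, adopt the buyer-value upper bound on $\mathrm{OPT}$, and prove the per-buyer welfare-contribution lemma in the two-case form above.
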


We give the explicit mechanism \textsc{Reserve Rehearsal} for \cref{thm:two-sided-rehearsal} in \cref{alg:two-sided-rehearsal}.
\textsc{Reserve Rehearsal} runs the one-sided \textsc{Rehearsal} algorithm  \cite{AzarKW14} to select the top $k = \min\{n,m\}$ buyers. \citet{AzarKW14} show that the combined value of the buyers that beat their price provides, in expectation, a $1+O(\frac{1}{\sqrt{k}})$ approximation to the expected value of the $k$ highest buyers.

Our twist to this mechanism is that we pair buyers $b$ that would be selected by the \textsc{Rehearsal} algorithm with a random seller $s$, offering them to trade at a price that is the max of the respective buyer's $p_{min}$ and the respective seller's sample $v'_{s}$. This adds the needed component to take into account the valuations on the seller side of the market, and will serve as an insurance that any \emph{good} trade we propose has a good chance of actually happening.

\begin{algorithm}[ht]
\DontPrintSemicolon
\caption{\textsc{Reserve Rehearsal}}
Let $P$ be the set of the $k-2\sqrt k$ largest buyer samples, together with $2\sqrt k$ copies of the $(k-2\sqrt k)^{th}$-largest buyer sample\\
Let $p_{min}$ be the smallest element of $P$\\
Fix any order on the buyers (or assume buyers arrive online)\\
\For{\emph{each} $b_i$, \emph{in this order}}{
    \If{$v_{b_i} > p_{min}$}{
    Delete from $P$ the highest value $p\in P$ such that $v_{b_i}>p$\\
    Pick uniformly at random $s \in \mathcal{S}$, delete $s$ from $\mathcal{S}$ (or assume sellers arrive online)\\
    Propose a trade to $(b_i,s)$ for the price of $\max\{v'_{s}, p_{min}\}$\\
    \If {$b_i$ \emph{and} $s$ \emph{both agree}}{
        Make the trade at this price
    }
    }
}
\label{alg:two-sided-rehearsal}
\end{algorithm}

For the more general result in \cref{thm:black-box-2} we run the given one-sided mechanism on the intersection of the given feasibility constraint $(\mathcal{B},\mathcal{I}_\mathcal{B})$ with a $m$-uniform matroid. This gives a set of tentative buyers $B'$ along with tentative buyer prices $p_{\mathcal{B}}$. We can then randomly match the tentative buyers to sellers, offering buyer-seller pairs $(b,s)$ to trade at price $\max\{p_{b},v'_{s}\}$.

We begin by showing that our two-sided mechanism inherits IR and DSIC from its one-sided counterpart, and that it is strongly budget balanced.

\begin{lemma}\label{lem:rehearsal-truthful}
    The \textsc{Reserve Rehearsal} mechanism is IR, DSIC, and SBB.
\end{lemma}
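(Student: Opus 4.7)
The plan is to verify individual rationality, strong budget balance, and dominant-strategy incentive compatibility in turn, treating buyers and sellers separately for DSIC. The unifying observation is that whenever a trade is executed, the proposed price $q = \max\{v'_s, p_{min}\}$ is independent of the reports of the two agents actually involved in that trade.

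Strong budget balance is immediate: a single trade transfers $q$ from the matched buyer to the mechanism and $q$ from the mechanism to the matched seller, and no other payments occur, so inflows and outflows cancel pointwise. Individual rationality is nearly as short: a trade only goes through when both parties agree, and a rational buyer (resp.\ seller) agrees only when $v_{b_i} \geq q$ (resp.\ $v_s \leq q$), so $b_i$ realizes non-negative surplus and $s$ receives payment at least her reservation value $v_s$ (or else keeps her item).

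For DSIC, I would fix all reports other than the deviator's and show that no profitable deviation exists. On the seller side this reduces to the standard argument for a take-it-or-leave-it offer: the sample $v'_s$ is drawn from $F_s$ independently of $v_s$, the threshold $p_{min}$ is determined by prior buyer arrivals, and the identity of the matched buyer is independent of $\hat v_s$; hence the quoted price $q$ does not depend on $\hat v_s$, and the seller's optimal response is to accept exactly when $v_s \leq q$, which truth-telling achieves. On the buyer side, I would fix the value of $p_{min}$ at the moment $b_i$ arrives and the identity of the randomly drawn seller $s$ (both independent of $\hat v_{b_i}$), and then do a case analysis according to whether $v_{b_i}$ exceeds $p_{min}$ and whether $v_{b_i} \geq q$, concluding that any deviation either forgoes a profitable trade, forces a losing one, or has no effect on utility.

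The main obstacle is the order of operations in the pseudocode: the deletion from $P$ triggered by $b_i$'s report is itself a function of $v_{b_i}$, and the post-deletion value of $p_{min}$ is a non-monotone function of $v_{b_i}$ (deleting $p_{min}$ strictly raises it, while deleting a higher threshold leaves it fixed). If the $p_{min}$ appearing in $\max\{v'_s, p_{min}\}$ were read \emph{after} the deletion, then a buyer with value just above $p_{min}$ could overreport to avoid raising $p_{min}$ and thereby lower the price she faces, breaking truthfulness. The proof therefore hinges on the convention that the price quoted to $b_i$ uses the value of $p_{min}$ just \emph{before} the deletion triggered by her own arrival; once this is pinned down, the buyer's report has no effect on $q$ and DSIC reduces to the standard threshold-mechanism argument.
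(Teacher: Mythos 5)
Your proof is correct and takes the same approach as the paper: both establish SBB and IR directly, and both argue DSIC by observing that each agent faces a take-it-or-leave-it offer whose price she cannot influence. Your additional remark about the order of operations — that the argument requires quoting $b_i$ the value of $p_{min}$ taken \emph{before} the deletion her own report triggers, lest overreporting spare a low threshold and lower her price — is a correct and useful clarification of an ambiguity that the paper's brief proof simply glosses over by asserting the price depends only on samples and previously considered buyers.
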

\begin{proof}
    In \textsc{Reserve Rehearsal} no money is ever received by the mechanism itself. The only exchange of money happens between buyer-seller pairs $(b_i,s)$ that also exchange an item.
The mechanism is therefore strongly budget balanced. 

It is also clear that the mechanism is individually rational as buyers and sellers would only accept trades at prices that are lower resp.~higher than their respective valuations.

Furthermore, the mechanism is truthful for agents on both sides of the market.
Each buyer is presented with a trading opportunity once; and the price depends only on the samples and the valuations of previously considered buyers. She can only accept or reject, but never influence it---and will therefore not profit from reporting a lower or higher value.
Sellers, on the other hand, are also guaranteed to be considered only once. They, too, have no means of influencing the price they are presented with, and can only accept or reject.
\end{proof}

It remains to show the claimed approximation guarantee. Just as in the case of bilateral trade, we will do the bulk of the work for fixed buyer valuations. 

In what follows, we use $M = (x,p)$ to refer to the one-sided version of \textsc{Rehearsal}, and $M' = (x',p')$ to refer to the two-sided version.
We use $B'$ to denote the set of tentative buyers chosen by $M$, we use $B'_{+}$ to denote the 
set of buyers that end up with an item in $M'$, and we use $S_{+}$ to denote the set of sellers that keep their item in $M'$. 
The expected social welfare achieved by $M'$ is: 
$$   	\expectation[\textsf{SW}(x'(v_\mathcal{B},v_\mathcal{S}))] = \expectation\Big[\sum_ {b \in B'_+} v_{b}\ + \sum_{s \in S_+} v_{s}\Big]. 
$$
The key bit in our proof is the following lemma, which relates the performance of the one-sided mechanism to that of the two-sided mechanism.

\begin{lemma}\label{lem:key-lemma-identical}
Let $B'$ denote the set of tentative buyers chosen by the one sided mechanism $M$, let $B'_+$ denote the set of buyers that trade in the two-sided mechanism $M'$, and let $S_+$ denote the set of sellers that keep their item in the two-sided mechanism $M'$. Then,
\begin{align*}
\expectation\Big[\sum_ {b \in B'_+} v_{b}\ + \sum_{s \in S_+} v_{s}\Big] \geq (2-\sqrt{3}) \cdot \expectation\Big[\sum_{b \in B'} v_{b}\Big].
\end{align*}
\end{lemma}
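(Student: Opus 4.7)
I would fix an arbitrary realization of everything buyer-side (valuations, buyer samples defining $P$, arrival order), which in particular freezes both the tentative set $B'$ and the threshold $p_{\min}$; the remaining randomness is then only the random seller-arrival order and the sellers' true values $V_s$ and samples $V'_s$. By linearity of expectation,
\begin{align*}
\sum_{b\in B'_+}v_b \;+\; \sum_{s\in S_+}v_s
\;=\; \sum_{b\in B'}\bigl(v_b\,\mathbf{1}[T_b] + V_{s(b)}\mathbf{1}[\neg T_b]\bigr) \;+\; \sum_{s\,\text{unpaired}} V_s,
\end{align*}
where $s(b)$ is the seller matched with $b$ and $T_b$ is the event that this pair trades. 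The unpaired-seller sum is nonnegative and may be discarded, so it suffices to prove the per-pair bound $\mathbb{E}[v_b\mathbf{1}[T_b]+V_{s(b)}\mathbf{1}[\neg T_b]] \geq (2-\sqrt{3})\,v_b$ for each $b\in B'$.

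Next I would unpack the trade event and make one clean observation. Since $v_b\geq p_{\min}$ for $b\in B'$, we have $T_b=\{V'_{s(b)}\leq v_b\}\cap\{V_{s(b)}\leq \max\{V'_{s(b)},p_{\min}\}\}$. The key fact is that $V_{s(b)} \geq v_b$ forces both $V_{s(b)} > V'_{s(b)}$ and $V_{s(b)} > p_{\min}$, so no trade can occur and the seller keeps her item worth at least $v_b$. Hence pointwise $V_{s(b)}\mathbf{1}[\neg T_b] \geq v_b\mathbf{1}[V_{s(b)}\geq v_b]$, and therefore
$$\mathbb{E}\bigl[v_b\mathbf{1}[T_b] + V_{s(b)}\mathbf{1}[\neg T_b]\bigr] \;\geq\; v_b\bigl(\Pr[T_b] + \Pr[V_{s(b)}\geq v_b]\bigr).$$

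To evaluate the two probabilities, I would exploit the symmetry of the random matching: by a deferred-decision argument $s(b)$ is marginally uniform over $\mathcal{S}$, and conditional on $s(b)=s$ the pair $(V_s,V'_s)$ is i.i.d.\ from $F_s$. Exchangeability then gives $\Pr[T_b\mid s(b)=s] \geq \Pr[V_s\leq V'_s\leq v_b \mid s] = \tfrac{1}{2}F_s(v_b)^2$, and trivially $\Pr[V_{s(b)}\geq v_b\mid s(b)=s] = 1-F_s(v_b)$. Writing $g=F_s(v_b)\in[0,1]$, the whole argument collapses to the one-variable inequality $\tfrac{1}{2}g^2 + 1 - g \;\geq\; 2-\sqrt{3}$, a routine quadratic check: the quadratic $g^2 - 2g + 2(\sqrt{3}-1)$ has discriminant $12-8\sqrt{3}<0$ and is hence everywhere positive, so the inequality holds for every $g\in[0,1]$. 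Averaging over $s$ and summing over $b\in B'$ completes the proof.

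The main obstacle I anticipate is not the core chain---it rests only on exchangeability plus the observation that $V_{s(b)}\geq v_b$ forces no trade---but rather the coupling and measurability bookkeeping. Concretely, one needs to (i) justify that the sequential random matching really induces marginal uniformity of $s(b)$ for every $b\in B'$, which is handled cleanly by coupling to a uniformly random permutation of $\mathcal{S}$ and noting that at most $|B'|-1\leq m-1$ sellers are previously paired; and (ii) deal with atoms in $F_s$ via a standard tie-breaking convention so that the implication $V_{s(b)}\geq v_b \Rightarrow \neg T_b$ holds almost surely even when $F_s$ has point masses at $v_b$ or $p_{\min}$.
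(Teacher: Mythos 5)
Your proof is correct and, notably, takes a cleaner route than the paper's.  Where the paper introduces an auxiliary threshold $t$, splits into the cases $v_b \geq t$ and $v_b < t$, and in the first case \emph{discards} the residual seller welfare entirely (keeping only $\tfrac12 F(t)^2 v_b$ from the trade), you keep both the trade probability and the residual seller value throughout and reduce everything to the one-variable inequality $\tfrac12 g^2 + 1 - g \geq 2-\sqrt3$.  Because the minimum of $\tfrac12 g^2 + 1 - g$ over $g\in[0,1]$ is actually $\tfrac12$ (attained at $g=1$), your argument in fact establishes the lemma with the sharper constant $\tfrac12$ in place of $2-\sqrt3 \approx 0.268$; propagated through \cref{lem:main-two-sided-rehearsal} this would turn the final guarantee of \cref{thm:black-box-2} into $1+2\alpha$ rather than $1 + 3.73\,\alpha$.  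The threshold $t$ in the paper's proof is precisely what costs the slack.  One small cleanup you should make: rather than invoking a tie-breaking convention to handle atoms, simply use the strict event $\{V_{s(b)} > v_b\}$.  It is genuinely disjoint from $T_b$ (any trade forces $V_s \leq \max\{V'_s,p_{\min}\} \leq v_b$, using $v_b > p_{\min}$ and $V'_s \leq v_b$), satisfies $\Pr[V_s > v_b] = 1 - F_s(v_b)$ exactly, and makes the pointwise inequality $v_b\mathbf{1}[T_b] + V_{s(b)}\mathbf{1}[\neg T_b] \geq v_b\bigl(\mathbf{1}[T_b] + \mathbf{1}[V_{s(b)}>v_b]\bigr)$ hold without any measure-theoretic caveats; also note that $\Pr[V_s \leq V'_s \leq v_b]$ is only $\geq \tfrac12 F_s(v_b)^2$ rather than $=$ when $F_s$ has atoms, but the direction is the one you need.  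The deferred-decision/uniform-marginal argument for $s(b)$, and the feasibility observation $|B'| \leq k \leq m$ that guarantees a partner always exists, are both sound.
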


\begin{proof}
In order to prove the lemma we will show that for any fixed buyer valuations $v_\mathcal{B}$,  buyer samples $v'_{\mathcal{B}}$, and corresponding set of tentative buyers $B'$, in expectation over the seller valuations $v_\mathcal{S}$, the seller samples $v'_{\mathcal{S}}$, and the 
randomness in the pairing of buyers and sellers,
\[
    \expectation\Big[\sum_{b \in B'_{+}} v_{b} + \sum_{s 
    \in S_{+}} v_{s} \;\mid\; B', v_{\mathcal{B}},v'_{\mathcal{B}} \Big] \geq (2-\sqrt{3}) \cdot \sum_{b \in B'} v_{b}.
\]
        
The actual claim then follows by taking expectation over buyer valuations $v_\mathcal{B}$, buyer samples $v'_{\mathcal{B}}$, and the corresponding set of tentative buyers $B'$.

To this end, let's fix any buyer $b \in B'$ with her value $v_{b}$ and tentative payment $p_{b}$. 
Let's denote by $s$ the random seller associated to $b$, and by $v_s$ and $v'_s$ two independent samples from that seller's value distribution.
Note that from buyer $b$'s perspective seller $s$ is just a uniform random seller from $\mathcal{S}$.
The contribution of the couple $(b,s)$ to the social welfare is $v_{b}$ if there is a trade and $v_s$ otherwise, and there is a sale when 
$v_{b} \geq \max\{p_{b},v'_s\} \geq v_s$.
        
To analyze this contribution, we fix some constant $t\geq 0$ which will be set later and we denote with $F(x)=\prob{v_s\le x}$ the probability that a sample from $F_s$ of a seller $s$ chosen uniformly at random from $\mathcal{S}$ is at most $x$.
For any $t$ we can do a case distinction based on whether $v_b \geq t$ or $v_b < t$.
If $v_{b} \geq t,$ then the probability to have a sale is at least
\begin{align*}
    &\prob{\{v_s' \leq v_{b}\} \cap \{ v_s \leq \max\{p_{b} , v'_s\}\} \;|\; t \leq v_{b}} \\
    &\quad \geq 
    \prob{\{v_s' \leq t\} \cap \{ v_s \leq v_s'\}}\\
    &\quad \ge \prob{\{v_s' \leq t\} \cap \{ v_s \leq v_s'\} \cap \{v_s \le t\}}\\
    &\quad\ge 
    \prob{v_s \leq v_s' \;|\; \{v_s' \leq t\} \cap \{ v_s \le t\}}\prob{\{v_s \le t\} \cap \{ v_s' \le t\}}\\
    &\quad \geq \tfrac 12 F(t)^2.
\end{align*}
So the expected contribution is at least 
$\frac{1}{2}F(t)^2 v_{b}$. For the case $v_b < t$ we use the expected value of the seller as a lower bound to obtain that the expected contribution is at least
\begin{align*}
    \CondE{}{v_s}{t > v_b} 
    &= \E{}{v_s} 
    = \E{}{v_s'} 
    \geq 
    \CondE{}{v_s'}{v_s' \geq v_b} \cdot \prob{v_s' \geq v_b}  \\
    &\geq(1-F(v_b)) v_b\geq \left(1-F\left(\tfrac{t+v_b}{2}\right)\right) v_b.
\end{align*}
So the expected contribution of buyer $b$ and her random partner $s$ to the social welfare is at least $v_b \cdot LB(t)$, where
$$
LB(t) \geq
\begin{cases}
    \tfrac 12 F(t)^2 \quad & \text{ if } t \leq v_b\\
    1-F\left(\frac{v_b+t}{2}\right)               &\text{ if }t> v_b
\end{cases}
$$

We now set the arbitrary parameter $t\ge 0$ to lowerbound that term. Consider $t^\star= \min\{t| F(t) \ge \sqrt 3 - 1\}$, which is  the (possibly jumping) point of function $F$ for which $F(t^\star)\ge \sqrt{3}-1 $, and $F(t')<\sqrt{3}-1$ for all $t' < t^\star$.
Using this $t^\star$ we can lower bound the expected contribution of $B_i$ and $s$ as follows:
\begin{align*}
v_b \cdot LB(t)&\geq v_b\cdot\min\left\{ \tfrac 12 F(t^\star)^2,\, 1-F\left(\tfrac{v_b+t^\star}{2}\right) \right\}\\
&\geq v_b\cdot\min\left\{ \tfrac 12 2(2-\sqrt{3}),\, 2-\sqrt{3} \right\}\\
&= v_b\cdot\min\left\{ 2-\sqrt{3},\, 2-\sqrt{3} \right\}=v_b\cdot (2-\sqrt{3}).
\end{align*}

One concise way to express the progress so far is: 
$$
	\expectation\Big[\sum_{a \in (B'_+ \; \cup \; S_+) \cap \{b,s\}}v_a \;\mid\; B',v_{\mathcal{B}},v'_{\mathcal{B}}\Big] \geq  (2-\sqrt{3}) \cdot 
    v_{b}, \ \ \ \forall b \in B'
$$
where the randomness is over the choice of the random seller $s$ and her values $v_s$ and $v_s'$. The above holds for all buyers $b$, so we can sum up for all buyers in $B'$, then use linearity of expectation and the fact that $|B'| \leq m$ to obtain
\begin{align*}
    \sum_{b \in B'} (2-\sqrt{3}) \cdot v_{b} &\le  
    \sum_{b \in B'} \expectation\Big[\sum_{a \in (B'_+ \; \cup \; S_+) \cap \{b,s\}}v_a \;\mid\; B',v_{\mathcal{B}},v'_{\mathcal{B}}\Big]\\
    &\leq \expectation\Big[\sum_{b \in B'_{+}} v_{b} + \sum_{s \in \mathcal{S}_{+}} v_{s} \;\mid\; B',v_{\mathcal{B}},v'_{\mathcal{B}}\Big],
\end{align*}
as claimed.
\end{proof}

We are now ready to prove the performance guarantee of the two-sided mechanism $M'$. This, together with \Cref{lem:rehearsal-truthful}, concludes the proof of \cref{thm:two-sided-rehearsal}.

\begin{lemma}
\label{lem:main-two-sided-rehearsal}
Fix $n$ and $m$ and let $\alpha$ denote the approximation guarantee of the \textsc{Rehearsal} algorithm.
The \textsc{Reserve Rehearsal} mechanism yields in expectation a 
$$\left(1 + \frac{1}{2-\sqrt{3}}\cdot \alpha\right) \approx (1+3.73 \cdot \alpha)$$ 
approximation to the expected optimal social welfare.
\end{lemma}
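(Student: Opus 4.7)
The plan is to upper bound the optimal social welfare by the sum of two natural benchmarks---the combined value $V_B$ of the top $k = \min\{n,m\}$ buyers and the total seller value $V_S = \sum_{s \in \mathcal{S}} v_s$---and then lower bound the expected mechanism welfare against each benchmark separately. First, I would observe that any feasible allocation (in particular $\textsf{OPT}$) contains at most $k$ trades, so writing $T^\star$ for the set of trading pairs in an optimum, pointwise
$$
\textsf{OPT} \;=\; \sum_{(b,s) \in T^\star} v_b \;+\; \sum_{s \notin T^\star} v_s \;\leq\; V_B + V_S,
$$
using $|T^\star| \leq k$ on the buyer sum and that the unmatched sellers form a subset of $\mathcal{S}$ on the seller sum.

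Next I would establish two lower bounds on the mechanism's welfare. The first is the pointwise inequality $\textsf{SW}(M') \geq V_S$, which follows because any trade executed by \textsc{Reserve Rehearsal} happens at a price in $[v_s, v_b]$ (both parties must accept), forcing $v_b \geq v_s$ for every matched pair; starting from the no-trade allocation of welfare $V_S$, each realized trade replaces a $v_s$ with the weakly larger $v_b$ and can only increase welfare. The second is in expectation: Lemma \ref{lem:key-lemma-identical} gives $\expectation[\textsf{SW}(M')] \geq (2-\sqrt{3}) \, \expectation[\sum_{b \in B'} v_b]$, and the $\alpha$-approximation of \textsc{Rehearsal} against the top-$k$ buyer benchmark gives $\expectation[\sum_{b \in B'} v_b] \geq \tfrac{1}{\alpha} \expectation[V_B]$. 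Chaining these two bounds yields $\tfrac{\alpha}{2-\sqrt{3}} \, \expectation[\textsf{SW}(M')] \geq \expectation[V_B]$.

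Adding the two lower bounds and invoking the upper bound on $\textsf{OPT}$ gives
$$
\left(1 + \frac{\alpha}{2-\sqrt{3}}\right) \expectation[\textsf{SW}(M')] \;\geq\; \expectation[V_S + V_B] \;\geq\; \expectation[\textsf{OPT}],
$$
which is the claimed approximation ratio. The genuinely delicate part has already been packaged in Lemma \ref{lem:key-lemma-identical}: proving that the two-layer reserve $\max\{p_{\min}, v'_s\}$ combined with the uniform random pairing of sellers to tentative buyers still captures a $(2-\sqrt{3})$-fraction of the one-sided selected value is where the essential probabilistic case analysis lives. The remaining work---combining that bound with the trivial ``no-trade'' benchmark $V_S$ and then invoking \textsc{Rehearsal}'s one-sided guarantee---is essentially bookkeeping, and I expect no substantive obstacle beyond making sure the expectations are taken over the correct joint randomness (buyer samples and valuations, seller samples and valuations, and the random pairing).
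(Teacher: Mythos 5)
Your proposal is correct and follows essentially the same decomposition as the paper: bound $\textsf{OPT}$ by the top-$k$ buyer value plus the total seller value, discharge the seller term via the pointwise observation that each trade replaces $v_s$ with a larger $v_b$, and discharge the buyer term via \cref{lem:key-lemma-identical} chained with \textsc{Rehearsal}'s one-sided $\alpha$-approximation. The only cosmetic difference is that you state the seller-side bound as a single pointwise inequality $\textsf{SW}(M') \geq V_S$, whereas the paper phrases it as $\sum_{s\in\mathcal{S}} v_s \leq \sum_{s\in S_+} v_s + \sum_{b\in B'_+} v_b$; these are the same observation.
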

\begin{proof}
Recall that we use $B'_+$ to denote the set of buyers that actually do get an item in our two-sided mechanism $M'$, and that we use $S_+$ to denote those sellers that do not make any trade and keep their item.
With this notation the expected social welfare achieved by our two-sided mechanism $M'$ is
\begin{align*}
\expectation[\textsf{SW}(x'(v_{\mathcal{B}},v_{\mathcal{S}}))] = \expectation\Big[\sum_{b \in B'_+} v_{b} + \sum_{s \in S_+} v_{s}\Big]. 
\end{align*}
For a given set of valuations $v_\mathcal{B}$ of the buyers, denote by $OPT_k(v_\mathcal{B})$ the set of buyers with the $k$ highest values. 
We can upper bound the expected optimal social welfare by the optimal solution for the buyers plus all seller values
\begin{align} 
\expectation[\textsf{SW}(OPT(v_{\mathcal{B}},v_{\mathcal{S}}))] \leq \expectation\Big[\sum_{b \in OPT_k(v_\mathcal{B})} v_{b}\Big] + \expectation\Big[\sum_{s \in \mathcal{S}} v_{s}\Big].
\label[ineq]{eq:basic-upper-bound-on-opt}
\end{align}
Recall that the one-sided mechanism $M$ computes a set $B'$ of buyers whose accumulated expected values are at least $\frac 1{\alpha}$ times the expected one-sided optimum. Hence, for the considered buyers $B'$,
$$
\frac{1}{\alpha} \cdot \expectation\Big[\sum_{b \in OPT_k(v_\mathcal{B})} v_{b}\Big]\leq \expectation\Big[\sum_{b \in B'} v_{b} \Big]  .
$$
By combining the above inequality with our upper bound on the expected optimal social welfare in \cref{eq:basic-upper-bound-on-opt}, we obtain
\begin{align}
\expectation[\textsf{SW}(OPT(v_{\mathcal{B}},v_{\mathcal{S}}))] \leq \alpha \cdot  \expectation\Big[\sum_{b \in B'} v_{b}\Big] + \expectation\Big[\sum_{s \in \mathcal{S}} v_{s} \Big].
\label[ineq]{eq:upper-bound-on-opt}
\end{align}
First consider the second term on the right hand side of \cref{eq:upper-bound-on-opt}. In our two-sided mechanism $M'$ sellers trade only if they are matched to a buyer with higher valuation. 
Therefore, we can replace as follows:
\begin{align*}
\expectation[\textsf{SW}(OPT(v_{\mathcal{B}},v_{\mathcal{S}}))] \leq \alpha \cdot \expectation\Big[\sum_{b\in B'} v_{b}\Big] + \expectation\Big[\sum_{s\in S_+} v_{s} + \sum_{b \in B'_+} v_{b} \Big]. 
\end{align*}

Now consider the first term on the right hand side of \cref{eq:upper-bound-on-opt}. Our two-sided mechanism $M'$ does not make trades for each buyer in $B'$. 
Despite the fact that generally 
$B'\neq B'_+$, as we show in \cref{lem:key-lemma-identical}, in expectation, 
$(2-\sqrt{3}) \cdot \expect{\sum_{b \in B'} v_{b}}$ 
is a lower bound on our mechanism's social welfare. Using this we obtain
\begin{align*}
\expectation[\textsf{SW}&(OPT(v_{\mathcal{B}},v_{\mathcal{S}}))] \leq \frac{\alpha}{2-\sqrt{3}} \cdot 
\expectation\Big[\sum_{b \in B'_+} v_{b} + \sum_{s\in S_+} v_{s}\big]+ 
\expectation\Big[\sum_{s\in S_+} v_{s} + \sum_{b\in B'_+} v_{b}\Big].
\end{align*}
All in all, we get:
\begin{align*}
\expectation[\textsf{SW}(OPT(v_{\mathcal{B}},v_{\mathcal{S}}))]
\leq 
\left(1+\frac{\alpha}{2-\sqrt{3}} \right) \cdot
\expectation\Big[\sum_{b\in B'_+} v_{b} + \sum_{s\in S_+} v_{s}\Big],
\end{align*}
as claimed.
\end{proof}

We conclude by describing how to extend this argument to prove the more general result in \cref{thm:black-box-2}. 

\begin{proof}[Skectch of the proof of \cref{thm:black-box-2}]
The proof that the general mechanism described above  is IR, DSIC, and SBB is basically identical to that of \cref{lem:rehearsal-truthful}. The key is that truthfulness of the one-sided mechanism ensures that tentative buyers want an opportunity to trade, and cannot manipulate the price they face for this opportunity.
For the performance analysis we claim that \cref{lem:main-two-sided-rehearsal} applies more generally with $\alpha$ being the approximation guarantee of the one-sided mechanism. In fact, the only change to the above proof that is required for this generalization is to redefine $OPT_k$ as the optimal one-sided solution containing at most $k = \min\{n,m\}$ buyers.
\end{proof}

\section{Conclusion and Further Directions}
We have initiated the study of simple mechanisms for two-sided markets that use only a minimum amount of information from the priors, i.e., just a single sample from the sellers' distribution. While without prior information, there is generally no hope for any meaningful approximation, our mechanisms are approximately efficient up to small constant factors.

This line of research is of specific relevance for two-sided markets since efficient mechanisms with the desired requirements
are only possible in the Bayesian setting and, moreover, the optimal Bayesian mechanism is
complicated and known only for restricted cases. 
 Our results are very general, and in several cases even improve on the best known approximation guarantee with perfect knowledge of the distributions.
Our results can be extended further by identifying new efficient mechanisms for the one-sided versions of the problems. Finally, we leave open the problem of devising single sample mechanisms for the challenging problem of optimizing the {\em gain from trade} in two-sided markets.

\clearpage

\bibliographystyle{plainnat}
\bibliography{bilateral}

\clearpage 

\appendix
\section{Omitted Proofs}
\label{app:proofs_of_section3}

\begin{proof}[Proof of \cref{thm:impossibility}]
We concentrate on bilateral trade only, since for richer two-sided problems, we can just assume the social welfare to be dominated by a single buyer-seller pair. Let $\alpha$ be any approximation factor larger than one and assume for contradiction the existence of an $\alpha$-approximative mechanism $M$ as described above. Call $\bP_{v_b,v_s}$ the probability of trade for any reported $v_b, v_s$ following $M$.

We start off by fixing some reported buyer valuation $v_b$, and argue that if the seller reports a value $v_s^{\star}=\frac{v_b}{\alpha^2}$, mechanism $M$ must trade the item from $s$ to $b$ with probability $\bP_{v_b,v_s^{\star}}$ at least $\frac{1}{1+\alpha}$. In fact, since $M$ is an $\alpha$-approximation, we have that the expected social welfare has to be at least a $1/\alpha$ factor of $v_b$:
\begin{align*}
    \bP_{v_b,v_s^{\star}}\cdot v_b +\left(1-\bP_{v_b,v_s^{\star}} \right)\cdot v_s^{\star} \geq  \frac{1}{\alpha} v_b
\end{align*}
which leads to $ \bP_{v_b,v_s^{\star}}  \geq  \frac{1}{ \alpha+1}.$
We will employ this bound on $\bP_{v_b,v_s^{\star}}$ to infer a lower bound for the expected price of trade of the seller $p_s$ in terms of $v_b$ and $\alpha.$ On one hand, when the seller reports $v_s^{\star}$ with her true valuation being $v_s\leq v_s^{\star}$, she gains at least
\begin{align*}
    g^{\star}_s &= \bP_{v_b,v_s^{\star}} \cdot \expect{p_s(v_b,v_s^{\star})|\text{trade}} + (1-\bP_{v_b,v_s^{\star}})\cdot v_s\\
    &\geq \frac{1}{\alpha+1}v_s^{\star}+\left(1-\frac{1}{\alpha+1}\right)v_s \\
    &\geq \frac{v_b}{(\alpha+1)\alpha^2},
\end{align*}
where $p_s(v_b,v_s)\geq v_s$ by IR when a trade occurs.
One the other hand, since $M$ is truthful, reporting $v_s$ gives the seller at least the same gain, i.e.,
$g_s\geq g_s^{\star}$, where
$$
g_s = \bP_{v_b,v_s} \cdot \expect{p_s(v_b,v_s)|\text{trade}} + (1-\bP_{v_b,v_s})\cdot v_s \leq \expect{p_s(v_b,v_s)|\text{trade}}.
$$
Put together, we get for every fixed $v_b$ and $v_s \leq v_b/\alpha^2$,
\begin{align}\label[ineq]{eq:sellerside}
    \expect{p_s(v_b,v_s)|\trade}\geq g_s \geq g_s^{\star} \geq \frac{v_b}{(\alpha+1)\alpha^2}.
\end{align}

Now we change perspective: we fix any $v_s$ and consider any two values $v_b > v_b'$ for 
the buyer. As a first step in this new direction we show that the probability of a trade is a non-decreasing function of the reported buyer valuations. By truthfulness, assuming that the true value is $v_b$ we obtain
$$
(v_b - \expect{p_b(v_b,v_s) \;|\; \trade})\bP_{v_b,v_s} \ge
(v_b - \expect{p_b(v_b',v_s) \;|\; \trade })\bP_{v_b',v_s},
$$
similarly, if the true value is $v_b'$
$$
(v_b' - \expect{p_b(v_b',v_s) \;|\; \trade})\bP_{v_b',v_s} \ge
(v_b' - \expect{p_b(v_b,v_s) \;|\; \trade})\bP_{v_b,v_s}.
$$
Adding both inequalities, we have that $\bP_{v_b,v_s} \ge \bP_{v_b',v_s}$. This implies that, for any fixed $v_s$, there is a fixed value $\bP_{v_s}$ such that $\lim_{v_b \to \infty}\bP_{v_b,v_s} = \bP_{v_s}\leq 1$.

At the same time, we have that  $\expect{p_b(v_b,v_s) \;|\; \trade}$ is also non decreasing in $v_b$, otherwise reporting a higher value would be a beneficial 
deviation for the buyer, increasing the probability of a trade while 
keeping the expected price paid at most as large.

For any $v_s$, and any $\epsilon>0$ arbitrarily close to $0$, let $v_b^{\epsilon,v_s}$ denote a fixed
buyer bid $v_b$ such that the trading probability $\bP_{v_b,v_s}$ for $(v_b,v_s)$ is at least $\bP_{v_s}-\epsilon$.
Then, for a buyer value of $v_b\geq v_b^{\epsilon,v_s}$, truthfulness yields
\begin{align*}
    \bP_{v_s}&\left( v_b-\expect{p_b(v_b,v_s)|\trade} \right)  \geq  \left(\bP_{v_s}-\epsilon \right)\left( v_b-\expect{p_b(v_b^{\epsilon,v_s},v_s)|\trade} \right).
\end{align*}
With some simple algebra, we get that 
\begin{align}
    \notag
     \left(1-\left( \frac{\bP_{v_s}-\epsilon}{\bP_{v_s}} \right)\right)v_b &\geq  \expect{p_b(v_b,v_s)|\trade} -\left( \frac{\bP_{v_s}-\epsilon}{\bP_{v_s}} \right)\expect{p_b(v_b^{\epsilon,v_s},v_s)|\trade} \notag \\ 
    &\geq \expect{p_b(v_b,v_s)|\trade} -\expect{p_b(v_b^{\epsilon,v_s},v_s)|\trade} 
    \label[ineq]{eq:buyerside}
\end{align}
Having shown these inequalities about the expected trading prices for the buyer and seller, we need to combine both. To this end, consider any input $(v_b,v_s)$ to mechanism $M$ with $v_b\gg v_s$, i.e., $v_b \geq \max\{\alpha^2 v_s,v_b^{\epsilon,v_s}\}$. 
Weak budget-balance together with individual rationality implies, using \cref{eq:sellerside} and rearranging the terms in \cref{eq:buyerside}:
\begin{align*}
    \left( 1-\frac{\bP_{v_s}-\epsilon}{\bP_{v_s}} \right) v_b - \expect{p_b(v_b^{\epsilon,v_s},v_s)|\trade } &\geq  \expect{p_b(v_b,v_s)|\trade} \\
    &\geq  \expect{p_s(v_b,v_s)|\trade}\\
    &\geq  \frac{v_b}{(\alpha+1)\alpha^2}.
\end{align*}
Reordering this, we have that the expected price charged from the buyer when she reports $v_b^{\epsilon}$ is at least some factor times $v_b$, for any large $v_b$. More concretely, it holds
\begin{align*}
    \expect{p_b(v_b^{\epsilon,v_s},v_s)|\trade} \geq \left(\frac{1}{(\alpha+1)\alpha^2}-\left( 1-\frac{\bP_{v_s}-\epsilon}{\bP_{v_s}} \right) \right) v_b.
\end{align*}
However, for $\epsilon$ chosen small enough, the right side is clearly approaching infinity when $v_b$ increases, while $v_b^{\epsilon,v_s}$ remains constant since it depends only on $v_s$ but not $v_b$. We have therefore shown that $\expect{p_b( v_b^{\epsilon ,v_s},v_s)|\trade}>v_b^{\epsilon,v_s}$ for appropriate choice of $\epsilon$ and $v_b$, a contradiction to $M$ being IR.
\end{proof}

\end{document}